\newcommand{\E}{\mathbb{E}}
\newcommand{\F}{\mathbb{F}}
\newcommand{\N}{\mathbb{N}}
\renewcommand{\P}{\mathbb{P}}
\newcommand{\R}{\mathbb{R}}
\renewcommand{\S}{\mathbb{S}}
\newcommand{\V}{\mathbb{V}}
\newcommand{\cA}{\mathcal{A}}
\newcommand{\cB}{\mathcal{B}}
\newcommand{\cC}{\mathcal{C}}
\newcommand{\cD}{\mathcal{D}}
\newcommand{\cF}{\mathcal{F}}
\newcommand{\cH}{\mathcal{H}}
\newcommand{\cJ}{\mathcal{J}}
\newcommand{\cM}{\mathcal{M}}
\newcommand{\cN}{\mathcal{N}}
\newcommand{\cS}{\mathcal{S}}
\newcommand{\cV}{\mathcal{V}}
\newcommand{\cY}{\mathcal{Y}}
\newcommand{\cZ}{\mathcal{Z}}
\newcommand{\fN}{\mathfrak{N}}
\newcommand{\rD}{\mathrm{D}}
\newcommand{\rd}{\mathrm{d}}
\newcommand{\argdot}{\,\cdot\,}
\newcommand{\ind}{\mathbb{I}}
\DeclareMathOperator{\diag}{diag}
\DeclareMathOperator{\trace}{tr}
\theoremstyle{plain}
\newtheorem{theorem}{Theorem}[section]
\newtheorem{proposition}[theorem]{Proposition}
\newtheorem{corollary}[theorem]{Corollary}
\newtheorem{lemma}[theorem]{Lemma}
\newtheorem{definition}[theorem]{Definition}
\newtheorem{assumption}{Assumption}
\theoremstyle{definition}
\newtheorem{remark}[theorem]{Remark}
\newcommand{\LSC}{\mathrm{LSC}(\overline{\S}_T)}
\newcommand{\USC}{\mathrm{USC}(\overline{\S}_T)}
\newcommand{\LSCExit}{\mathrm{LSC}(\overline{\cS})}
\newcommand{\USCExit}{\mathrm{USC}(\overline{\cS})}
\newcommand{\fakeQED}{\hspace*{\fill}$\boldsymbol\square$}
\newcommand{\cont}{\mathbb{C}}
\newcommand{\interv}{\mathbb{D}}
\begin{document}
\title{Optimal Investment with Costly Expert Opinions}
\author{Christoph Knochenhauer \and Alexander Merkel \and Yufei Zhang}
\date{}

\maketitle

\begin{abstract}
We consider the Merton problem of optimizing expected power utility of terminal
wealth in the case of an unobservable Markov-modulated drift. What makes the
model special is that the agent is allowed to purchase costly expert opinions
of varying quality on the current state of the drift, leading to a mixed
stochastic control problem with regular and impulse controls involving random
consequences. Using ideas from filtering theory, we first embed the original
problem with unobservable drift into a full information problem on a larger state
space. The value function of the full information problem is characterized as
the unique viscosity solution of the dynamic programming PDE. This characterization
is achieved by a new variant of the stochastic Perron's method, which additionally
allows us to show that, in between purchases of expert opinions, the problem
reduces to an exit time control problem which is known to admit an optimal
feedback control. Under the assumption of sufficient regularity of this feedback
map, we are able to construct optimal trading and expert opinion strategies.
\end{abstract}

\section{Introduction}\label{sec:Introduction}

Over the last three decades, there has been tremendous interest in
continuous-time portfolio optimization problems with partial observations. These
problems are typically set up such that the agent is assumed to observe the
prices of financial assets whose returns have an unobservable drift. In the
Markovian case, these problems can be formulated as
hidden Markov models which, by using ideas from filtering theory, can be
transformed into full information problems at the cost of a higher-dimensional
state space; c.f.~\cite{elliott2008hidden}.
In fact, three special cases are known in which the transformed problem
remains finite-dimensional.
In the Bayesian case as e.g.\ considered in \cite{karatzas2001bayesian},
the hidden component of the drift is a static random variable of arbitrary
distribution. In contrast, the Kalmann--Bucy case as e.g.\ studied in
\cite{brendle2006portfolio} arises if the unobservable drift is taken to
be an Ornstein--Uhlenbeck process. Finally, there is the Wonham case as e.g.\ considered
in \cite{rieder2005portfolio} in which the drift is modulated by an unobservable
continuous-time finite-state Markov chain.
There are many extensions of these baseline models, as for example problems with
different objective functionals and more general financial market models including
jump diffusion models and non-Markovian dynamics; see for example \cite{honda2003optimal,
lakner1995utility, lakner1998optimal, putschogl2008optimal, sass2004optimizing} and
the references therein for a non-exhaustive list of the vast literature in this field.

In the present article, we consider an extension of the problem in which the
agent can purchase noisy expert opinions on the current state of the drift.
That is, an expert opinion at time $t$ is assumed to take the form
\begin{equation}\label{eq:expert-opinion-introduction}
  Z = q \mu_t + (1-q)\cN,
\end{equation}
where $\mu_t$ denotes the unobservable drift at time $t$, $\cN$ is an
independent random variable causing the expert opinion to be noisy, and
$q\in[0,1]$ is the quality of the expert opinion.
In general, the inclusion of expert opinions as above is not a new idea and
has been studied in \cite{frey2012portfolio,
frey2014portfolio, gabih2014expert, sass2017expert}. We furthermore mention
the papers \cite{sass2021diffusion, sass2023diffusion} which investigate
diffusion approximations of expert opinions as the time period in between the
arrival of new expert opinions tends to zero. Finally, there is an 
alternative approach to expert opinions which are revealed continuously in
time; see in particular \cite{davis2013black} and the related articles
\cite{davis2016simple, davis2021risk}.

What these articles have in common is that they assume that expert opinions
arrive periodically in time, that is, either at discrete times, Poisson times,
or continuously in time. In particular, expert opinions are treated as exogenous,
and the additional information they contain on the true state of the drift is
given to the agent free of charge. In our model,
on the other hand, we assume that the agent has to purchase expert opinions
in exchange for a strictly positive fee. To be precise, we consider a financial
market
model similar to \cite{rieder2005portfolio} in which there is a single risky
asset with drift modulated by a finite-state continuous-time Markov chain, i.e.\
we place ourselves in the Wonham setting. In addition to choosing a trading strategy,
the agent can decide on the timing of purchase and quality level of expert
opinions. Since each purchase is assumed
to incur a strictly positive cost and the agent's wealth is finite, an expert
opinion strategy is naturally modeled as an impulse control. This leads to the
formulation of the optimization problem as a mixed stochastic control problem
with partial observations involving both regular controls (the trading strategy)
and impulse controls (the expert opinion strategy). Moreover, due to the
noise $\cN$ in the expert opinion~\eqref{eq:expert-opinion-introduction}, impulse
controls have random consequences in the sense of~\cite{helmes2024modeling,
korn1997optimal}.

The precise model specifications and the formulation of the investment objective
can be found in Section~\ref{sec:The-Model}. Let us highlight here that special
care needs to be taken in the setup of the information structure as it may
be optimal to purchase multiple expert opinions at the same time instant, hence
requiring multiple simultaneous updates of the agent's information set.
In Section~\ref{sec:transformation-to-full-observations}, we apply standard
filtering results to transform the partial observation problem into a full
information problem at the cost of increasing the state space. This transformation
is a classical first step in the study of partial observation problems and
has the advantage of putting the dynamic programming machinery at our disposal.

The dynamic programming PDE of the full information problem, which takes the
form of Hamilton--Jacobi--Bellman quasi-variational inequalities (HJBQVI), is
studied in detail in Section~\ref{sec:viscosity-characterization}. To be precise,
we show that the HJBQVI admit a unique continuous viscosity solution denoted
by $V^+$. This is achieved by a variant of the stochastic Perron's method
which combines ideas which have previously appeared in
\cite{bayraktar2013stochastic, belak2022optimal}. Typically, the reason for
choosing the stochastic Perron's method is that it allows to characterize
the value function of a stochastic control problem as the unique viscosity
solution of the associated dynamic programming PDE without having to prove the
notoriously challenging dynamic programming principle first \cite{claisse2016pseudo}.
We, on the other hand, content ourselves (at this point) with the insight that
the unique viscosity solution $V^+$ dominates the value function of the full
information problem.

After having constructed the unique viscosity solution of the HJBQVI, we
proceed in Section~\ref{sec:exit-time-problem} by showing that, on the set $\cont$
of states on which purchases of expert opinions are strictly suboptimal,
the full information problem reduces to an exit time control problem. The
intuition here is that on $\cont$, the agent only chooses the trading strategy
and can ignore purchases of expert opinions up until the time of exit from $\cont$.
This connection is formalized by another version of the stochastic Perron's method.
More precisely, writing $V^\cont$ for the value function of the exit time
control problem, we first argue that the pointwise supremum $V^-$ of a suitable
set of stochastic subsolutions of the exit time control problem is a
viscosity supersolution of the associated dynamic programming PDE such that
$V^-\leq V^\cont$. On the other hand, it is straightforward to show that
the unique viscosity solution $V^+$ of the HJBQVI is a viscosity subsolution of the
dynamic programming PDE of the exit time control problem with $V^+ \geq V^\cont$.
A comparison principle for the PDE shows that $V^- = V^\cont = V^+$, thus
establishing the connection between the full information problem and the
exit time control problem.

The advantage of establishing this connection is that the exit time control
problem is known to admit an optimal control, which we can use to construct
an optimal trading strategy and optimal expert opinion strategy for the full
information problem. This construction is also carried out in
Section~\ref{sec:exit-time-problem}. A verification theorem furthermore shows
that the unique viscosity solution $V^+$ of the HJBQVI coincides with the
value function of the full information problem, thus completing the main
results of this paper.

The construction of
the optimal strategies, however, involves a number of subtle technicalities and
requires several novel ideas, hence making up a significant part of the
main contributions of this article.
First, as usual for partial observation problems, the associated
full information problem is highly degenerate in the sense that the $(1+N)$-dimensional
state process is driven by a one-dimensional Brownian motion. In particular,
due to the lack of uniform ellipticity, there is no hope to find a classical
solution of the HJBQVI, and hence classical verification-type arguments to
construct optimizers is out of reach. To work around this problem, we
employ an extension of the verification procedure developed in
\cite{belak2019utility, belak2017general, belak2022optimal, christensen2014solution}
which allows to construct optimal expert opinion strategies (or, more generally,
impulse controls) from continuous viscosity solutions. One central assumption
in this type of construction is that the underlying stochastic process
satisfies the strong Markov property. Unfortunately, in our situation, this
is in general not guaranteed to hold as trading strategies (the regular
controls) can in principle be path-dependent and lead to non-Markovian state
processes. In the literature, there is very little work on impulse control
of non-Markovian processes. Let us mention \cite{djehiche2010stochastic,
jonsson2023finite} in this direction, but highlight that none of the approaches
developed in these articles seems feasible in our setting.

For the resolution of the lack of Markovianity, we make particular use of the connection
to the exit time control problem. More precisely, according to the results in
\cite{haussmann1990existence}, our exit time control problem
admits an optimal control. Using this, we establish the martingale optimality principle
which shows that the composition of the value function of the exit time control
problem with the optimally controlled state process is a martingale. This martingale
property is sufficient to replace the strong Markov arguments in the impulse
verification machinery of \cite{belak2019utility, belak2017general,
belak2022optimal, christensen2014solution}.

Another challenge in the construction of the optimal strategies is that the
optimal control of the exit time control problem provided by
\cite{haussmann1990existence} is only known to exist in a weak formulation of
the control problem, that is, on a particular probability space with a
particular driving Brownian motion and particular filtration. We work around
this issue by assuming that the stochastic differential equation associated
with the optimal control admits a strong solution, hence allowing us to
construct the optimal state process on arbitrary probability spaces and
arbitrary driving Brownian motions. While we could drop this assumption by
posing our optimal investment problem in a weak formulation as well, it is
our impression that the additional technicalities involved in doing so would
rather distract from the main contributions of the paper. In that sense,
our assumption of existence of a strong solution is simply a means of
keeping the technical burden (and length) of this paper reasonable.

\section{Formulation of the Optimal Investment Problem}
\label{sec:The-Model}

For a given finite time horizon $T>0$, let $(\Omega,\cF,\F,\P)$ be a complete
filtered probability space such that $\F=(\cF_{s})_{s\in[0,T]}$ satisfies the
usual assumptions of completeness and right-continuity and write
$\cF_\infty:=\cF_T$. On this space, we
assume that we are given an adapted continuous-time Markov chain
$Y=(Y_{s})_{s\in[0,T]}$ with càdlàg paths. We assume that $Y$ takes values in
$\{e_{1},\dots,e_{N}\}$, the canonical unit vectors in $\R^N$, where $N\in\N$
denotes the number of regimes of the economy. We take as given an $\cF_0$-measurable
random variable $y_0$ with values in $\{e_{1},\dots,e_{N}\}$ and assume that $Y_0 = y_0$.
The distribution of $y_0$
is determined by the probability vector $p_0 = (p_0^1,\dots,p_0^N)\in\Delta^{N}$,
where
\[
  p^n_0 := \P[y_0 = e_n],\quad n=1,\dots,N,
  \qquad\text{and}\qquad
  \Delta^N := \Bigl\{p\in[0,1]^N : \sum_{n=1}^N p^n = 1\Bigr\}.
\]
Moreover, the generator matrix
of $Y$ is denoted by $Q\in\R^{N\times N}$, so that $Y$ can be written as
\[
  \rd Y_{s} = QY_{s}\rd s + \rd M_{s},\quad Y_{0}=y_{0} \sim p_{0},\quad s\in[0,T],
\]
where $M = (M_s)_{s\in[0,T]}$ is a càdlàg martingale satisfying $M_{s}=\E[Y_{s} - Y_{0}|\cF_{s}]$,
$s\in[0,T]$. Finally, we take as given a one-dimensional standard $(\F,\P)$-Brownian motion
$B$ independent of $Y$.

The financial market is assumed to consist of a risk-free asset with zero returns
and a single risky asset $S=(S_{s})_{s\in[0,T]}$ which satisfies
\begin{align*}
  \rd S_{s}&= S_{s}\bigl[\mu^{\intercal}Y_s\rd s + \sigma\rd B_{s}\bigr],
  \quad S_{0} = s_{0},\quad s\in[0,T],
\end{align*}
where $s_{0}>0$ is the deterministic initial price, $\mu\in\R^{N}$ are
the returns in the respective states of the economy, and $\sigma>0$ is the
volatility.

We assume below that the agent observes the risky asset $S$, but does not observe the
state of the economy $Y$ directly. Instead, the current state of the economy is
filtered from the risky asset price. In addition, we assume that the agent has
access to costly expert opinions of varying quality. As we will assume the cost
for an expert opinion to be strictly positive, the agent's expert opinion strategies
are naturally modeled as impulse controls.

\begin{definition}\label{def:observation-controls}
An expert opinion strategy $\nu=(\tau,q)$ consists of an increasing sequence
of $\F$-stopping times $\tau=(\tau_{k})_{k\in\N}$, the expert opinion times,
and an $(\cF_{\tau_{k}})_{k\in\N}$-adapted, $[0,1)$-valued process
$q=(q_{k})_{k\in\N}$, the quality levels.
The set of all expert opinion strategies is denoted by $\mathcal{A}^{exp}_{pre}$.
\end{definition}

Expert opinion strategies $\nu=(\tau,q)\in\mathcal{A}^{exp}_{pre}$ reveal
additional information on the state of the economy $Y$ as follows. Given $k\in\N$,
the $k$-th expert opinion $Z_k^{\nu}=Z_{k}^{\tau,q}$ obtained at time
$\tau_k$ with quality level $q_k$ is defined on $\{\tau_k < \infty\}$ by
\begin{equation}\label{eq:def-expert-opinion}
  Z_{k}^{\nu} := q_k \mu^\intercal Y_{\tau_{k}} + (1-q_k)\cN_{k},
\end{equation}
where $\cN = (\cN_{k})_{k\in\N}$, the noise in the expert opinions, is a sequence
of independent and identically distributed random variables, independent of $Y$
and $B$, and assumed absolutely continuous with respect to the Lebesgue measure
with continuous density $\phi:\R\to[0,\infty)$.
From \eqref{eq:def-expert-opinion}, we see that the choice of $q_k=0$ results
in purely noisy observation $\cN_k$ whereas for $q_k \uparrow 1$, effectively,
the true state of the economy is revealed to the agent.
To motivate the particular form of \eqref{eq:def-expert-opinion}, note that, in
terms of information, observing $Z^\nu_k$ with $q_k>0$ is equivalent to observing
\begin{equation}\label{eq:def-expert-opinion-alt}
  \frac{1}{q_k}Z_{k}^{\nu} = \mu^\intercal Y_{\tau_{k}} + \frac{1-q_k}{q_k}\cN_{k}
\end{equation}
since $q_k$ is chosen by the agent, hence observable. The quality level $q_k$ therefore controls the amount
of independent noise in the expert opinion. While
\eqref{eq:def-expert-opinion-alt} has a more
straightforward economic interpretation, the advantage of working with
\eqref{eq:def-expert-opinion} instead is that this representation is continuous
in $q_k=0$, leading overall to cleaner arguments below.

At any given point in time, the agent's observations consist of the risky asset price $S$
and the acquired expert opinions $Z^\nu$ up to that point. To define the observation
filtration rigorously, let us fix an expert opinion strategy
$\nu=(\tau,q)\in\cA^{exp}_{pre}$.
With this, we first introduce the filtrations
\[
  \cY^{S}_{s} := \sigma\bigl(S_{r} : r\in[0,s]\bigr)
  \quad\text{and}\quad
  \cZ^{\nu,k}_s := \sigma\bigl((\ind_{\{\tau_k\leq r\}},Z_k^\nu\ind_{\{\tau_k\leq r\}}) : r\in[0,s]\bigr)
  \text{ for all }k\in\N,
  \quad s\in[0,T],
\]
where $\ind_{A}$ denotes the indicator function of an event $A$. The observation
filtration $\cY^\nu = (\cY^\nu_s)_{s\in[0,T]}$ associated with an expert opinion
strategy $\nu$ is then defined as
\[
  \cY^\nu_s = \cY^{S}_{s} \bigvee_{k\in\N} \cZ^{\nu,k}_s \bigvee \fN,
  \quad s\in[0,T],
\]
where $\fN$ denotes the system of $\P$-nullsets. Similarly, the observation filtration
restricted to the first $k\in\N_0$ expert opinions is denoted by $\cY^{\nu;k}
= (\cY^{\nu,k}_s)_{s\in[0,T]}$ and given by
\[
  \cY^{\nu,k}_s = \cY^{S}_{s} \bigvee_{j=1,\dots,k} \cZ^{\nu,j}_s \bigvee \fN,
  \quad s\in[0,T].
\]

\begin{remark}
Special care needs to be taken as multiple expert opinions may be bought at the
same time instant. Both our definition of expert opinion strategies and of the information
filtration account for this possibility.
\end{remark}

Trading is modeled in terms of fractions of wealth invested in the risky asset.
To be precise, a trading strategy is an $\F$-progressively measurable process $\pi$ taking values
in the interval $\Pi:=[\underline{\pi},\overline{\pi}]\subset\R$ with boundaries
$\underline{\pi}<\overline{\pi}$ and such that $0\in\Pi$. Given
$\nu=(\tau,q)\in\cA^{exp}_{pre}$, we write
$\tau_\infty := \lim_{k\to\infty} \tau_k$ for the accumulation point of
$\tau$. The wealth process $W^{w;u}=W^u=(W^u_s)_{s\in[0,\tau_\infty)\cap[0,T]}$ associated with $u:=(\pi,\nu)$
is defined as the solution of
\begin{equation}\label{eq:WEE}
  \rd W^{u}_{s}
  = \pi_{s} W^{u}_{s} \bigl[\mu^{\intercal}Y_{s}\rd s + \sigma\rd B_{s}\bigr]
	-\sum_{k\in\N} K(\tau_{k},q_{k}) \delta_{\tau_{k}}(\rd s),
	\quad W^u_{0}=w,
	\quad s\in[0,\tau_\infty)\cap[0,T],
\end{equation}
with $w\geq 0$ the initial wealth and the function $K:[0,T]\times[0,1)\to(0,\infty)$
modeling the cost of expert opinion acquisition. The cost function $K$ is assumed
to be jointly continuous and strictly increasing in the second argument. Moreover,
we assume that there exists $K_{\min}>0$ such that $K(t,q)\geq K(t,0) \geq K_{\min}$
for all $(t,q)\in[0,T]\times[0,1)$ and that $K(t,\argdot)$ is coercive in that
for each wealth level $w\geq 0$ there exists a quality level $q_w\in[0,1)$ such that
\begin{equation}\label{eq:coercive}
  K(t,q) \geq w,\quad(t,q)\in[0,T]\times[q_w,1).
\end{equation}
Observe that this assumption implies that $\lim_{q\uparrow 1}K(t,q) = \infty$ for
all $t\in[0,T]$, meaning that revealing the true state of the economy is assumed to
be infinitely costly.

\begin{definition}
\label{def:admissible-controls}
Let $u=(\pi,\nu)$, where $\pi$ is a trading strategy and $\nu=(\tau,q)\in\cA^{exp}_{pre}$.
We say that $u$ is an admissible strategy for an initial wealth of $w\geq 0$
provided that, for all $k\in\N$,
$\pi$ is $\cY^{\nu}$-progressively measurable,
$\tau_{k}$ is a $\cY^{\nu,k-1}$-stopping time,
$q_{k}$ is $\cY^{\nu,k-1}_{\tau_{k}}$-measurable, and
$W^{w;u}\geq 0$.
The set of all such strategies is denoted by $\cA(w)$. Given $u=(\pi,\nu)\in\cA(w)$,
we also speak of $\pi$ as an admissible trading strategy and $\nu$
as an admissible expert opinion strategy. 
\end{definition}

We shall shortly see that the admissibility constraint $W^{w;u}\geq 0$
guarantees that $\tau$ does not accumulate before time $T$. Intuitively, this is
because the agent starts with a finite amount of wealth and each expert opinion
costs at least $K_{\min}>0$. Since the financial market is free of arbitrage
opportunities, this implies that only finitely many expert opinions can be
financed.

Before making this argument rigorous, it is convenient to fix some notation first.
Given $u\in\cA(w)$, we subsequently always assume that the trading strategy in $u$
is denoted by $\pi$ and the expert opinion strategy is denoted by $\nu=(\tau,q)$.
Moreover, with a slight abuse of notation, we write $(\pi,\tau,q)$ in place of
$(\pi,(\tau,q))$.
While our notation carefully disentangles expert opinions even if they are
purchased simultaneously, it is sometimes more convenient to combine such simultaneous
purchases. This can be achieved as follows. Given $\nu=(\tau,q)\in\cA^{exp}_{pre}$,
we let $\tilde{\tau}=(\tilde{\tau}_k)_{{k\in\N}}$ denote the jump times of the càdlàg process
$I = (I_s)_{s\in[0,\infty)}$ given by
\[
  I_s := \sum_{k\in\N} \ind_{\{\tau_k \leq s\}},\quad s\in[0,\infty).
\]
Clearly, any jump in $I$ is due to at least one expert opinion and the accumulation
time $\tau_\infty$ of $\tau$ coincides with the first time that $I$ takes
the value $+\infty$. Moreover, while $\tau$ is in general increasing, the sequence
$\tilde{\tau}$ is even strictly increasing on $[0,\infty)$. Given $k\in\N$, the set of indices of
the expert opinions purchased at time $\tilde{\tau}_k$ is given by
\[
  J_k := \bigl\{ j\in\N : \tau_j = \tilde{\tau}_k \bigr\}.
\]
Note that any such set of indices can be written in the form $J_k
= \{j_{k,1},j_{k,2},\dots,j_{k,|J_k|}\}$ if $|J_k|$ is finite or $J_k
= \{j_{k,1},j_{k,2},\dots\}$ otherwise, where by convention we assume that
$j_{k,1}<j_{k,2}<\dots$ are ordered. In fact, it holds that
$j_{k,i+1} = j_{k,1} + i$ for all $i=0,1,\dots,|J_k|-1$ and
$(J_k)_{k\in\N}$ is a (random) partition of $\N$. Put differently, $J_{k}$ contains
the indices of the $k$-th batch of expert opinions and the $i$-th expert opinion
of that batch is the $j_{k,i}$-th expert opinion in total. For ease of notation,
we subsequently write
\[
  \tau_{k,i} := \tau_{j_{k,i}},
  \quad
  q_{k,i} := q_{j_{k,i}},
  \quad
  Z^\nu_{k,i} := Z^\nu_{j_{k,i}},
  \quad
  \cN_{k,i} := \cN_{j_{k,i}},
  \quad
  \cY^{\nu,k,i}:=\cY^{\nu,j_{k,i}}.
\]
Given a trading strategy $\pi$, the wealth process $W^{u}$ associated with
$u=(\pi,\tau,q)$ can be written as
\begin{equation}\label{eq:WEE-rewritten}
  \rd W^{u}_{s}
  =\pi_{s} W^{u}_{s} \bigl[\mu^{\intercal}Y_{s}\rd s + \sigma\rd B_{s}\bigr]
    - \sum_{k\in\N}\sum_{i=1}^{|J_{k}|} K(\tilde{\tau}_{k},q_{k,i})
    \delta_{\tilde{\tau}_{k}}(\rd s),
    \quad W^u_{0}=w_{0},
    \quad s\in[0,\tau_\infty)\cap[0,T].
\end{equation}
Finally, the wealth after the $i$-th expert opinion of the $k$-th batch is denoted by
\begin{equation}\label{eq:wealth-update-at-observation}
  W^{u}_{k,i} := W^{u}_{k,i-1} - K(\tilde{\tau}_{k},q_{k,i}),
  \quad W^{u}_{k,0} := W^{u}_{\tilde{\tau}_{k}-}
  \quad\text{on }\{\tilde{\tau}_k\leq T\}
\end{equation}
under the convention $W^u_{0-} := w_0$.
With the notation settled, let us now proceed by showing that admissible strategies
purchase only finitely many expert opinions.

\begin{lemma}
\label{lem:stop-to-inf}
For any $w\geq 0$ and $u=(\pi,\tau,q)\in\cA(w)$ we have $\tau_\infty>T$.
\end{lemma}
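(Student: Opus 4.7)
The plan is to absorb the continuous multiplicative trading dynamics of $W^u$ into a pathwise-bounded stochastic exponential, reducing the argument to a counting bound for the discrete expert-opinion purchases. Since $\pi$ takes values in the compact set $\Pi$, $Y$ takes values in the unit basis of $\R^N$, and $\mu,\sigma$ are constant, the process
\[
  E_s := \exp\Bigl(\int_0^s \pi_r\mu^\intercal Y_r\,\rd r - \tfrac{1}{2}\int_0^s \pi_r^2\sigma^2\,\rd r + \int_0^s \pi_r\sigma\,\rd B_r\Bigr),\qquad s\in[0,T],
\]
is continuous, strictly positive, and $\F$-adapted on $[0,T]$, so the pathwise maximum $M:=\sup_{s\in[0,T]} E_s$ is $\P$-almost surely finite.

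Next I would apply Itô's formula to $X^u:=W^u/E$ on $[0,\tau_\infty)\cap[0,T]$. A direct computation shows that the continuous semimartingale contributions cancel identically, yielding the pure-jump identity
\[
  X^u_s = w - \sum_{k\in\N:\,\tilde{\tau}_k\leq s}\sum_{i=1}^{|J_k|}\frac{K(\tilde{\tau}_k,q_{k,i})}{E_{\tilde{\tau}_k}}.
\]
Since $E>0$, the admissibility constraint $W^u\geq 0$ translates to $X^u_s\geq 0$ for all $s<\tau_\infty\wedge T$; combined with the bounds $K\geq K_{\min}$ and $E_{\tilde{\tau}_k}\leq M$, each summand is at least $K_{\min}/M>0$. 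Monotone convergence as $s\uparrow\tau_\infty\wedge T$ then yields
\[
  \#\bigl\{(k,i):\tilde{\tau}_k<\tau_\infty\wedge T,\ 1\leq i\leq|J_k|\bigr\}\ \leq\ \frac{M\,w}{K_{\min}}\ <\ \infty\quad\P\text{-a.s.}
\]

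To conclude, I would argue by contradiction. On the hypothetical event $\{\tau_\infty\leq T\}$, one of two scenarios must occur: (i) infinitely many distinct batch times $\tilde{\tau}_k$ accumulate at $\tau_\infty$, which is directly ruled out by the finiteness bound above; or (ii) some batch $k_0$ satisfies $\tilde{\tau}_{k_0}=\tau_\infty$ with $|J_{k_0}|=\infty$, in which case the iterated wealth update~\eqref{eq:wealth-update-at-observation} gives $W^u_{k_0,i}\leq W^u_{k_0,0}-iK_{\min}$, which becomes strictly negative for large $i$ and once more contradicts $W^u\geq 0$. The main obstacle is precisely this dichotomy: the stochastic-exponential bound controls only purchases occurring \emph{strictly} before the accumulation time, so ruling out an ``infinite simultaneous batch'' at $\tau_\infty$ itself requires the complementary pointwise wealth-update argument via~\eqref{eq:wealth-update-at-observation}.
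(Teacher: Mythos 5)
Your proof is correct, and it takes a genuinely different route from the paper's. The paper invokes the ``No Free Lunch with Vanishing Risk'' theorem (\cite[Theorem~8.2.1, Lemma~8.2.4]{delbaen2006mathematics}) to conclude that the trading-gain integrals $\int_0^{\tau_{k,i}\wedge T}\pi_s W_s[\mu^\intercal Y_s\rd s+\sigma\rd B_s]$ form a sequence bounded in probability, and then derives a contradiction from the fact that $j_{k,i}K_{\min}-w\to\infty$ on the bad event. You instead normalize by the stochastic exponential $E$ of the trading dynamics: since $W^u$ and $E$ satisfy the same linear continuous dynamics $\rd\,\cdot = (\cdot)\,[\pi\mu^\intercal Y\,\rd s+\pi\sigma\,\rd B]$ between jumps, the ratio $W^u/E$ is a pure jump process, and the almost-sure finiteness of $\sup_{[0,T]}E$ converts nonnegativity of wealth into a pathwise cap $Mw/K_{\min}$ on the number of purchases before $\tau_\infty\wedge T$. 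This is more elementary (no appeal to the NFLVR machinery) and yields a pathwise rather than in-probability bound. The price, as you correctly identify, is that the normalization only sees purchases strictly before $\tau_\infty$, so an extra dichotomy is required to rule out an infinite batch landing exactly at $\tau_\infty$, which you handle via the iterated update \eqref{eq:wealth-update-at-observation}. The paper's single inequality $0\leq W^u_{k,i}\leq\cdots$ covers both cases uniformly without a split. Note also that both your case (ii) and the paper's proof implicitly read the admissibility constraint $W^{w;u}\geq 0$ as constraining the intermediate wealths $W^u_{k,i}$ on $\{\tilde\tau_k\leq T\}$ (including $\tilde\tau_k=\tau_\infty$), even though the process $W^u$ itself is formally declared only on $[0,\tau_\infty)\cap[0,T]$; your argument is therefore fully consistent with the paper's intended reading of Definition~\ref{def:admissible-controls}.
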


\begin{proof}
Let $u\in\cA(w)$ and define $A_{k,i}:=\{\tau_{k,i}\leq T\}$ for $i=1,\dots,|J_{k}|$ and $k\in\N$.
Moreover, let
\[
  A := \Bigl\{\lim_{k\to\infty}\tau_{k}\leq T\Bigr\}
  = \bigcap_{k\in\N}\bigcap_{i=1,\dots,|J_{k}|}A_{k,i}
\]
and assume by contradiction that $\P[A]>0$. Recall that for $k\in\N$ and $i=1,\dots,|J_{k}|$
the $i$-th expert opinion in the $k$-th batch corresponds to the $j_{k,i}$-th expert opinion
overall. Since the cost for expert opinions is lower bounded by $K_{\min}>0$, we see that
\[
  0 \leq W_{k,i}^{u}
  \leq w + \int_{0}^{\tau_{k,i}}\pi_{s}W_{s}\bigl[\mu^{\intercal}Y_{s}\rd s
    + \sigma\rd B_{s}\bigr] - j_{k,i}K_{\min}\quad \text{ on }A_{k,i}.
\]
In particular, this implies
\[
  j_{k,i}K_{\min} - w
  \leq \int_{0}^{\tau_{k,i}}\pi_{s}W_{s}\bigl[\mu^{\intercal}Y_{s}\rd s
    + \sigma\rd B_{s}\bigr]
  \quad\text{on }A.
\]
Now the financial market satisfies ``No Free Lunch with Vanishing Risk'', see
\cite[Theorem 8.2.1]{delbaen2006mathematics}, and hence the sequence of integrals
\[
  \int_{0}^{\tau_{k,i}\wedge T}\pi_{s}W_{s}\bigl[\mu^{\intercal}Y_{s}\rd s
    + \sigma\rd B_{s}\bigr],\quad k\in\N,
\]
is bounded in probability; see \cite[Lemma 8.2.4]{delbaen2006mathematics}. On the
other hand, $j_{k,i}K_{\min} - w\to\infty$ on $A$ as $k\to\infty$, from which
we obtain a contradiction to $\P[A]>0$.
\end{proof}

The previous lemma justifies modeling expert opinion purchases as impulse controls.
Moreover, it follows that for any admissible strategy $u\in\cA(w)$, the wealth process
$W^u$ is well-defined on all of $[0,T]$. As such, the optimization problem in which
the agent maximizes expected utility of terminal wealth over all admissible strategies
is well-posed. More precisely, given a constant relative risk aversion parameter
$\alpha\in(0,1)$, we denote by
\[
  U : [0,\infty)\to[0,\infty),\qquad w\mapsto U(w) := \frac{1}{1-\alpha}w^{1-\alpha},
\]
the associated power utility function,
so that the expected utility of terminal wealth
is given by
\begin{equation}\label{eq:cost-functional-partial}
	\cJ(w;u) := \E\bigl[U\bigl(W^{w;u}_T\bigr)\bigr]
  \quad\text{subject to \eqref{eq:WEE}},
  \qquad w\geq 0,\; u\in\cA(w).
\end{equation}
With this, the optimization problem considered in this article can be written as
\begin{equation}\label{eq:control-problem-partial}\tag{PO}
  \sup_{u\in\cA(w)} \cJ(w;u),\qquad w\geq 0.
\end{equation}
Formally, this is a stochastic control problem with partial observations featuring
both a ``classical'' absolutely continuous control $\pi$ and an ``impulse'' control
$(\tau,q)$. Moreover, due to the noise $\cN$ in the expert opinions, the impulse
controls have random consequences as in \cite{helmes2024modeling, korn1997optimal}.

\section{Transformation to Full Information}
\label{sec:transformation-to-full-observations}

As usual for stochastic control problems with partial observations, the first step
is to use filtering techniques to embed the original optimization
problem~\eqref{eq:control-problem-partial} into another stochastic control problem
with full information at the cost of an enlarged state space. The necessity for
such an embedding arises from the
fact that $Y$ is not adapted to the observation filtration $\cY^\nu$,
hence the drift $\pi W^u \mu^{\intercal} Y$ of the wealth process $W^u$ given
in~\eqref{eq:WEE} is, in general, also not adapted to $\cY^\nu$. As such, classical
solution techniques based on dynamic programming are not directly applicable and
it is more convenient to transform the problem to one with full information.

The first step in the transformation is the characterization of the conditional distribution
of the hidden Markov chain $Y$ given the observations $\cY^\nu$ for any strategy
$u=(\pi,\nu)=(\pi,\tau,q)\in\cA(w)$. For this, we denote by $p^{\nu,n}$ the
optional projection of $\ind_{\{Y = e_n\}}$ onto $\cY^\nu$ for $n=1,\dots,N$.
In particular, it holds that
\[
  p^{\nu,n}_s = \P\bigl[Y_{s}=e_{n}\big|\cY^{\nu}_{s}\bigr],
  \quad n=1,\dots,N,\; s\in[0,T],
\]
and the vector-valued process $p^{\nu}=(p^{\nu,1},\dots,p^{\nu,N})^\intercal$
is $\cY^\nu$-optional and takes values in $\Delta^N$.
In line with \eqref{eq:wealth-update-at-observation}, we furthermore define
\begin{equation}\label{eq:probability-update-at-observation}
  p^{\nu,n}_{k,i}:=\P\bigl[Y_{\tau_{k,i}}=e_{n}\big|\cY^{\nu,k,i}_{\tilde{\tau}_{k}}\bigr],
  \quad
  p^{\nu,n}_{k,0}:=p^{\nu,n}_{\tilde{\tau}_{k}-},
  \quad k\in\N,\; i=1,\dots,|J_{k}|,\; n=1,\dots,N,
\end{equation}
again with the convention $p^{\nu}_{0-}:=p_{0}$, where we recall that $p_0$ is
the initial (unconditional) distribution of $Y_0$. Next, we introduce the innovations
process $I^\nu = (I^\nu_s)_{s\in[0,T]}$ as
\begin{equation}\label{eq:def-innovations-process}
  \rd I^{\nu}_{s} := \frac{1}{\sigma}\bigl[\rd R_{s}-\mu^\intercal p^{\nu}_{s}\rd s\bigr]
  =\rd B_{s} + \frac{1}{\sigma}\mu^\intercal (Y_{s}-p^{\nu}_{s})\rd s,
  \quad I_{0}^{\nu}=0, \quad s\in[0,T],
\end{equation}
where $R := \log(S)$ denotes the return process, which, by It\^{o}'s formula and
\eqref{eq:def-innovations-process}, satisfies
\[
  \rd R_{s} = \Bigl[\mu^{\intercal}Y_{s}-\frac{1}{2}\sigma^{2}\Bigr]\rd s + \sigma\rd B_{s}
  = \Bigl[\mu^\intercal p^{\nu}_{s}-\frac{1}{2}\sigma^{2}\Bigr]\rd s + \sigma\rd I^{\nu}_{s},
  \quad R_{0}=\log(s_{0}),
  \quad s\in[0,T].
\]
A classical result in filtering theory guarantees that for any $u\in\cA(w)$, the innovations
process $I^\nu$ is a standard $(\cY^\nu,\P)$-Brownian motion; see e.g.\
\cite[Lemma 22.1.7]{cohen2015stochastic} for the case in which the observation
filtration is generated by a diffusion. Observe that, here, we do make use of the fact
that the price process $S$ and the return process $R$ generate the same filtration;
see \cite[Section 2]{frey2012portfolio}.

We can now rewrite the dynamics of the wealth process using the innovations process
as driving noise, which results in the drift of the wealth process to be adapted
to the observation filtration. More precisely, for any $u=(\pi,\nu)\in\cA(w)$,
the wealth process $W^u$ can be written as
\begin{equation}\label{eq:wealth-with-IP}
  \rd W_{s}^{u}
  = \pi_{s} W^{u}_{s}\bigl[\mu^\intercal p_{s}^{\nu} \rd s+\sigma \rd I^{\nu}_{s}\bigr]
    - \sum_{k\in\N}\sum_{i=1}^{|J_{k}|} K(\tilde{\tau}_{k},q_{k,i})\delta_{\tilde{\tau}_{k}}(\rd s),
  \quad W_{0}^{u}=w,\quad s\in[0,T].
\end{equation}
From this representation of the wealth process, we immediately see that the drift
$\pi W^{u}\mu^\intercal p^{\nu}$ is $\cY^\nu$-adapted. However, we are still missing
the dynamics of the conditional probability vector $p^{\nu}$ to get a full description
of the state. As in \cite[Theorem 9.1]{liptser2013statisticsI},
the dynamic evolution of $p^\nu$ in between the arrival of new expert opinions is given by
\[
  \rd p^\nu_s = Q^\intercal p^\nu_s\rd s
    + \frac{1}{\sigma}\bigl(\diag[\mu] - \mu^\intercal p^\nu_s\bigr)p^\nu_s \rd I^\nu_s,
  \quad p^{\nu}_{\tilde{\tau}_{k}}=p^{\nu}_{k,|J_{k}|},
  \quad s\in[\tilde{\tau}_{k},\tilde{\tau}_{k+1})\cap[0,T],
\]
for all $k\in\N$, where $\diag[\mu]\in\R^{N\times N}$ is the diagonal matrix
with diagonal entries given by $\mu$.
Moreover, for $k\in\N$ and $i = 1,\dots,|J_k|$, the
arrival of the new expert opinion
\[
  Z_{k,i}^{\nu} = q_{k,i}\mu^\intercal Y_{\tilde{\tau}_{k}} + (1-q_{k,i})\cN_{k,i}
\]
leads to an update of the conditional probabilities, which can
be computed using Bayes' rule, that is
\begin{align*}
  p^{\nu,n}_{k,i}
  &= \P\bigl[Y_{\tilde{\tau}_{k}}=e_{n}\,\big|\,\cY^{\nu,k,i}_{\tilde{\tau}_{k}}\bigr]\\
  &= \P\bigl[Y_{\tilde{\tau}_{k}}=e_{n}\,\big|\,\cY^{\nu,k,i-1}_{\tilde{\tau}_{k}} \vee Z_{k,i}^{\nu}\bigr]\\
  &= \frac{\P\bigl[Z_{k,i}^{\nu}\in\rd z\,\big|\,\cY^{\nu,k,i-1}_{\tilde{\tau}_{k}}
      \vee Y_{\tilde{\tau}_{k}}=e_{n}\bigr]
	\P\bigl[Y_{\tilde{\tau}_{k}}=e_{n}\,\big|\,\cY^{\nu,k,i-1}_{\tilde{\tau}_{k}}\bigr]}
    {\P\bigl[Z_{k,i}^{\nu}\in \rd z\,\big|\,\cY^{\nu,k,i-1}_{\tilde{\tau}_{k}}\bigr]}\\
  &= \frac{\P\bigl[Z_{k,i}^{\nu}\in \rd z\,\big|\,\cY^{\nu,k,i-1}_{\tilde{\tau}_{k}}
      \vee Y_{\tilde{\tau}_{k}}=e_{n}\bigr]
    \P\bigl[Y_{\tilde{\tau}_{k}}=e_{n} \,\big|\, \cY^{\nu,k,i-1}_{\tilde{\tau}_{k}}\bigr]}
	{\sum_{m=1}^{N}\P\bigl[Z^{\nu}_{k,i}\in \rd z, Y_{\tilde{\tau}_{k}}=e_{m}\,
	  \big|\,\cY^{\nu,k,i-1}_{\tilde{\tau}_{k}}\bigr]}\\
  &= \frac{\phi\bigl((Z_{k,i}^{\nu}-q_{k,i}\mu^{n})/(1-q_{k,i})\bigr)p_{k,i-1}^{\nu,n}}
    {\sum_{m=1}^{N}\phi\bigl((Z_{k,i}^{\nu}-q_{k,i}\mu^{m})/(1-q_{k,i})\bigr)p_{k,i-1}^{\nu,m}}.
\end{align*}
In light of this representation, we define the update function for the conditional
probabilities as $\Xi:\R\times\Delta^{N}\times[0,1)\to\Delta^{N}$
component-wise for each $n=1,\dots,N$ by
\[
  \Xi^n(z,p,q)
  := \frac{\phi\bigl((z-q\mu^n)/(1-q)\bigr)p^{n}}
    {\sum_{m=1}^{N}\phi\bigl((z-q\mu^{m})/(1-q)\bigr)p^{m}},
  \quad (z,p,q)\in\R\times\Delta^{N}\times[0,1).
\]
Observe that $\Xi$ is continuous by continuity of $\phi$.
Next, the update can equivalently be expressed as
\[
  \Xi^n(z,p,q)-p^{n}
  = \zeta^n(z,p,q)p^n,
\]
where the continuous function $\zeta:\R\times\Delta^N\times[0,1)\to\R^N$ is defined component-wise by
\begin{equation}\label{eq:def-zeta-update}
  \zeta^n(z,p,q) := \frac{\phi\bigl((z-q\mu^n)/(1-q)\bigr)}
  {\sum_{m=1}^{N}\phi\bigl((z-q\mu^{m})/(1-q)\bigr)p^{m}}-1,
  \quad (z,p,q)\in\times\R\times\Delta^N\times[0,1),
  \; n=1,\dots,N.
\end{equation}

\begin{remark}\label{rem:tractable-cases}
The conditional distribution of $Y$ given $\cY^\nu$ can be described by a finite-dimensional
system of equations since $Y$ takes only finitely many values. Other cases in which
the filter is finite-dimensional are the Bayesian case of $Y$ being a static random variable
as for example in \cite{karatzas2001bayesian} or the Gaussian case of $Y$ being an
Ornstein--Uhlenbeck process as for example in \cite{brendle2006portfolio}.
\end{remark}

Having derived the dynamics of $p^\nu$ essentially completes the transformation
of the partial observation problem to an equivalent full information problem.
However, in order to apply the tools of dynamic programming, we still have to
embed the full information problem into an entire family of optimization problems
parametrized in terms of the initial time $t\in[0,T]$, initial wealth $w\in[0,\infty)$
at time $t$, and initial conditional distribution $p\in\Delta^N$ of $Y$ at time $t$.
This embedding is standard and outlined here merely for the sake of introducing
the proper notation.

We begin by setting
\[
  \S := [0,\infty)\times\Delta^N,
  \quad
  \S_T := [0,T)\times \S,
  \quad\text{and}\quad
  \overline{\S}_T := [0,T]\times\S
\]
for the state space, the time-augmented state space, and the time-augmented state
space including terminal time, respectively. Elements of $\S$ are denoted by $x=(w,p)
=(w,p^1,\dots,p^N)$ and we occasionally switch between these representations
without explicitly mentioning this.
The drift of the
state process is given by $f = (f^w,f^p) : \S\times\Pi \to \R\times\R^N$ with
\begin{equation}\label{eq:def-drift-coefficients}
  f^{w}(w,p,\pi) := \pi w \mu^\intercal p
  \quad\text{and}\quad
  f^{p}(p) := Q^\intercal p,
  \quad (w,p,\pi)\in\S\times\Pi.
\end{equation}
Similarly, the diffusion coefficient
$\Sigma = (\sigma^{w},\sigma^{p}) : \S\times\Pi \to \R\times\R^N$ is defined through
\begin{equation}\label{eq:def-diffusion-coefficients}
  \sigma^{w}(w,\pi) := \pi w\sigma
  \quad\text{and}\quad
  \sigma^{p}(p) := \frac{1}{\sigma}(\diag(\mu) - \mu^{\intercal}p)p.
  \quad (w,p,\pi)\in\S\times\Pi.
\end{equation}
Jumps in the state process due to the purchase of new expert opinions are given in
terms of the continuous function
$\gamma=(\gamma^w,\gamma^p):\R\times[0,T]\times\Delta^N\times[0,1)\to\R\times\R^N$ given by
\[
  \gamma^w(z,t,p,q) := - K(t,q)
  \quad\text{and}\quad
  \gamma^p(z,t,p,q) = \zeta(z,p,q)\diag[p]
\]
for $(z,t,p,q)\in \R\times[0,T]\times\Delta^N\times[0,1)$.
For any $(t,x)=(t,w,p)\in\overline{\S}_T$, we subsequently write $\cA(x):=\cA(w)$
and given an
admissible strategy $u\in\cA(x)$, the $\S$-valued state process
$X^{u;t,x}=(X^{u;t,x}_s)_{s\in[t,T]}$ is given as the unique strong solution of
\begin{equation}\label{eq:state-aux}
  \rd X^{u;t,x}_s = f\bigl(X^{u;t,x}_s,\pi_s\bigr)\rd s
    + \Sigma\bigl(X^{u;t,x}_s,\pi_s\bigr)\rd I^\nu_s
	+ \sum_{k\in\N}\sum_{i=1}^{|J_k|}
	  \gamma\bigl(Z^\nu_{k,i},\tilde{\tau}_k,p^\nu_{k,i-1},q_{k,i}\bigr)
	  \delta_{\tilde{\tau}_k}(\rd s)
\end{equation}
for $s\in[t,T]$ with initial condition $X^{u;t,x}_t = x$, and
where we impose the convention $p^{\nu}_{k,0}:=p^{\nu}_{\tilde{\tau}_{k}-}$.
In line with the notation introduced above,
we often write $X^u$ in place of $X^{u;t,x}$ if the initial condition is clear from
the context, we generally denote the components of $X^u$ by $X^u=(W^u,p^\nu)$, and
for $i=1,\dots,|J_{k}|$ and $k\in\N$ we write
$X_{k,i}^{u}:=(W_{k,i}^{u},p_{k,i}^{\nu})$ for the state process after the
$i$-th expert opinion of the $k$-th batch.

\begin{lemma}
\label{lem:quad-growth}
There exists $C>0$ such that
\[
  \E\Bigl[\sup_{s\in[t,T]}\bigl|X^{u;t,x}_{s}\bigr|^{2}\Bigr]
  \leq C\bigl(1+|x|^{2}\bigr)
\]
for any $(t,x)\in\overline{\S}_T$ and any admissible strategy $u\in\cA(x)$.
\end{lemma}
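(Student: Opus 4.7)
The first observation is that $p^\nu$ takes values in the simplex $\Delta^N \subset [0,1]^N$, so $|p^\nu_s|^2 \leq N$ holds pathwise, and the estimate reduces to proving $\E[\sup_{s\in[t,T]}(W^u_s)^2] \leq C(1+w^2)$ uniformly in $u \in \cA(x)$. For this, the plan is to exploit two structural features of the wealth SDE~\eqref{eq:wealth-with-IP}: its continuous part is linear in $W^u$ with coefficients bounded thanks to $\pi \in \Pi$ and $p^\nu \in \Delta^N$, and the impulse jumps are strictly negative while admissibility forces $W^u \geq 0$ throughout.

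Applying It\^o's formula to $(W^u_s)^2$, the continuous part contributes a drift of $(W^u_r)^2 [2\pi_r \mu^\intercal p^\nu_r + \pi_r^2 \sigma^2]$, which is bounded by $C_1 (W^u_r)^2$ for an explicit constant $C_1$ depending only on $\Pi$, $\mu$, and $\sigma$, plus the local martingale $\int_t^s 2\sigma \pi_r (W^u_r)^2 \rd I^\nu_r$. The jump contribution at the $i$-th opinion of the $k$-th batch equals $(W^u_{k,i})^2 - (W^u_{k,i-1})^2 = -2 W^u_{k,i-1} K(\tilde{\tau}_k,q_{k,i}) + K(\tilde{\tau}_k,q_{k,i})^2$, and by~\eqref{eq:wealth-update-at-observation} together with $W^u_{k,i} \geq 0$ one has $W^u_{k,i-1} \geq K(\tilde{\tau}_k,q_{k,i})$, so this expression is bounded above by $-K(\tilde{\tau}_k,q_{k,i})^2 \leq 0$. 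The jump terms can therefore be dropped without losing the inequality, and Lemma~\ref{lem:stop-to-inf} ensures that the jump sum involves only finitely many terms almost surely, so no convergence issue arises.

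With the localization $\tau_m := \inf\{s \in [t,T] : W^u_s \geq m\} \wedge T$ in place to kill the local martingale upon taking expectations, Gronwall's lemma yields the pointwise bound $\E[(W^u_{s \wedge \tau_m})^2] \leq w^2 e^{C_1 T}$, and Fatou allows passing $m \to \infty$. To upgrade this to a supremum estimate, I would apply Burkholder--Davis--Gundy to the $\rd I^\nu$-integral, dominate $(\int_t^T (W^u_r)^4 \rd r)^{1/2} \leq \sup_{r \in [t,T]} W^u_r \cdot (\int_t^T (W^u_r)^2 \rd r)^{1/2}$, and absorb a small multiple of $\E[\sup_s (W^u_s)^2]$ into the left-hand side via Young's inequality; combining with the pointwise bound already established produces the claim. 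The only genuinely delicate point is the bookkeeping at impulses, which is handled by Lemma~\ref{lem:stop-to-inf} together with the observed non-positivity of the jump contribution; once this sign is identified, the rest is a textbook linear-SDE estimate with bounded coefficients.
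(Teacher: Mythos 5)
Your proof is correct and rests on the same underlying observation as the paper's — that the impulse jumps can only decrease wealth, so they only help the estimate — but the execution differs. The paper avoids applying It\^o's formula to $(W^u)^2$ altogether: it compares $W^u$ pathwise to the jump-free wealth $\hat W^u$ (which dominates $W^u$ by positivity of the stochastic exponential and the fact that every jump is a strict decrease), notes that $\hat W^u$ solves a linear SDE with coefficients bounded uniformly in $u$ thanks to $\pi\in\Pi$ and $p^\nu\in\Delta^N$, and then cites a standard moment estimate from \cite{krylov2008controlled}. You instead reproduce that moment estimate by hand, first verifying via the update formula~\eqref{eq:wealth-update-at-observation} and $W^u\geq 0$ that the jump increment of $(W^u)^2$ is non-positive, then running the localize--Gronwall--Fatou--BDG--Young chain. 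Both are sound; the paper's comparison-then-cite route is shorter, while yours is self-contained. One small remark: your computation that the jump increment is $\leq -K^2$ is correct but more than needed — the monotone drop $W^u_{k,i}\leq W^u_{k,i-1}$ already gives $(W^u_{k,i})^2\leq(W^u_{k,i-1})^2$, which is all the estimate requires.
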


\begin{proof}
Writing $X^u = X^{u;t,x}$ and $X^u=(W^u,p^\nu)$, we first make use of the fact
that all entries of $p^\nu$ are $[0,1]$-valued to estimate
\[
  \bigl|X^{u}_{s}\bigr|^{2}
  \leq (N+1)\Bigl[\bigl|W^u_s\bigr|^2 + \sum_{n=1}^N \bigl|p^{\nu,n}\bigr|^2\Bigr]
  \leq (N+1)\bigl|W^u_s\bigr|^2 + N(N+1).
\]
Next, we note that $W^u \leq \hat{W}^{u}$, where $\hat{W}^u$ corresponds to the wealth
process $W^{u}$ in the absence of expert opinion cost, that is
\[
  \rd\hat{W}_{s}^{u} 
  = f^{w}\bigl(\hat{W}^{u}_{s}, p^{\nu}_{s}, \pi_{s}\bigr)\rd s
	+ \sigma^{w}\bigl(\hat{W}^{u}_{s}, \pi_{s}\bigr)\rd I^{\nu}_{s},
  \qquad \hat{W}^{u}_{t} = w,
  \qquad s\in[t,T].
\]
Since $\Pi$ and $\Delta^N$ are bounded, we conclude that $\hat{W}$ solves a
Lipschitz SDE with Lipschitz constant independent of
$u$. But then a standard moment estimate such as
\cite[Corollary~2.2.12]{krylov2008controlled} shows that there exists a constant
$K>0$ which does not depend on $u$ and $t$ such that
\[
  \E\Bigl[\sup_{s\in[t,T]}\bigl|X^{u}_{s}\bigr|^{2}\Bigr]
  \leq N(N+1) + (N+1)\E\Bigl[\sup_{s\in[t,T]}\bigl|\hat{W}^{u}_{s}\bigr|^{2}\Bigr]
  \leq N(N+1) + (N+1)K\bigl(1 + |w|^2\bigr).
\]
We conclude since $|w| \leq |x|$.
\end{proof}

With this, we can now formulate the dynamic full information problem as follows.
The expected utility of terminal wealth for a given initial condition
$(t,x)\in\overline{\S}_T$ and admissible control $u\in\cA(x)$ is
\[
  \cJ(u;t,x) := \E\bigl[U\bigl(W^{u;tx}_T\bigr)\bigr]
  \quad \text{subject to \eqref{eq:state-aux},}
\]
and the value function, in turn, is defined as
\begin{equation}\label{eq:VF-full}\tag{FI}
  V : \overline{\S}_T \to[0,\infty),
  \quad (t,x)\mapsto V(t,x) := \sup_{u\in\cA(x)} \cJ(u;t,x).
\end{equation}

\section{Construction of the Viscosity Solution of the HJBQVI}
\label{sec:viscosity-characterization}

In this section we begin with our characterization of the value function as the
unique viscosity solution of the Hamilton--Jacobi--Bellman quasi-variational
inequalities (HJBQVI) associated with the full information problem. For this,
we employ a version of the stochastic Perron's method; see
\cite{bayraktar2012stochastic,bayraktar2013stochastic} for early developments
and \cite{belak2017general} for impulse control problems.
Similarly to \cite{belak2019utility, belak2017general,
belak2022optimal}, we choose a suitable class of superharmonic functions for the
set of stochastic supersolutions.
A classical argument is employed to show that the pointwise minimum
of the set of stochastic supersolution is a viscosity subsolution of the HJBQVI.
In contrast to the classical stochastic Perron's method for impulse control problems
\cite{belak2019utility,belak2017general}, however, we then follow a similar argument
as in \cite{belak2022optimal} and show directly that the pointwise minimum is in
fact the unique continuous viscosity solution of the HJBQVI. As most of the
technical arguments in this section are well-known from the literature, we defer the
proofs to Appendix~\ref{app:perron}.

Note that, in order to show that the HJBQVI characterizes the value function,
after this section we are still left with showing that the pointwise minimum of
stochastic supersolutions coincides with the value function. This is achieved by
means of a novel argument proving that the full information problem can be
reduced to an exit time control problem. Once this is achieved, we use a
verification theorem to establish the viscosity characterization of the value
function $V$ and construct optimal strategies in the process.

\subsection{The HJBQVI}

To introduce the HJBQVI, we write $\Gamma:\R\times\overline{\S}_T\times[0,1)\to\S$
for the continuous function
\begin{equation}\label{eq:def-Gamma}
  \Gamma(z,t,x,q) := x + \gamma(z,t,p,q),
  \quad (z,t,x,q)\in\R\times\overline{\S}_T\times[0,1),
\end{equation}
which maps the current state $x$ to the new state $\Gamma(z,t,x,q)$ after
the purchase of an expert opinion $z$ of quality $q$ at time $t$. Observe
that since the expert opinion cost $K$ is bounded away from zero, it is possible
that the agent does not have sufficient wealth to buy any expert opinion at all.
In light of this, we subsequently introduce the set-valued mapping
\[
  \cD(t,x) := \bigl\{ q \in [0,1) : w \geq K(t,q)\bigr\},
  \quad (t,x)=(t,w,p)\in\overline{\S}_T.
\]
With this, the subset of the state space $\overline{\S}_T$ on which the agent cannot
purchase any expert opinions is given by
\[
  \overline{\S}_T^\emptyset
  := \bigl\{(t,x)\in\overline{\S}_T : \cD(t,x) = \emptyset\bigr\}
  = \bigl\{(t,w,p)\in\overline{\S}_T : w < K(t,0)\bigr\}.
\]
Observe that the complement is the closed set given by
\[
  \overline{\S}_T\setminus\overline{\S}_T^\emptyset
  = \bigl\{(t,w,p)\in\overline{\S}_T : w \geq K(t,0)\bigr\}.
\]
Next, note that monotonicity and continuity of $K$
imply that there exists a continuous function $\chi:\overline{\S}_T\to\R$
such that
\[
  \cD(t,x) = [0,\chi(t,x)],\quad (t,x)\in\overline{\S}_T.
\]
In particular, by monotonicity of $K$, it holds that $\chi < 0$ on
$\overline{\S}_T^\emptyset$ and $\chi$ is determined implicitly by
\[
  K\bigl(t,\chi(t,x)\bigr) = w,
  \quad (t,x)=(t,w,p)\in\overline{\S}_T\setminus\overline{\S}_T^\emptyset.
\]
Finally, note that $\chi<1$ since $K$ is assumed coercive.

Now fix $(t,x)=(t,w,p)\in\overline{\S}_T$. By definition, the optimal expected
utility the agent can obtain by starting from this state is $V(t,x)$. If the
agent decides to purchase an expert opinion of the form $Z = q\mu^\intercal Y_t
+ (1-q)\cN$ for a quality level $q\in[0,1)$ and a realization of noise $\cN\sim\phi$,
the expected optimal expected utility after this purchase is
\[
  \E\bigl[V\bigl(t,\Gamma(Z,t,x,q)\bigr)\bigr]
    = \sum_{n=1}^N p^n\E\Bigl[V\Bigl(t,\Gamma\bigl(q\mu^n + (1-q)\cN,t,x,q\bigr)\Bigr)\Bigr].
\]
In general, such a purchase is suboptimal, suggesting that
\[
  V(t,x) \geq
    \sum_{n=1}^N p^n\E\Bigl[V\Bigl(t,\Gamma\bigl(q\mu^n + (1-q)\cN,t,x,q\bigr)\Bigr)\Bigr],
\]
with equality only if the purchase of $Z$ is optimal. This leads to the definition
of the intervention operator\footnote{We use the convention
$\sup\{\emptyset\} = -1$ to handle the case $\cD(t,x) = \emptyset$.}
\begin{equation}\label{eq:def-impulse-operator}
  \cM[v](t,x) := \sup_{q\in\cD(t,x)}\sum_{n=1}^N p^n
    \E\Bigl[v\Bigl(t,\Gamma\bigl(q\mu^n+(1-q)\cN,t,x,q\bigr)\Bigr)\Bigr],
  \quad (t,x)\in\overline{\S}_T,
\end{equation}
acting on measurable non-negative functions $v:\overline{\S}_T\to[0,\infty)$. In
light of the discussion above, we expect that $V(t,x)\geq\cM[V](t,x)$ with equality
only if a purchase of an expert opinion in state $(t,x)$ constitutes an optimal
action. In particular, note that $\cM[V] = -1 < 0 \leq V$ on
$\overline{\S}_T^\emptyset$.

Next, we denote by $\cH:\S\times\Pi\times\R^{1+N}\times\cS^{1+N}\to\R$ the
Hamiltonian of the state process given by
\[
  \cH(x,\pi,r,M) := f(x,\pi)^\intercal r
    + \frac{1}{2}\trace\bigl[\Sigma(x,\pi)\Sigma(x,\pi)^\intercal M\bigr],
  \quad (x,\pi,r,M)\in \S\times\Pi\times\R^{1+N}\times\cS^{1+N},
\]
where $\cS^{1+N}$ denotes the set of symmetric matrices in $\R^{(1+N)\times(1+N)}$.
Now writing
\[
  F(x,s,r,M) := - s - \max_{\pi\in\Pi} \cH(x,\pi,r,M),
  \quad (x,s,r,M)\in\S\times\R\times\R^{1+N}\times\cS^{1+N},
\]
where $F:\S\times\R\times\R^{1+N}\times\cS^{1+N}\to\R$, the Hamilton--Jacobi--Bellman
quasi-variational inequalities can be expressed as
\begin{equation}\label{eq:HJBQVI}\tag{HJBQVI}
  0 = \min\Bigl\{F\bigl(x,V_t(t,x),\rD_xV(t,x),\rD_x^2V(t,x)\bigr),
    V(t,x) - \cM[V](t,x)\Bigr\},
  \quad (t,x)\in\S_T,
\end{equation}
with terminal condition
\[
  V(T,x) = U(w),\quad x=(w,p)\in\S.
\]
In what follows, we are concerned with viscosity solutions of \eqref{eq:HJBQVI}
which satisfy the terminal condition in the classical sense. As is common in the
theory of viscosity solutions, we denote by $v^*$ and $v_*$ the
upper and lower semi-continuous envelopes of a locally bounded function
$v:\overline{\S}_T\to\R$. Moreover, we write $\LSC$ and $\USC$ for the sets of
lower and upper semi-continuous functions $v:\overline{\S}_T\to\R$. We use the
standard definition of viscosity solutions for discontinuous HJBQVI as in, e.g.,
\cite{belak2022optimal}. For the sake of completeness and to keep the paper
self-contained, we recall this definition.

\begin{definition}
A locally bounded function $v:\overline{\S}_T\to\R$ is a viscosity subsolution
of \eqref{eq:HJBQVI} if, for any $(t,x)\in\S_T$ and any $\varphi\in\cC^{1,2}(\S_T)$
such that $v^{*}-\varphi$ attains a global maximum at $(t,x)$ with
$v^{*}(t,x)=\varphi(t,x)$, we have
\[
  \min\Bigl\{F\bigl(x,\varphi_t(t,x),\rD_x\varphi(t,x),\rD_x^2\varphi(t,x)\bigr),
    v^*(t,x) - \cM[v^*]^*(t,x)\Bigr\}
  \leq 0.
\]
Similarly, $v$ is a viscosity supersolution of \eqref{eq:HJBQVI} if, for any
$(t,x)\in\S_T$ and any $\varphi\in\cC^{1,2}(\S_T)$ such that $v_{*}-\varphi$ attains
a global minimum at $(t,x)$ with $v_{*}(t,x)=\varphi(t,x)$, we have
\[
  \min\Bigl\{F\bigl(x,\varphi_t(t,x),\rD_x\varphi(t,x),\rD_x^2\varphi(t,x)\bigr),
    v_{*}(t,x) - \cM[v_*]_*(t,x)\Bigr\}
  \geq 0.
\]
Finally, $v$ is a viscosity solution of \eqref{eq:HJBQVI} if it is both a viscosity
subsolution and a viscosity supersolution of \eqref{eq:HJBQVI}.
\end{definition}

The main reason why we repeat the definition here is to explicitly point out the,
at first sight, rather odd double use of the semi-continuous envelopes in
$\cM[v^*]^*$ and $\cM[v_*]_*$, respectively. This is required as the intervention
operator does, in general, not preserve semi-continuity. More precisely, as the
following result shows, $\cM$ preserves upper semi-continuity so that
$\cM[v^*]^*=\cM[v^*]$, but the same is generally not true for the lower
semi-continuous envelope.

\begin{lemma}\label{lem:regularity-intervention}
Let $v:\overline{\S}_T\to[0,\infty)$. Then the following statements hold.
\begin{enumerate}[label=\roman*),ref=\roman*)]
\item If $v\in\USC$, then $\cM[v]\in\USC$.
\item If $v\in\LSC$, then the restrictions of $\cM[v]$ to $\overline{\S}_T^\emptyset$
and $\overline{\S}_T\setminus\overline{\S}_T^\emptyset$ are lower semi-continuous
on their respective domains.
\end{enumerate}
\end{lemma}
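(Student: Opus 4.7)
The plan is to rewrite the intervention operator as a parametric supremum
\[
  \cM[v](t,x) = \sup_{q \in \cD(t,x)} F_v(t,x,q),
  \qquad
  F_v(t,x,q) := \sum_{n=1}^N p^n \E\!\left[v\bigl(t,\Gamma(q\mu^n+(1-q)\cN,t,x,q)\bigr)\right],
\]
whenever $\cD(t,x) \neq \emptyset$, together with the convention $\cM[v]=-1$ on $\overline{\S}_T^\emptyset$. The two regions $\overline{\S}_T^\emptyset = \{w < K(t,0)\}$ and its complement behave differently: the first is open, so $\cM[v]\equiv -1$ is continuous there, while the second is closed with $\cD(t,x) = [0,\chi(t,x)]$ nonempty and compact, varying continuously with $(t,x)$ by continuity of $\chi$. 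This reduction identifies the real content as semi-continuity of a Berge-type maximum on the closed region, and the main obstacle I foresee is handling the boundary $\{\chi=0\}$, which forces us to split the analysis into the two parts given in the statement.

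The first key step is to establish joint semi-continuity of the integrand $F_v$ in $(t,x,q)$. Because $\Gamma$ is jointly continuous and $v$ is USC (resp.\ LSC), for each fixed realization of $\cN$ the map $(t,x,q) \mapsto v(t,\Gamma(q\mu^n+(1-q)\cN,t,x,q))$ is USC (resp.\ LSC). The crucial observation is that, for $(t,x,q)$ ranging over a compact neighborhood, the post-intervention state $\Gamma(q\mu^n+(1-q)\cN,t,x,q) = (w - K(t,q),\Xi(q\mu^n+(1-q)\cN,p,q))$ lies in a compact subset of $\S$ independently of $\cN$, since its wealth coordinate is bounded and its $p$-coordinate lies in $\Delta^N$. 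Hence $v$ is uniformly bounded on the relevant range (by local boundedness of $v$). In the USC case, reverse Fatou's lemma then yields $\limsup_k F_v(t_k,x_k,q_k) \leq F_v(t,x,q)$; in the LSC case, since $v \geq 0$, ordinary Fatou gives $\liminf_k F_v(t_k,x_k,q_k) \geq F_v(t,x,q)$.

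For part (i), I then combine joint USC of $F_v$ with upper hemi-continuity of the compact set $\cD$. Given $(t_k,x_k)\to(t,x)$: if $(t,x) \in \overline{\S}_T^\emptyset$ (an open set), then eventually $(t_k,x_k)\in\overline{\S}_T^\emptyset$ and both sides equal $-1$. If $(t,x)\notin\overline{\S}_T^\emptyset$, discard terms in $\overline{\S}_T^\emptyset$ along the sequence (they contribute $-1 \leq \cM[v](t,x)$ since $v\geq 0$ and $0\in\cD(t,x)$) and, for each remaining $k$, pick $q_k \in \cD(t_k,x_k)$ with $F_v(t_k,x_k,q_k)\geq \cM[v](t_k,x_k) - 1/k$. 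Compactness of $\cD(t,x)\cup\bigcup_k\cD(t_k,x_k)$ (bounded by $\chi(t,x)+1$ for large $k$) extracts $q_{k_j}\to q^* \in [0,\chi(t,x)] = \cD(t,x)$ by continuity of $\chi$, and joint USC of $F_v$ closes the argument.

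For part (ii), on $\overline{\S}_T^\emptyset$ the claim is trivial since $\cM[v]\equiv-1$. On $\overline{\S}_T\setminus\overline{\S}_T^\emptyset$ I take $(t_k,x_k)\to(t,x)$ in this closed set and any $q^* \in [0,\chi(t,x)]$; the task is to construct $q_k \in [0,\chi(t_k,x_k)]$ with $q_k \to q^*$. A rescaling $q_k := q^* \chi(t_k,x_k)/\chi(t,x)$ works when $\chi(t,x)>0$, and $q_k := 0$ suffices when $\chi(t,x)=0$ (since then necessarily $q^*=0$). Feasibility gives $\cM[v](t_k,x_k) \geq F_v(t_k,x_k,q_k)$, and joint LSC of $F_v$ yields $\liminf_k \cM[v](t_k,x_k) \geq F_v(t,x,q^*)$; taking the supremum over $q^*$ concludes. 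As noted, the essential technical point throughout is that the post-intervention state range is uniformly compact in $\cN$, which is what makes the Fatou-type interchange between the sup and the expectation work cleanly.
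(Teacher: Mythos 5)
Your proof is correct and follows essentially the same route as the paper: split into the open set $\overline{\S}_T^\emptyset$ (where $\cM[v]\equiv -1$) and its closed complement, use upper/lower semi-continuity of the integrand together with compactness of the post-intervention state range to justify the Fatou-type interchange, and exploit continuity of $\chi$ to get $q_k \to q^*\in\cD(t,x)$ (in the USC direction, via convergent subsequences of maximizers; in the LSC direction, via an explicit $q_k$ converging to an arbitrary $q^*\in\cD(t,x)$). The only cosmetic difference is that you isolate the joint semi-continuity of $F_v$ as a named intermediate step, whereas the paper folds the Fatou argument directly into the case analysis.
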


The proof of this result is standard and follows along the same lines as
\cite[Lemma 5.1]{belak2022optimal}. There is a minor additional challenge since
the set of quality levels $[0,1)$ is not compact, which can be overcome using
that the cost function $K$ is coercive. The proof is reported in Appendix~\ref{app:perron}.

\subsection{The Comparison Principle}

The first step in the construction of the unique viscosity solution of the HJBQVI
is a comparison principle which we shall eventually use to prove uniqueness and
continuity of the viscosity solution. The arguments follow the standard machinery
as in \cite{belak2019utility,belak2022optimal} by first constructing a strict
supersolution of the HJBQVI and then proving the comparison principle via
a perturbation method.

\begin{lemma}
\label{lem:strict-classical-super}
For any choice of $\beta\in(0,\alpha]$, there exist constants $A,C>0$
and a strictly positive continuous function $\kappa:\overline{\S}_T\to(0,\infty)$ such that
\[
  \psi : \overline{\S}_T\to(0,\infty),
  \quad (t,x)\mapsto \psi(t,x) := \Bigl(A + \frac{1}{1-\beta}w^{1-\beta}\Bigr)e^{C(T-t)}
\]
is a strict supersolution of the HJBQVI in the sense that $\psi(T,w,p)\geq U(w)$
for all $(p,w)\in\S$ and
\[
  \min\Bigl\{F\bigl(x,\psi_t(t,x),\rD_x\psi(t,x),\rD_x^2\psi(t,x)\bigr),
  \psi(t,x) - \cM[\psi](t,x)\Bigr\}
  \geq \kappa(t,x) > 0,
  \quad (t,x)\in\S_T.
\]
\end{lemma}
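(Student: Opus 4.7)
The plan is to verify the terminal condition, the HJB inequality, and the intervention inequality separately, exploiting that $\psi$ does not depend on $p$ at all: consequently every $p$-derivative and mixed $w$--$p$ derivative of $\psi$ vanishes, and $\cM[\psi]$ collapses to a supremum over $q$ of a quantity depending only on $t$ and $w$. For the HJB part, I would compute $\psi_t = -C\psi$, $\psi_w = w^{-\beta}e^{C(T-t)}$, and $\psi_{ww} = -\beta w^{-\beta-1}e^{C(T-t)}$, which reduces the Hamiltonian to
\[
  \cH\bigl(x,\pi,\rD_x\psi,\rD_x^2\psi\bigr) = \bigl[\pi\, \mu^\intercal p - \tfrac{1}{2}\beta \sigma^2 \pi^2\bigr]\, w^{1-\beta}\, e^{C(T-t)}.
\]
Since $\Pi$ and $\Delta^N$ are both compact, the map $\lambda(p) := \max_{\pi\in\Pi}\bigl[\pi\mu^\intercal p - \tfrac{1}{2}\beta\sigma^2\pi^2\bigr]$ is bounded uniformly in $p$ by some constant $L$, yielding
\[
  F\bigl(x, \psi_t, \rD_x\psi, \rD_x^2\psi\bigr) = C A\, e^{C(T-t)} + \bigl(\tfrac{C}{1-\beta} - \lambda(p)\bigr) w^{1-\beta}\, e^{C(T-t)}.
\]
Choosing $C \geq (1-\beta) L$ forces the coefficient of $w^{1-\beta}$ to be non-negative, so $F \geq CA\, e^{C(T-t)} \geq CA > 0$.

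For the intervention part, since $\psi$ is $p$-free, the sum over $n$ and the expectation over $\cN$ inside $\cM[\psi]$ each evaluate to the same quantity, and the supremum over $q\in\cD(t,x)$ is attained at $q=0$ because $K(t,\argdot)$ is strictly increasing and $w\mapsto w^{1-\beta}$ is increasing. On $\overline{\S}_T\setminus\overline{\S}_T^\emptyset$ this gives
\[
  \cM[\psi](t,x) = \bigl(A + \tfrac{1}{1-\beta}(w-K(t,0))^{1-\beta}\bigr) e^{C(T-t)},
\]
and the subgradient inequality for the concave map $w\mapsto w^{1-\beta}$, namely $w^{1-\beta} - (w-K)^{1-\beta} \geq (1-\beta) K\, w^{-\beta}$, then yields
\[
  \psi(t,x) - \cM[\psi](t,x) \geq K(t,0)\, w^{-\beta}\, e^{C(T-t)} \geq K_{\min}\, w^{-\beta}\, e^{C(T-t)} > 0.
\]
On $\overline{\S}_T^\emptyset$, $\cM[\psi]\equiv -1$ by the convention $\sup\emptyset=-1$, so $\psi - \cM[\psi] \geq A+1>0$ trivially.

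Finally, the terminal condition $\psi(T,w,p) = A + w^{1-\beta}/(1-\beta) \geq w^{1-\alpha}/(1-\alpha) = U(w)$ holds automatically for $w \geq 1$ because $\beta \leq \alpha$ makes $1-\beta \geq 1-\alpha$, and choosing $A \geq 1/(1-\alpha)$ absorbs the regime $w \in [0,1]$. A continuous strictly positive minorant $\kappa$ can then be taken, for instance, as $\kappa(t,x) := c\,(1+w)^{-\beta}\, e^{C(T-t)}$ for a sufficiently small constant $c>0$, since $(1+w)^{-\beta} \leq \min\{1, w^{-\beta}\}$ lets this single expression sit below both $CA\, e^{C(T-t)}$ and the intervention lower bounds on each of the two pieces of the state space. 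The only subtlety worth flagging is that $\psi - \cM[\psi]$ is itself discontinuous across $\{w = K(t,0)\}$, but this does not obstruct the construction because $\kappa$ need only be a continuous minorant of the minimum, not the minimum itself.
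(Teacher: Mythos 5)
Your proof is correct and follows essentially the same route as the paper's: compute the partial derivatives (which are the same since $\psi$ is $p$-free), bound the Hamiltonian using compactness of $\Pi$ and $\Delta^N$, choose $C$ large to make the coefficient of $w^{1-\beta}$ nonnegative and $A$ large for the terminal inequality, and split the intervention inequality across $\overline{\S}_T^\emptyset$ and its complement. Two small points where you are cleaner than the paper: the subgradient inequality $w^{1-\beta}-(w-K)^{1-\beta}\geq(1-\beta)Kw^{-\beta}$ gives a sharper and more transparent lower bound on $\psi-\cM[\psi]$ than the paper's direct estimate, and your explicit choice $\kappa(t,x)=c(1+w)^{-\beta}e^{C(T-t)}$ (with $c\leq\min\{A,CA,K_{\min}\}$) makes precise the paper's terse final remark that a continuous strictly positive minorant exists despite the discontinuity of $\psi-\cM[\psi]$ across $\{w=K(t,0)\}$.
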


\begin{remark}\label{rem:stoch_supersolution_member}
The proof of Lemma~\ref{lem:strict-classical-super} reveals that
the function $\psi$ with $\beta = \alpha$ and $A=0$ is supersolution provided that
$C>0$ is chosen sufficiently large. It is, however, in general no longer a strict
supersolution.
\end{remark}

With the strict supersolution at hand, we can derive the following perturbation
result which goes back to \cite[Lemma 3.2]{ishii1993viscosity}.

\begin{lemma}\label{lem:perturbation}
Fix $\rho>1$, let $u\in\USC,v\in\LSC$, and $\psi$, $\kappa$ as in
Lemma~\ref{lem:strict-classical-super}. Define the perturbations
\begin{equation}\label{eq:def-perturb-fcts}
  u^{\rho} := \frac{\rho+1}{\rho}u - \frac{1}{\rho}\psi,
  \quad
  v^{\rho} := \frac{\rho-1}{\rho}v+\frac{1}{\rho}\psi.
\end{equation}
If $u$ is a viscosity subsolution of \eqref{eq:HJBQVI}, the perturbation $u^{\rho}$
is a viscosity subsolution of the perturbed equation
\[
  \min\Bigl\{F\bigl(x,u^\rho_t(t,x),\rD_xu^\rho(t,x),\rD_x^2u^\rho(t,x)\bigr),
    u^\rho(t,x) - \cM[u^\rho]^*(t,x)\Bigr\}
    + \frac{1}{\rho}\kappa(t,x) = 0,\; (t,x)\in\S_T.
\]
Similarly, if $v$ is a viscosity supersolution of \eqref{eq:HJBQVI}, the perturbation
$v^{\rho}$ is a viscosity supersolution of the perturbed equation
\[
  \min\Bigl\{F\bigl(x,v^\rho_t(t,x),\rD_xv^\rho(t,x),\rD_x^2v^\rho(t,x)\bigr),
    v^\rho(t,x) - \cM[v^\rho]_*(t,x)\Bigr\}
    - \frac{1}{\rho}\kappa(t,x) = 0,\; (t,x)\in\S_T.
\]
\end{lemma}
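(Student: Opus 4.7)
The approach rests on two structural properties of the operators defining \eqref{eq:HJBQVI}. First, since $F(x,s,r,M)=-s-\sup_{\pi\in\Pi}\cH(x,\pi,r,M)$ and $\cH$ is affine in $(r,M)$ for each fixed $\pi$, the function $F(x,\argdot,\argdot,\argdot)$ is a pointwise infimum over $\pi$ of affine maps and hence jointly concave. Second, the intervention operator $\cM$ is sublinear on nonnegative functions: for $\lambda_{1},\lambda_{2}\geq 0$ and nonnegative $v_{1},v_{2}$, the pointwise bound $\cM[\lambda_{1}v_{1}+\lambda_{2}v_{2}]\leq\lambda_{1}\cM[v_{1}]+\lambda_{2}\cM[v_{2}]$ follows from $v\mapsto\sum_{n}p^{n}\E[v(t,\Gamma(\argdot,t,x,q))]$ being linear in $v$ for each fixed $q$. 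Combined with the strict slack $\kappa$ built into $\psi$ by Lemma~\ref{lem:strict-classical-super}, these two properties will convert non-strict viscosity inequalities for $u$ (resp.\ $v$) into strict ones for the perturbation $u^{\rho}$ (resp.\ $v^{\rho}$).

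For the subsolution claim, let $\varphi\in\cC^{1,2}(\S_T)$ and suppose $(u^{\rho})^{*}-\varphi$ attains a global maximum at $(t,x)\in\S_{T}$ with $(u^{\rho})^{*}(t,x)=\varphi(t,x)$. Continuity of $\psi$ gives $(u^{\rho})^{*}=\tfrac{\rho+1}{\rho}u^{*}-\tfrac{1}{\rho}\psi$, and inverting $u^{*}=\tfrac{\rho}{\rho+1}(u^{\rho})^{*}+\tfrac{1}{\rho+1}\psi$ shows that $u^{*}-\tilde\varphi$ attains a global maximum at $(t,x)$ with matching value, where $\tilde\varphi:=\tfrac{\rho}{\rho+1}\varphi+\tfrac{1}{\rho+1}\psi\in\cC^{1,2}(\S_{T})$. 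Applying the subsolution property of $u$ with test function $\tilde\varphi$ at $(t,x)$, at least one argument of the min in \eqref{eq:HJBQVI} is nonpositive. In the first case, writing $\varphi=\tfrac{\rho+1}{\rho}\tilde\varphi-\tfrac{1}{\rho}\psi$, joint concavity of $F$ together with $F(x,\psi_{t},\rD_{x}\psi,\rD_{x}^{2}\psi)\geq\kappa(t,x)$ yields $F(x,\varphi_{t},\rD_{x}\varphi,\rD_{x}^{2}\varphi)\leq -\kappa(t,x)/\rho$. In the second case, sublinearity of $\cM$ applied to $u^{*}=\tfrac{\rho}{\rho+1}(u^{\rho})^{*}+\tfrac{1}{\rho+1}\psi$, combined with $\cM[\psi]\leq\psi-\kappa$ at $(t,x)$ and $\cM[u^{*}]^{*}=\cM[u^{*}]$ (Lemma~\ref{lem:regularity-intervention}(i)), yields after rearrangement $(u^{\rho})^{*}(t,x)-\cM[(u^{\rho})^{*}]^{*}(t,x)\leq -\kappa(t,x)/\rho$. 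Either way, the perturbed subsolution inequality holds at $(t,x)$.

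The supersolution claim is argued analogously by choosing $\tilde\varphi:=\tfrac{\rho}{\rho-1}\varphi-\tfrac{1}{\rho-1}\psi$ (well-defined since $\rho>1$), which touches $v_{*}$ from below at $(t,x)$ with matching value, and then running the same two-case analysis with reversed inequalities. The main technical obstacle lies in the obstacle part of the min: sublinearity of $\cM$ delivers only the pointwise upper bound $\cM[(v^{\rho})_{*}]\leq\tfrac{\rho-1}{\rho}\cM[v_{*}]+\tfrac{1}{\rho}\cM[\psi]$, and the lower semi-continuous envelope is superadditive, so it does not distribute over this sum in the convenient direction. To circumvent this I would invoke Lemma~\ref{lem:regularity-intervention}(ii): on the interior of each of the two regions $\overline{\S}_{T}^{\emptyset}$ and $\overline{\S}_{T}\setminus\overline{\S}_{T}^{\emptyset}$, the function $\cM[\psi]$ is simultaneously USC (by part (i)) and LSC (by part (ii)), hence continuous, and therefore along any sequence $(s_{k},y_{k})\to(t,x)$ realising $\liminf\cM[v_{*}]$ the corresponding $\cM[\psi]$-values converge to $\cM[\psi](t,x)$. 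This delivers the envelope distributivity $\cM[(v^{\rho})_{*}]_{*}(t,x)\leq\tfrac{\rho-1}{\rho}\cM[v_{*}]_{*}(t,x)+\tfrac{1}{\rho}\cM[\psi](t,x)$ at interior points, sufficient to close the rearrangement. On the remaining interface $\{w=K(t,0)\}$ one has $\cM[(v^{\rho})_{*}]_{*}(t,x)=-1$ while $(v^{\rho})_{*}(t,x)\geq 0$, so the obstacle inequality holds with ample slack to absorb $\kappa(t,x)/\rho$ on compact subsets, completing the argument.
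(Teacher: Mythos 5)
Your proof is correct and follows the same core strategy as the paper's: you invert the perturbation to produce a test function for $u$ (resp.\ $v$), apply the original viscosity inequality, and then transfer it to the perturbed one using concavity of $F$ in $(s,r,M)$ (it is a pointwise minimum of affine maps) and sublinearity of $\cM$. For the subsolution you split into two cases according to which argument of the min is nonpositive; the paper combines the inequalities $F(\varphi^\rho)\leq\tfrac{\rho+1}{\rho}F(\varphi)-\tfrac{1}{\rho}\kappa$ and $u^\rho-\cM[u^\rho]\leq\tfrac{\rho+1}{\rho}(u-\cM[u])-\tfrac{1}{\rho}\kappa$ directly and then takes the min; the two presentations are logically equivalent.

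Where you genuinely add value is the supersolution case, which the paper dispatches as ``follows in the same way with some minor adjustments.'' You correctly observe that the adjustment is not entirely cosmetic: the paper's subsolution argument drops the USC envelope because $\cM$ preserves upper semi-continuity (Lemma~\ref{lem:regularity-intervention}\,(i)), but $\cM$ does \emph{not} preserve lower semi-continuity across the interface $\{w=K(t,0)\}$, so the lsc envelope $\cM[(v^\rho)_*]_*$ cannot simply be dropped, and the superadditivity of the lsc envelope makes the naive estimate go the wrong way. Your fix --- using that $\cM[\psi]$ is continuous away from the interface (USC globally by part~(i), LSC relative to each region by part~(ii)) so that $(f+g)_* \le f_*+g$ when $g$ is continuous, then treating the interface separately via $\cM[(v^\rho)_*]_*=-1\le 0\le(v^\rho)_*$ --- is sound. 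The one loose end is the interface bound: you need $(v^\rho)_*-\cM[(v^\rho)_*]_*\ge 1\ge\kappa(t,x)/\rho$ there, which requires $\kappa\le\rho$ pointwise; this is not guaranteed by Lemma~\ref{lem:strict-classical-super} as stated, but it is harmless since $\kappa$ is merely a lower bound and can be replaced by $\min\{\kappa,1\}$ without affecting any downstream use. Your phrase ``on compact subsets'' should be replaced by this normalization; otherwise the argument is complete.
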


The role of the perturbation $u^\rho$ is to to control the growth of the subsolution
$u$ without sacrificing the subsolution property. Using this, we can establish
the following comparison principle.

\begin{theorem}\label{th:comparison-principle}
Let $u\in\USC$ be a viscosity subsolution and $v\in\LSC$ be a viscosity supersolution
of \eqref{eq:HJBQVI} for which there exists $K>0$ such that
\begin{equation}\label{eq:comparison_growth}
  0 \leq u(t,x),v(t,x) \leq K\bigl(1+|x|^{1-\alpha}\bigr),
  \quad (t,x)\in\overline{\S}_T.
\end{equation}
If, in addition, $u$ and $v$ satisfy the boundary conditions
\begin{equation}\label{eq:boundary}
  u(T,x)\leq v(T,x),\quad x\in\S,
  \quad\text{and}\quad
  u(t,0,p) = 0,\quad (t,p)\in[0,T]\times\Delta^{N},
\end{equation}
then $u\leq v$ on all of $\overline{\S}_T$.
\end{theorem}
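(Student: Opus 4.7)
The plan is to follow the standard perturbation-and-doubling approach for comparison in viscosity solutions of quasi-variational inequalities. By Lemma~\ref{lem:perturbation}, the perturbations $u^\rho$ and $v^\rho$ defined in~\eqref{eq:def-perturb-fcts} are a viscosity sub- and supersolution, respectively, of a strictly perturbed version of~\eqref{eq:HJBQVI} with slack $\kappa/\rho$. Since $u^\rho\to u$ and $v^\rho\to v$ pointwise as $\rho\to\infty$, it suffices to show $u^\rho\leq v^\rho$ on $\overline{\S}_T$ for every fixed $\rho>1$ and then take $\rho\to\infty$. I therefore assume towards a contradiction that $M:=\sup_{\overline{\S}_T}(u^\rho-v^\rho)>0$ for some $\rho>1$.

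Choosing $\beta\in(0,\alpha)$ strictly in Lemma~\ref{lem:strict-classical-super}, the dominating term $-\tfrac{2}{\rho}\psi$ in $u^\rho-v^\rho$ grows like $-w^{1-\beta}$, which strictly beats the polynomial rate $|x|^{1-\alpha}$ controlling $u$ and $v$ via~\eqref{eq:comparison_growth}; hence $u^\rho-v^\rho\to-\infty$ as $|x|\to\infty$, uniformly in $t$. The boundary data in~\eqref{eq:boundary} together with $v\geq 0$ give $u^\rho-v^\rho\leq\tfrac{2}{\rho}(u-\psi)$ on $\{t=T\}$ and $u^\rho-v^\rho\leq-\tfrac{2}{\rho}\psi\leq 0$ on $\{w=0\}$, which for $\rho$ sufficiently large bounds the parabolic-boundary values strictly below $M$. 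By upper semi-continuity of $u^\rho$ and lower semi-continuity of $v^\rho$, the positive supremum $M$ is therefore attained at some interior point $(t_0,x_0)\in\S_T$ with $w_0>0$.

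Next I would run the classical doubling-of-variables construction with penalizer $\Phi_\eps(t,x,y):=u^\rho(t,x)-v^\rho(t,y)-\frac{1}{2\eps}|x-y|^2$, producing a maximizing sequence $(t_\eps,x_\eps,y_\eps)\to(t_0,x_0,x_0)$ with the standard estimate $|x_\eps-y_\eps|^2/\eps\to 0$. The strict sub- and supersolution properties then provide, at each $(t_\eps,x_\eps)$, the usual dichotomy: either the parabolic Hamiltonian branch of the HJBQVI holds with slack $\kappa/\rho$, or the obstacle branch does. In the PDE branch, the Crandall--Ishii lemma produces symmetric matrices with the standard ordering, and the linear-in-$w$ and polynomial-in-$p$ structure of $f$ and $\Sigma$, together with the compactness of $\Pi$, reduces the competing Hamiltonians to the standard Lipschitz estimate, which as $\eps\to 0$ contradicts the strict positivity $\kappa(t_0,x_0)/\rho>0$. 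The degeneracy of $\Sigma\Sigma^\intercal$ is no obstruction, as Crandall--Ishii does not require ellipticity.

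The main obstacle, as is typical for HJBQVI, is the obstacle branch. Here Lemma~\ref{lem:perturbation} gives $u^\rho(t_0,x_0)\leq \cM[u^\rho]^*(t_0,x_0)-\kappa(t_0,x_0)/\rho$ and $v^\rho(t_0,x_0)\geq \cM[v^\rho]_*(t_0,x_0)+\kappa(t_0,x_0)/\rho$, whence
\[
  M\;\leq\;\cM[u^\rho]^*(t_0,x_0)-\cM[v^\rho]_*(t_0,x_0)-\frac{2}{\rho}\kappa(t_0,x_0).
\]
It then remains to prove $\cM[u^\rho]^*(t_0,x_0)-\cM[v^\rho]_*(t_0,x_0)\leq M$, which reduces to the pointwise bound $u^\rho-v^\rho\leq M$ applied at the updated states $\Gamma(Z,t_0,x_0,q)$. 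Since $u^\rho$ is upper semi-continuous, Lemma~\ref{lem:regularity-intervention}(i) yields $\cM[u^\rho]^*=\cM[u^\rho]$. The delicate point is that $\cM[v^\rho]$ need not be LSC across the frontier $\partial\overline{\S}_T^\emptyset$, so the envelope $\cM[v^\rho]_*$ could strictly decrease there; by Lemma~\ref{lem:regularity-intervention}(ii), however, LSC holds separately on $\overline{\S}_T^\emptyset$ and on its complement, which I would exploit by approximating admissible quality levels $q$ from within $\cD(t_0,x_0)$ and using continuity of $\Gamma$ together with the coercivity~\eqref{eq:coercive} of $K$ to keep post-impulse states safely inside $\S$. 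This would yield the desired inequality and hence the contradiction $M\leq M-2\kappa(t_0,x_0)/\rho<M$.
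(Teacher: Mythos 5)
Your overall architecture is the same as the paper's: perturb $u,v$ to $u^\rho,v^\rho$ via Lemma~\ref{lem:perturbation} with $\beta<\alpha$ strictly, use the superlinear growth of $\psi$ together with~\eqref{eq:comparison_growth} to force the supremum of $u^\rho-v^\rho$ into a compact set, rule out the boundary $\{t=T\}$ and $\{w=0\}$ from~\eqref{eq:boundary}, double variables, invoke Crandall--Ishii, and split into the PDE branch and the obstacle branch. All of that is correct and matches the paper's Steps~1 and~3 (the paper does not first locate a maximizer of $u^\rho-v^\rho$ and then double; it doubles directly, but the two orderings are equivalent).

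The genuine gap is in your treatment of the obstacle branch. You correctly identify that $\cM[v^\rho]_*$ can be strictly smaller than $\cM[v^\rho]$ across the frontier $\partial\overline{\S}_T^\emptyset=\{w=K(t,0)\}$, but your proposed fix --- approximating the optimizing quality level from within $\cD(t_0,x_0)$ and invoking coercivity~\eqref{eq:coercive} of $K$ to ``keep post-impulse states safely inside $\S$'' --- does not close this. Coercivity controls $q$ near $1$ (it forces $\chi<1$ so that $\cD$ is compact); it says nothing about the dangerous case $\chi(t_0,x_0)=0$, i.e.\ $(t_0,x_0)\in\partial\overline{\S}_T^\emptyset$. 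There, $\cD(t_0,x_0)=\{0\}$, the only admissible update is $q=0$, and the post-impulse state is $\Gamma(\cN,t_0,x_0,0)=(0,p_0)$ --- post-impulse wealth is exactly zero, not ``safely inside $\S$''. The only way to rule out this scenario is precisely the second boundary condition in~\eqref{eq:boundary}: $u(t,0,p)=0$ forces $u^\rho(t_0,0,p_0)=-\psi(t_0,0,p_0)/\rho<0$, whence $\cM[u^\rho](t_0,x_0)<0$, which contradicts the obstacle inequality $u^\rho(t_0,x_0)\le\cM[u^\rho](t_0,x_0)-\kappa/\rho$ together with $u^\rho(t_0,x_0)>v^\rho(t_0,x_0)\geq 0$ (more precisely, the paper derives $0<\Theta<0$). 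Once $(t_0,x_0)$ is shown to lie in the interior of $\overline{\S}_T\setminus\overline{\S}_T^\emptyset$, all nearby points of the doubling sequence are there too, Lemma~\ref{lem:regularity-intervention}(ii) gives $\cM[v^\rho]_*=\cM[v^\rho]$ locally, and your intended $\cM[u^\rho]-\cM[v^\rho]\le M$ argument goes through. You invoke the boundary condition $u(t,0,p)=0$ only to exclude maxima on $\{w=0\}$, but its crucial use is here, inside the obstacle branch; as written, your proof would fail exactly on $\partial\overline{\S}_T^\emptyset$.

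A secondary imprecision: the viscosity subsolution of a min gives you only one of the two branches at each contact point $(t_\eps,x_\eps)$, so ``either the PDE branch or the obstacle branch holds'' must be a \emph{subsequence} statement, and then you pass the obstacle inequality to the limit $(t_0,x_0)$ using USC of $\cM[u^\rho]$ (Lemma~\ref{lem:regularity-intervention}(i)) and the convergence $u^\rho(t_\eps,x_\eps)\to u^\rho(t_0,x_0)$. You write the obstacle inequalities directly at $(t_0,x_0)$, which skips this step; it is fixable, but should be stated.
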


\subsection{Construction of the Viscosity Solution}

We are now ready for the construction of the unique viscosity solution
of the HJBQVI. The ideas follow \cite{belak2019utility,belak2017general}
and especially \cite{belak2022optimal} with some minor additional
considerations required due to the presence of the classical control, i.e.\ the
trading strategy $\pi$. We begin by introducing a suitable notion of stochastic
supersolutions.

\begin{definition}\label{def:Stoch-Super}
The set of stochastic supersolutions of \eqref{eq:HJBQVI}, denoted by $\cV^{+}$,
is the set of functions $h:\overline{\S}_T\to\R$ satisfying
\begin{enumerate}
\item[\emph{($\cV^+_1$)}] $h\in\USC$;
\item[\emph{($\cV^+_2$)}] $h$ is lower bounded and there exists $K>0$ such that
\[
  h(t,x) \leq K\bigl(1 + |x|^{1-\alpha}\bigr),
  \quad (t,x)\in\overline{\S}_T;
\]
\item[\emph{($\cV^+_3$)}] $h$ satisfies the terminal inequality
\[
  h(T,x) \geq U(w),\quad x=(p,w)\in\S;
\]
\item[\emph{($\cV^+_4$)}] for each $(t,x)\in\overline{\S}_T$, each admissible strategy
$u=(\pi,\nu)\in\cA(x)$,
each $k\in\N$, each pair of $\cY^{\nu}$-stopping times $\theta\leq\rho$ taking values in
$[\tilde\tau_k,\tilde\tau_{k+1}]\cap[t,T]$, and each $\cY^\nu_\theta$-measurable
$\S$-valued random variable $\xi$ with $\E[|\xi|^2]<\infty$, it holds that
\[
  h(\theta,\xi) \geq \E\bigl[h(\rho,X^{u;\theta,\xi}_{\rho-})\big|\cY^\nu_\theta\bigr];
\]
\item[\emph{($\cV^+_5$)}] $h \geq \cM[h]$ on $\overline{\S}_T$.
\end{enumerate}
\end{definition}

It is easy to see that $\cV^+$ is non-empty. Indeed, as emphasized in
Remark~\ref{rem:stoch_supersolution_member}, the function $\psi$ with $\beta=\alpha$
and $A=0$ is a supersolution of the HJBQVI if $C>0$ is chosen sufficiently large.
This function obviously satisfies $(\cV^+_1)$, it satisfies the growth condition
$(\cV^+_2)$
\[
  \psi(t,x) \leq K\bigl(1 + |x|^{1-\alpha}\bigr),\quad (t,x)\in\overline{\S}_T,
\]
for $K:=e^{CT}/(1-\alpha)$ and the boundary conditions $(\cV^+_3)$
\[
  \psi(T,x) = U(w),\quad x=(w,p)\in\S,
  \quad\text{and}\quad
  \psi(t,0) = 0,\quad t\in[0,T].
\]
Finally, $(\cV^+_4)$ is an immediate consequence of an application of It\^{o}'s formula
and using that $\psi$ is a supersolution of the HJBQVI, whereas $(\cV^+_5)$ follows
directly from the supersolution property and the fact that $\psi(T,x)=U(w)$ is
strictly increasing in $w$.

Moreover, let us briefly show that any $h\in\cV^+$ dominates the value function.
For this, fix $(t,x)\in\overline{\S}_T$ and $u\in\cA(x)$. By $(\cV^+_4)$ and pathwise uniqueness,
we see that
\[
  \E\bigl[h\bigl(\tilde\tau_k,X^{u;t,x}_{\tilde\tau_k}\bigr)\bigr]
  \geq \E\bigl[h\bigl(T\wedge\tilde{\tau}_{k+1},X^{u;t,x}_{T\wedge\tilde{\tau}_{k+1}-}\bigr)\bigr],
  \quad\text{on }\{t\leq\tilde{\tau}_k\leq T\},
  \quad k\in\N_0,
\]
where $\tilde\tau_{0}:=0$. Moreover, from multiple applications of $(\cV^+_5)$ it follows that
\[
	\E\bigl[h\bigl(T\wedge\tilde{\tau}_{k+1},X^{u;t,x}_{T\wedge\tilde{\tau}_{k+1}-}\bigr)\bigr]
    \geq \E\bigl[h\bigl(T\wedge\tilde{\tau}_{k+1},X^{u;t,x}_{T\wedge\tilde{\tau}_{k+1}}\bigr)\bigr].
\]
Combining these two inequalities and finally using the terminal inequality $(\cV^+_3)$,
we arrive at
\[
  h(t,x) \geq \E\bigl[h\bigl(T,X^{u;t,x}_T\bigr)\bigr]
  \geq \E\bigl[U\bigl(W^{u;t,x}_T\bigr)\bigr],
\]
and maximizing the right-hand side over $u\in\cA(x)$ yields $h(t,x)\geq V(t,x)$ as claimed.

The pointwise infimum of all stochastic supersolutions is the function
\[
  V^+ :\overline{\S}_T \to [0,\infty),
  \quad (t,x)\mapsto V^+(t,x) := \inf_{h\in\cV^+} h(t,x).
\]
We show in this section that $V^+$ is the unique continuous viscosity solution
of the HJBQVI. From the discussion above, we already know that $V^+ \geq V\geq 0$,
where $V$ is the value function of the full information problem. Moreover, by
$(\cV^+_3)$ and by the properties of the particular stochastic supersolution $\psi\in\cV^+$,
it follows that
\[
  V^+(T,x) = U(w),\quad x=(w,p)\in\S,
  \quad\text{and}\quad
  V^+(t,0,p) = 0,\quad (t,p)\in[0,T]\times\Delta^N.
\]
Finally, by \cite[Proposition 4.1]{bayraktar2012stochastic}, the infimum in the
definition of $V^+$ can be restricted to a countable subset of $\cV^+$, from
which it is straightforward to see that $V^+\in\cV^+$, i.e.\ $V^+$ is actually
the pointwise minimum of the members of $\cV^+$. Using the classical stochastic
Perron machinery, one can show that $V^+$ is a viscosity subsolution of the HJBQVI.

\begin{theorem}\label{th:V+-is-viscosity-subsolution}
$V^{+}$ is a viscosity subsolution of \eqref{eq:HJBQVI} satisfying
\[
  0 \leq V^+(t,x) \leq K\bigl(1 + |x|^{1-\alpha}\bigr),
  \quad (t,x)\in\overline{\S}_T,
\]
for a constant $K>0$ and such that
\begin{equation}\label{eq:boundary_condition}
  V^+(T,x) = U(w),\quad x=(w,p)\in\S,
  \quad\text{and}\quad
  V^+(t,0,p) = 0,\quad (t,p)\in[0,T]\times\Delta^N.
\end{equation}
\end{theorem}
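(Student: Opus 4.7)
The plan is to deal with the growth and boundary statements using the explicit stochastic supersolution $\psi$ from Remark~\ref{rem:stoch_supersolution_member} and then to establish the viscosity subsolution property by the classical stochastic Perron contradiction argument, adapted to the coexistence of regular and impulse controls.

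For the growth and boundary statements, take $\psi$ with $\beta=\alpha$, $A=0$, and $C>0$ sufficiently large; then $\psi\in\cV^+$, so $V^+\leq\psi\leq K(1+|x|^{1-\alpha})$ with $K=e^{CT}/(1-\alpha)$. The lower bound $V^+\geq 0$ follows by iterating $(\cV^+_4)$ and $(\cV^+_5)$ across the finitely many impulse batches (Lemma~\ref{lem:stop-to-inf}) and using $(\cV^+_3)$: for every $h\in\cV^+$ and $u\in\cA(x)$, $h(t,x)\geq\E[U(W_T^{u;t,x})]\geq 0$. The terminal identity $V^+(T,x)=U(w)$ follows from $(\cV^+_3)$ and $\psi(T,x)=U(w)$. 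The boundary identity at $w=0$ follows because admissibility forces $W\equiv 0$ from zero wealth (any impulse costs at least $K_{\min}>0$), so $V^+\geq V=0$, while $V^+\leq\psi$ with $\psi(t,0,p)=0$ gives equality.

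For the viscosity subsolution property, argue by contradiction. Suppose there exist $(t_0,x_0)\in\S_T$ with $t_0<T$ and $\varphi\in C^{1,2}(\S_T)$ such that $V^+-\varphi$ attains a strict global maximum at $(t_0,x_0)$, $V^+(t_0,x_0)=\varphi(t_0,x_0)$, and
\[
F\bigl(x_0,\varphi_t,\rD_x\varphi,\rD_x^2\varphi\bigr)(t_0,x_0)>0,\qquad V^+(t_0,x_0)-\cM[V^+](t_0,x_0)>0,
\]
where Lemma~\ref{lem:regularity-intervention}(i) ensures $\cM[V^+]^*=\cM[V^+]$. Using continuity of $F$ and $\varphi$, upper semi-continuity of $V^+$ and $\cM[V^+]$, and strictness of the maximum, pick an open neighborhood $B$ of $(t_0,x_0)$ with $\overline B\subset\S_T$ and constants $\eta,\delta>0$ such that on $\overline B$ one has $F(\cdot,\varphi_t,\rD_x\varphi,\rD_x^2\varphi)\geq\eta$ and $\varphi-2\delta\geq\cM[V^+]$, and on $\partial B$ additionally $V^+\leq\varphi-2\delta$. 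Define
\[
\tilde h(t,x) := \begin{cases} V^+(t,x)\wedge(\varphi(t,x)-\delta), & (t,x)\in\overline B,\\ V^+(t,x), & (t,x)\notin\overline B.\end{cases}
\]
Since $\tilde h(t_0,x_0)=V^+(t_0,x_0)-\delta<V^+(t_0,x_0)$, once we verify $\tilde h\in\cV^+$ the minimality of $V^+$ is contradicted. The boundary estimate $V^+\leq\varphi-2\delta$ forces $\tilde h=V^+$ on $\partial B$, so $\tilde h$ glues upper semi-continuously, yielding $(\cV^+_1)$; $(\cV^+_2)$ and $(\cV^+_3)$ are immediate from $\tilde h\leq V^+$ and the fact that $B$ is away from $t=T$. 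Property $(\cV^+_5)$ follows on $\overline B$ by case analysis: either $\tilde h=V^+\geq\cM[V^+]\geq\cM[\tilde h]$ (by $(\cV^+_5)$ for $V^+$ and $\tilde h\leq V^+$), or $\tilde h=\varphi-\delta\geq\varphi-2\delta\geq\cM[V^+]\geq\cM[\tilde h]$; outside $\overline B$ it is inherited from $V^+$.

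The main obstacle is verifying $(\cV^+_4)$ for $\tilde h$. Fix $u=(\pi,\nu)\in\cA(x)$, $k\in\N_0$, $\cY^\nu$-stopping times $\theta\leq\rho$ in $[\tilde\tau_k,\tilde\tau_{k+1}]\cap[t,T]$, and a square-integrable $\xi$. Since no impulses occur on $[\theta,\rho)$, $X^{u;\theta,\xi}$ is continuous there, and we set $\theta^*:=\inf\{s\in[\theta,\rho):(s,X^{u;\theta,\xi}_s)\notin B\}\wedge\rho$. If $\tilde h(\theta,\xi)=V^+(\theta,\xi)$, the inequality is immediate from $(\cV^+_4)$ for $V^+$ and $\tilde h\leq V^+$. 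Otherwise, $\xi\in\overline B$ and $\tilde h(\theta,\xi)=\varphi(\theta,\xi)-\delta$, and we apply It\^o's formula to $\varphi$ on $[\theta,\theta^*]$. The compactness of $\Pi$, the structure of the drift and diffusion in \eqref{eq:def-drift-coefficients}--\eqref{eq:def-diffusion-coefficients}, and the inequality $F\geq\eta$ on $\overline B$ imply that $\varphi(\cdot,X^{u;\theta,\xi}_\cdot)$ has non-positive drift on $[\theta,\theta^*]$ and the local martingale part is a true martingale, yielding
\[
\varphi(\theta,\xi)-\delta \geq \E\bigl[(\varphi-\delta)\bigl(\theta^*,X^{u;\theta,\xi}_{\theta^*-}\bigr)\,\big|\,\cY^\nu_\theta\bigr].
\]
When $\theta^*<\rho$, the state exits $B$ continuously, so $V^+\leq\varphi-2\delta$ on $\partial B$ gives $\varphi-\delta\geq V^+$ at $\theta^*$; when $\theta^*=\rho$, the state remains in $\overline B$ up to $\rho-$ and $\tilde h(\rho,X^u_{\rho-})\leq\varphi(\rho,X^u_{\rho-})-\delta$ by construction. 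Invoking $(\cV^+_4)$ for $V^+$ on $[\theta^*,\rho]$ and using $\tilde h\leq V^+$ chains the inequalities to $\tilde h(\theta,\xi)\geq\E[\tilde h(\rho,X^{u;\theta,\xi}_{\rho-})\mid\cY^\nu_\theta]$. The delicate point is the two-scale margin calibration (interior $\delta$ versus boundary $2\delta$), which is exactly what is needed for the It\^o-based supermartingale estimate on $\varphi-\delta$ to glue cleanly to the $V^+$ supermartingale property across $\partial B$; this coordination of a regular-control supermartingale estimate with the stochastic supersolution framework is the essential new element compared with pure impulse control problems.
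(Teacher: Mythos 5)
Your proposal is correct and follows essentially the same approach as the paper's proof: a stochastic Perron contradiction argument in which a perturbed test function is patched into $V^+$ on a small ball, and membership in $\cV^+$ is verified by an It\^{o}-based supermartingale estimate up to the exit time of the ball, spliced with $(\cV^+_4)$ for $V^+$ beyond that exit. The only differences are cosmetic: the paper uses a nested pair of balls $\cB_{\delta/2}\subset\cB_\delta$ and a free perturbation parameter $\eta\in(0,\eta_0)$ where you use a single ball $B$ with the two-margin calibration $\delta$ versus $2\delta$, and the paper organizes the splicing via the explicit event $A$ and its complement where you phrase it as a case distinction at the initial point; both bookkeeping schemes serve the same purpose.
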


Conversely, it is not difficult to show that $V^+$ is also a viscosity supersolution
of the HJBQVI. This is in fact an immediate consequence of the properties of members
of $\cV^+$ and hence true for a much larger class of functions.

\begin{proposition}\label{th:V+-is-visco-super}
Every Borel-measurable function $h:\overline{\S}_T\to\R$ satisfying $(\cV^{+}_{2})$ to $(\cV^{+}_{5})$
in the definition of
stochastic supersolutions is a viscosity supersolution of the HJBQVI. In particular,
this is true for the choice of $h = V^+$, in which case it furthermore holds that
\[
  0 \leq (V^+)_*(t,x) \leq K\bigl(1 + |x|^{1-\alpha}\bigr),\quad (t,x)\in\overline{\S}_T,
\]
for some constant $K>0$ and
\[
  (V^+)_*(T,x) = U(w),\quad x=(w,p)\in\S.
\]
\end{proposition}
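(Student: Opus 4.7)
The plan is to verify the two components of the viscosity supersolution definition separately, and then to read off the conclusions for $V^+$.

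For the intervention branch, monotonicity of $\cM$ in its function argument combined with $h \geq h_*$ yields $\cM[h] \geq \cM[h_*]$, and together with $(\cV^+_5)$ one obtains $h \geq \cM[h_*]$ on all of $\overline{\S}_T$. Passing to lower semi-continuous envelopes gives $h_* \geq \cM[h_*]_*$ everywhere, so the intervention inequality holds at every candidate test point without invoking a test function.

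The PDE branch is the main content, which I would handle by a standard Perron-type contradiction driven by $(\cV^+_4)$. Fix $(t_0, x_0) \in \S_T$ and $\varphi \in \cC^{1,2}(\S_T)$ such that $h_* - \varphi$ attains a strict global minimum at $(t_0, x_0)$ with matching values, strictness being arranged by the usual quadratic penalization. Assume for contradiction that $F(x_0, \varphi_t, \rD_x\varphi, \rD_x^2\varphi) < 0$. Since $\Pi$ is compact and $\cH$ is continuous in $\pi$, pick $\pi^* \in \Pi$ attaining the maximum in the Hamiltonian; by continuity of all coefficients, the strict bound $\varphi_t + \cH(\cdot, \pi^*, \rD_x\varphi, \rD_x^2\varphi) \geq \eta > 0$ extends to a closed ball $\overline{B_r(t_0, x_0)}$. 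Then along a sequence $(t_n, x_n) \to (t_0, x_0)$ with $h(t_n, x_n) \to h_*(t_0, x_0)$, I would apply $(\cV^+_4)$ with $k = 0$, $\theta = t_n$, $\xi = x_n$, the admissible strategy $u = (\pi^*, \nu^\emptyset)$ consisting of constant trading at $\pi^*$ and no expert opinion purchases (so $\tilde\tau_1 = +\infty$), and the stopping time $\rho_n := \inf\{s \geq t_n : (s, X^{u; t_n, x_n}_s) \notin B_r(t_0, x_0)\} \wedge (t_n + \delta)$ for a fixed small $\delta > 0$. Itô's formula applied to $\varphi$ along the controlled diffusion, which remains in $\overline{B_r}$ so that the stochastic integral is a true martingale, combined with the strict lower bound and $(\cV^+_4)$ produces
\[
h(t_n, x_n) \geq \E\bigl[\varphi(\rho_n, X^u_{\rho_n})\bigr] \geq \varphi(t_n, x_n) + \eta\, \E[\rho_n - t_n].
\]
Since the diffusion starting near the centre of $B_r$ cannot exit instantly, $\liminf_n \E[\rho_n - t_n] > 0$, and letting $n \to \infty$ yields the contradiction $h_*(t_0, x_0) > h_*(t_0, x_0)$. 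The main obstacle here is conceptual rather than computational: one must check that the supermartingale-type property $(\cV^+_4)$, which is only required between expert opinion purchases, is strong enough to substitute for the usual sub-/super-optimality principle in Ishii's classical method, which it is precisely because the contradiction is engineered inside a neighbourhood where no opinions need be purchased.

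The conclusions for $V^+$ are then immediate. By the countable-infimum reduction invoked right before the proposition, $V^+ \in \cV^+$, so the preceding argument applies with $h = V^+$. The growth bound follows directly from $(\cV^+_2)$. For the terminal boundary value, the upper bound $(V^+)_*(T, x) \leq V^+(T, x) = U(w)$ is given by Theorem~\ref{th:V+-is-viscosity-subsolution}, while the matching lower bound comes from the trivial admissible strategy $\pi \equiv 0$ with no purchases: for any $(t, w, p) \in \overline{\S}_T$ this yields $W^{u; t, w, p}_T = w$ and hence $V^+(t, w, p) \geq V(t, w, p) \geq U(w)$, whence $(V^+)_*(T, w, p) \geq U(w)$ by continuity of $U$.
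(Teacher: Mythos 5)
Your proof is correct and follows the paper's blueprint closely. The intervention branch is handled identically: the monotonicity chain $h \geq \cM[h] \geq \cM[h_*] \geq \cM[h_*]_*$ together with the maximality of $h_*$ among lower semi-continuous minorants of $h$ gives $h_* \geq \cM[h_*]_*$ pointwise, with no test function needed. The one substantive difference lies in the PDE branch. You argue by contradiction, selecting the maximizing $\pi^*$, fixing a small $\delta$, and relying on the exit-time lower bound $\liminf_n \E[\rho_n - t_n] > 0$ to produce the strict inequality that kills the contradiction. The paper argues directly for an arbitrary $\pi \in \Pi$: it introduces an adaptive time step $\delta_k \to 0$ satisfying $\gamma_k/\delta_k \to 0$ where $\gamma_k := h(t_k,x_k) - \varphi(t_k,x_k)$, divides by $\delta_k$, and extracts the pointwise inequality $-\varphi_t(\bar t,\bar x) - \cH(\bar x,\pi,\cdot) \geq 0$ by the mean-value theorem as $k\to\infty$; maximizing over $\pi$ afterwards gives $F\geq 0$. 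The two devices trade off: your route requires an exit-time estimate that must hold uniformly along the approximating sequence $(t_n,x_n)$ (which does hold here since the coefficients are bounded on a fixed ball and $t_0 < T$), whereas the paper's adaptive $\delta_k$ avoids that estimate entirely at the cost of the slightly more delicate coupling of scales. For the terminal condition you go through the value function ($V^+ \geq V \geq U$ via $\pi \equiv 0$ and no purchases, then take $\liminf$ using continuity of $U$), whereas the paper applies $(\cV^+_4)$ directly along the approximating sequence; both give the same result.
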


Combining the subsolution property in Theorem~\ref{th:V+-is-viscosity-subsolution}
and the supersolution property in Proposition~\ref{th:V+-is-visco-super}
with the comparison principle in Theorem~\ref{th:comparison-principle} yields the
final conclusion of this section, the characterization of $V^+$ as the unique
continuous viscosity solution of the HJBQVI.

\begin{corollary}\label{th:V+-is-viscosity-solution}
$V^+$ is the unique continuous viscosity solution of \eqref{eq:HJBQVI} in the
class of non-negative functions satisfying the growth condition~\eqref{eq:comparison_growth}
and the boundary condition~\eqref{eq:boundary_condition}.
\end{corollary}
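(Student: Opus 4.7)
The plan is to assemble the corollary from three ingredients already at hand: Theorem~\ref{th:V+-is-viscosity-subsolution} (subsolution property of $V^+$ with the growth bound and the full boundary data~\eqref{eq:boundary_condition}), Proposition~\ref{th:V+-is-visco-super} (supersolution property of $V^+$ with the growth bound and the terminal data for $(V^+)_*$), and the comparison principle Theorem~\ref{th:comparison-principle}. The strategy is first to upgrade $V^+$ to a continuous function by sandwiching it between its semi-continuous envelopes via comparison, and then to deduce uniqueness by two further applications of the comparison principle.

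For continuity, I would start from the observation that $V^+\in\cV^+$ and hence $V^+\in\USC$, so $(V^+)^*=V^+$, while trivially $(V^+)_*\leq V^+$. To establish the reverse inequality, apply Theorem~\ref{th:comparison-principle} with $u:=V^+$ as the USC subsolution and $v:=(V^+)_*$ as the LSC supersolution. The growth condition~\eqref{eq:comparison_growth} for $u$ is furnished by Theorem~\ref{th:V+-is-viscosity-subsolution} and for $v$ by Proposition~\ref{th:V+-is-visco-super}. For the boundary data, Theorem~\ref{th:V+-is-viscosity-subsolution} yields $u(t,0,p)=0$ and $u(T,x)=U(w)$, while Proposition~\ref{th:V+-is-visco-super} yields $v(T,x)=U(w)$; hence $u(T,\argdot)\leq v(T,\argdot)$. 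Theorem~\ref{th:comparison-principle} therefore gives $V^+\leq (V^+)_*$, and combined with $(V^+)_*\leq V^+=(V^+)^*$ this shows $V^+=(V^+)_*=(V^+)^*$, i.e.\ $V^+$ is continuous.

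For uniqueness, let $\tilde V$ be any continuous viscosity solution of \eqref{eq:HJBQVI} satisfying~\eqref{eq:comparison_growth} and~\eqref{eq:boundary_condition}. Since $\tilde V$ is continuous, $\tilde V^*=\tilde V=\tilde V_*$, and both the growth bound and the boundary conditions are met by assumption. Applying Theorem~\ref{th:comparison-principle} once with $u:=V^+$ (subsolution) against $v:=\tilde V$ (supersolution) and once with $u:=\tilde V$ (subsolution) against $v:=V^+$ (supersolution) yields $V^+\leq \tilde V$ and $\tilde V\leq V^+$, respectively. Hence $\tilde V=V^+$.

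The substantive work has been done in the preceding results, and this corollary is essentially a consequence of their combination; the only point that requires care is that the comparison principle is applied to $V^+$ and its envelope $(V^+)_*$, which presupposes that $(V^+)_*$ inherits both the correct growth and the correct terminal trace. These facts are exactly what Proposition~\ref{th:V+-is-visco-super} provides, so no further argument beyond bookkeeping is needed.
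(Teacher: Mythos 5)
Your proof is correct and takes essentially the same approach as the paper, which simply asserts (without an explicit written proof) that the corollary follows by combining Theorem~\ref{th:V+-is-viscosity-subsolution}, Proposition~\ref{th:V+-is-visco-super}, and Theorem~\ref{th:comparison-principle}. Your careful bookkeeping of the hypotheses is right: in particular, you correctly observe that applying the comparison principle with $v:=(V^+)_*$ is legitimate since $(V^+)_*$ is LSC with $((V^+)_*)_*=(V^+)_*$, so the supersolution property of $V^+$ (which is defined through its lower envelope) transfers to $(V^+)_*$ itself, and the growth bound and terminal trace for $(V^+)_*$ come from Proposition~\ref{th:V+-is-visco-super} exactly as you say.
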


\section{An Exit Time Control Problem and Optimal Strategies}\label{sec:exit-time-problem}

The HJBQVI suggests that the optimal expert opinion strategy is characterized
by the partition
\[
  \cont := \bigl\{(t,x)\in\overline{\S}_T : V^+(t,x) > \cM[V^+](t,x)\bigr\},
  \quad
  \interv := \bigl\{(t,x)\in\overline{\S}_T : V^+(t,x) = \cM[V^+](t,x)\bigr\}
\]
of the state space $\overline{\S}_T$ in the sense that in the ``continuation''
region $\cont$, it is strictly suboptimal to purchase an expert opinion whereas
it is optimal to purchase an expert opinion as soon as the time-augmented state
process $(t,X_t)$ enters the ``intervention'' region $\interv$. As wealth is finite,
purchases of expert opinions will eventually take the state process back into the
continuation region $\cont$. In particular, the optimally controlled state process
never spends a positive amount of time in $\interv$. To construct an optimal trading
strategy $\pi$, it should therefore suffice to restrict to the continuation
region $\cont$. This section paves the way for this argument by introducing a
suitable exit time control problem on $\cont$ and showing that its value function
coincides with the value function $V$ of the full information problem and the
pointwise minimum $V^+$ of the stochastic supersolutions.

\subsection{The Exit Time Control Problem and its Viscosity Characterization}

Throughout this section, we denote by
\[
  \cA^\circ(x) := \bigl\{\pi : (\pi,\circ)\in\cA(x)\bigr\}
\]
the set of admissible trading strategies corresponding to a fixed expert opinion
strategy $\circ$ which does not purchase any expert opinions on $[0,T]$. With
the expert opinion strategy fixed, we subsequently omit $\circ$ from our notation
of the state processes. Another consequence of $\circ$ being fixed is that the
innovations process $I=I^\circ$ and the filtration $\cY = \cY^\circ$ are fixed
in the present setting. Finally, note that $\cA^\circ(x)$ does not depend on $x$,
so we subsequently write $\cA^\circ$ for the set of admissible trading strategies of
the exit time control problem.

Next, we write $\partial\cont$ and $\overline{\cont}$
for the boundary and the closure of $\cont$, respectively. With this, let
\[
  \cS := \bigl\{ (t,x)\in\cont : t < T \bigr\}
  \quad\text{and}\quad
  \partial^*\cS := \bigl\{ (t,x)\in\partial\cont : t < T \bigr\}
    \cup \bigl\{(t,x)\in\overline{\cont} : t = T\bigr\}
\]
be the state space and the parabolic boundary of the state space of the exit
time control problem, respectively. In line with this notation, we furthermore
write $\overline{\cS} := \cS\cup\partial^*\cS = \overline{\cont}$ for the closure
of the state space. For any initial datum $(t,x)\in\overline{\cS}$ and any
admissible trading strategy $\pi\in\cA^\circ$, we write
\[
  \tau_{\cS}^\pi := \inf\bigl\{ s\in[t,T] : X^{\pi;t,x}_s \not\in\cS\bigr\}
\]
for the first exit time of $X^{\pi;t,x}$ from the state space. With this, the cost
functional and the value function of the exit time control problem are given by
\[
  \cJ^\cont : \cA^\circ \times \overline{\cS} \to [0,\infty),
  \quad (\pi,t,x)\mapsto \cJ^\cont(\pi;t,x) := \E\bigl[V^+\bigl(\tau_\cS^\pi,X^{\pi;t,x}_{\tau_\cS^\pi}\bigr)\bigr]
  \quad\text{subject to \eqref{eq:state-aux} with $\nu=\circ$.}
\]
and
\begin{equation}\label{eq:def-VF-exit}\tag{EP}
  V^\cont : \overline{\cS}\to[0,\infty),
  \quad
  (t,x) \mapsto V^\cont(t,x) := \sup_{\pi\in\cA^\circ}\cJ^\cont(\pi;t,x).
\end{equation}
Finally, for the sake of completeness, let us mention that the growth estimate
in Lemma~\ref{lem:quad-growth} is still valid for any state process $X^{\pi;t,x}$
since $(\pi,\circ)\in\cA(x)$ and $\overline{\cS}\subseteq\overline{\S}_T$.

The goal of this section is to show that $V^\cont = V^+$ on $\overline{\cS}$.
For this,
we once again employ a version of the stochastic Perron's method. More precisely,
we begin by introducing the set of stochastic subsolutions of the exit time control
problem and show that their pointwise maximum $V^-$ is a viscosity supersolution
of the Hamilton--Jacobi--Bellman (HJB) equation
\begin{equation}\label{eq:hjb_exit}\tag{HJB}
  F\bigl(x,V_t^-(t,x),\rD_xV^-(t,x),\rD_x^2V^-(t,x)\bigr) = 0,\quad (t,x)\in\cS,
\end{equation}
satisfying the boundary inequality
\[
  V^-(t,x) \geq V^+(t,x),\quad (t,x)\in\partial^*\cS.
\]
Since, conversely, the pointwise minimum of the set of stochastic subsolutions of
the full information problem $V^+$ is easily seen to be a viscosity subsolution
of this HJB, another comparison principle shows that $V^- = V^+$ on $\overline{\cS}$.
Finally, we argue that $V^-\leq V^\cont \leq V^+$ on $\overline{\cS}$,
from which the desired characterization of $V^\cont$ follows.

\begin{definition}\label{def:Stoch-Sub}
The set of stochastic subsolutions of \eqref{eq:hjb_exit}, denoted by $\cV^-$,
is the set of functions $h:\overline{\cS}\to\R$ satisfying
\begin{enumerate}
\item[\emph{($\cV^-_1$)}] $h\in\LSCExit$;
\item[\emph{($\cV^-_2$)}] $h$ is lower bounded and there exists $K>0$ such that
\[
  h(t,x) \leq K\bigl(1 + |x|^{1-\alpha}\bigr),\quad (t,x)\in\overline{\cS};
\]
\item[\emph{($\cV^-_3$)}] $h$ satisfies the boundary inequality
\[
  h(t,x) \leq V^+(t,x),\quad (t,x)\in\partial^*\cS;
\]
\item[\emph{($\cV^-_4$)}] for each stopping
time $\theta$ taking values in $[0,T]$ and each $\cY_\theta$-measurable
random variable $\xi$ with $\E[|\xi|^2]<\infty$, there exists $\pi\in\cA^\circ$
such that
\[
  h(\theta,\xi) \leq \E\bigl[h(\rho,X^{\pi;\theta,\xi}_{\rho})\big|\cY_\theta\bigr]
\]
for any stopping time $\rho$ with values in $[\theta,\tau^\pi_\cS]$.
\end{enumerate}
\end{definition}

Once again, it is straightforward to see that the set of stochastic subsolutions
is non-empty since, e.g., $0\in\cV^-$. It is furthermore straightforward to see
that each $h\in\cV^-$ is dominated by $V^\cont$. Indeed, for each $(t,x)\in\overline{\cS}$
we can apply $(\cV^-_4)$ with $(\theta,\xi):=(t,x)$ to find $\pi\in\cA^\circ$ such that,
with $\rho:=\tau^\pi_\cS$,
\[
  h(t,x) \leq \E\bigl[h(\tau^\pi_\cS,X^{\pi;t,x}_{\tau^\pi_\cS})\bigr]
  \leq \sup_{\pi\in\cA^\circ}\E\bigl[h(\tau^\pi_\cS,X^{\pi;t,x}_{\tau^\pi_\cS})\bigr]
  \leq \sup_{\pi\in\cA^\circ}\E\bigl[V^+(\tau^\pi_\cS,X^{\pi;t,x}_{\tau^\pi_\cS})\bigr]
  = V^\cont(t,x),
\]
where the last inequality is due to the boundary inequality $(\cV^-_3)$.
Next, we introduce the pointwise supremum
\[
  V^-:\overline{\cS}\to[0,\infty),
  \quad (t,x) \mapsto V^-(t,x) := \sup_{h\in\cV^-} h(t,x),
\]
and we proceed to show that $V^-$ is a viscosity subsolution of \eqref{eq:hjb_exit}.
As this is once again a classical stochastic Perron argument very similar to
\cite{bayraktar2013stochastic,rokhlin2014verification}, the proof is deferred to the
appendix.

\begin{theorem}\label{th:V--is-viscosity-supersolution}
$V^{-}$ is a viscosity supersolution of \eqref{eq:hjb_exit} satisfying
\[
  0 \leq V^-(t,x) \leq V^\cont(t,x) \leq V^+(t,x) \leq K\bigl(1 + |x|^{1-\alpha}\bigr),
  \quad (t,x)\in\overline{\cS},
\]
for a constant $K>0$ and such that
\[
  V^-(t,x) = V^+(t,x),\quad (t,x)\in\partial^*\cS.
\]
\end{theorem}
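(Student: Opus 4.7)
The plan splits into three parts: the inequality chain with growth bound, the viscosity supersolution property, and the boundary equality $V^-=V^+$ on $\partial^*\cS$.

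\emph{Inequalities.} Non-negativity follows from $0\in\cV^-$, and $V^-\leq V^\cont$ is the argument already sketched in the text preceding the theorem---apply $(\cV^-_4)$ at $(t,x)\in\overline{\cS}$ with $\rho=\tau^\pi_\cS$, replace $h$ by $V^+$ at the exit point via $(\cV^-_3)$, and take suprema in $h$ and $\pi$. For $V^\cont\leq V^+$, I fix $\pi\in\cA^\circ$ and invoke the supermartingale-type property $(\cV^+_4)$ of $V^+$ (Proposition~\ref{th:V+-is-visco-super}) for the admissible strategy $(\pi,\circ)\in\cA(x)$: since no expert opinions are purchased, the batch interval $[\tilde\tau_0,\tilde\tau_1]\cap[t,T]$ equals $[t,T]$, and with $\theta=t$, $\rho=\tau^\pi_\cS$ we obtain $V^+(t,x)\geq\E[V^+(\tau^\pi_\cS,X^{\pi;t,x}_{\tau^\pi_\cS-})]=\E[V^+(\tau^\pi_\cS,X^{\pi;t,x}_{\tau^\pi_\cS})]$, the last equality holding because the state process has no jumps in the absence of expert opinion purchases; the supremum over $\pi$ then yields the bound. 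The growth estimate is Corollary~\ref{th:V+-is-viscosity-solution}.

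\emph{Supersolution property.} I follow the classical stochastic Perron contradiction argument, noting that $V^-$ is LSC as the pointwise supremum of the LSC family $\cV^-$. Suppose $V^-$ fails the supersolution inequality at some $(t_0,x_0)\in\cS$: there exists $\varphi\in\cC^{1,2}(\cS)$ with $V^--\varphi$ attaining a strict global minimum at $(t_0,x_0)$, $V^-(t_0,x_0)=\varphi(t_0,x_0)$, and $F(x_0,\varphi_t,\rD_x\varphi,\rD_x^2\varphi)(t_0,x_0)<0$. By definition of $F$, some $\pi_0\in\Pi$ satisfies $-\varphi_t-\cH(\cdot,\pi_0,\rD\varphi,\rD^2\varphi)<0$ strictly on a small ball $B\subset\cS$ around $(t_0,x_0)$. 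Form the competitor $\tilde h:=V^-\vee(\varphi+\eta\zeta)$ for a smooth bump $\zeta$ supported in $B$ with $\zeta(t_0,x_0)>0$ and sufficiently small $\eta>0$; a concatenation argument---use the constant trading strategy $\pi\equiv\pi_0$ inside $B$ together with It\^o's formula applied to $\varphi+\eta\zeta$, and the $(\cV^-_4)$-strategy of $V^-$ outside $B$---shows $\tilde h\in\cV^-$. Since $\tilde h(t_0,x_0)>V^-(t_0,x_0)$, this contradicts the maximality of $V^-$.

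\emph{Boundary equality and main obstacle.} The bound $V^-\leq V^+$ on $\partial^*\cS$ is immediate from $(\cV^-_3)$ by taking suprema. For the reverse, I first observe that $V^\cont=V^+$ on $\partial^*\cS$: any $\pi\in\cA^\circ$ exits $\cS$ immediately from such a point---either $t_0=T$ leaves no time to trade, or $x_0\in\partial\cont\cap\{t<T\}$ lies outside the open set $\cS$---so $\tau^\pi_\cS=t_0$ uniformly in $\pi$ and $\cJ^\cont(\pi;t_0,x_0)=V^+(t_0,x_0)$. To propagate this to $V^-$, I would construct for each $\epsilon>0$ and $(t_0,x_0)\in\partial^*\cS$ a stochastic subsolution $h_\epsilon\in\cV^-$ with $h_\epsilon(t_0,x_0)\geq V^+(t_0,x_0)-\epsilon$: by continuity of $V^+$, choose a neighborhood $U\subset\overline{\cS}$ of $(t_0,x_0)$ on which $V^+\geq V^+(t_0,x_0)-\epsilon$, and take $h_\epsilon$ as an appropriate continuous truncation localized in $U$ with ceiling $V^+(t_0,x_0)-\epsilon$; letting $\epsilon\downarrow 0$ then concludes. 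The main obstacle is verifying property $(\cV^-_4)$ for $h_\epsilon$, since a bump-type function has no automatic submartingale structure along the controlled dynamics. A natural remedy is to define $h_\epsilon$ implicitly as the value function of an auxiliary exit-time problem from $U\cap\cS$ with target $V^+(t_0,x_0)-\epsilon$ so that a dynamic-programming argument furnishes an $\epsilon$-optimal control on which to base the subsolution, which can then be truncated and localized to satisfy $(\cV^-_1)$--$(\cV^-_3)$.
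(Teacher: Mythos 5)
Your first two parts are essentially aligned with the paper. The inequality chain is argued the same way (the $V^\cont\leq V^+$ step via $(\cV^+_4)$ with $\rho=\tau^\pi_\cS$ and the observation that $X^{\pi;t,x}$ has no jumps so $X_{\tau^\pi_\cS-}=X_{\tau^\pi_\cS}$ is exactly the paper's argument). For the interior supersolution property, the paper uses a \emph{constant} shift $\varphi^\eta=\varphi+\eta$ rather than a smooth bump $\eta\zeta$; this is slightly cleaner because the constant shift does not change the derivatives of the test function, so the strict PDE inequality on the ball is inherited automatically, whereas with a bump you must also ensure $\eta$ is small enough that the extra $\eta\zeta$-terms do not destroy the strict inequality $-\varphi_t-\cH(\cdot,\pi_0,\rD\varphi,\rD^2\varphi)<0$. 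That is a minor technicality and your outline survives.

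The boundary equality is where your proposal has a genuine gap. You correctly observe that $V^\cont=V^+$ on $\partial^*\cS$ (trivially, since $\tau^\pi_\cS=t_0$ for all $\pi$ at such points), but this observation does not transfer to $V^-$, and your proposed construction of a stochastic subsolution $h_\epsilon$ near the boundary point is never actually carried out. You yourself flag the obstruction: a localized constant or truncation has no reason to satisfy the submartingale property $(\cV^-_4)$, and the ``auxiliary exit-time problem'' idea you float would need its own existence/regularity theory and there is no indication that its value function would lie in $\cV^-$ (in particular it would need to be lower semicontinuous, satisfy the growth bound, and crucially be dominated by $V^+$ on $\partial^*\cS$, none of which is addressed). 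The paper avoids all of this with a classical barrier: assuming $V^-(\bar{t},\bar{x})<V^+(\bar{t},\bar{x})$, it introduces the explicit $\cC^{1,2}$ function
\[
  \psi(t,x) := V^-(\bar{t},\bar{x}) - \tfrac{1}{\beta}|x-\bar{x}|^2 - K(\bar{t}-t),
\]
chooses $K$ large so $\psi_t=K$ dominates the (bounded) Hamiltonian and hence $F(x,\psi_t,\rD_x\psi,\rD^2_x\psi)<0$ on a small box, chooses $\beta,\delta,\epsilon$ so that $\psi\leq V^--\epsilon/2$ on the annulus (ensuring the glued function is LSC and equals $V^-$ outside) and $\psi\leq V^+-\epsilon/2$ on the whole box (ensuring $(\cV^-_3)$), and then obtains $(\cV^-_4)$ for $h^\eta=V^-\vee\psi^\eta$ by the same It\^o/concatenation argument as in Step 1, since $\psi$ is a smooth classical subsolution of the PDE on the box. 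The contradiction with maximality of $V^-$ then follows as before. This barrier construction is precisely the missing idea in your third part.
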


Using the viscosity supersolution property and given our results in
Section~\ref{sec:viscosity-characterization}, it is now straightforward to show that
$V^- = V^+ = V^\cont$ on $\overline{\cS}$. Indeed, we argue below that $V^+$ is a
viscosity subsolution of \eqref{eq:hjb_exit} and another comparison principle then
implies the result.

\begin{corollary}\label{cor:V+-restricted-subsolution}
The restriction of $V^+$ to $\overline{\cS}$ is a viscosity subsolution of
\eqref{eq:hjb_exit} on $\cS$.
\end{corollary}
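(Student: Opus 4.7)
The plan is to reduce this claim directly to the already-established fact that $V^+$ is the unique continuous viscosity solution of \eqref{eq:HJBQVI} (Corollary~\ref{th:V+-is-viscosity-solution}), together with the definition of the continuation region $\cont$.

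First, I fix $(t_0,x_0)\in\cS$ and a test function $\varphi\in\cC^{1,2}$ such that $V^+-\varphi$ attains a (strict) global maximum at $(t_0,x_0)$ with $V^+(t_0,x_0)=\varphi(t_0,x_0)$. Since $\cS\subseteq\S_T$, the subsolution property of $V^+$ for \eqref{eq:HJBQVI} applies and yields
\[
  \min\Bigl\{F\bigl(x_0,\varphi_t(t_0,x_0),\rD_x\varphi(t_0,x_0),\rD_x^2\varphi(t_0,x_0)\bigr),\;
  (V^+)^*(t_0,x_0) - \cM[(V^+)^*]^*(t_0,x_0)\Bigr\}\leq 0.
\]
Because $V^+$ is continuous, $(V^+)^*=V^+$. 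Moreover, since $V^+\in\USC$, Lemma~\ref{lem:regularity-intervention}(i) gives $\cM[V^+]\in\USC$, so $\cM[V^+]^*=\cM[V^+]$. The displayed inequality therefore simplifies to
\[
  \min\Bigl\{F\bigl(x_0,\varphi_t(t_0,x_0),\rD_x\varphi(t_0,x_0),\rD_x^2\varphi(t_0,x_0)\bigr),\;
  V^+(t_0,x_0)-\cM[V^+](t_0,x_0)\Bigr\}\leq 0.
\]

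The final step is to observe that, by the very definition of $\cont$, the point $(t_0,x_0)\in\cS\subseteq\cont$ satisfies $V^+(t_0,x_0)>\cM[V^+](t_0,x_0)$, so the second entry of the minimum is strictly positive. The only way the minimum can be nonpositive is therefore that
\[
  F\bigl(x_0,\varphi_t(t_0,x_0),\rD_x\varphi(t_0,x_0),\rD_x^2\varphi(t_0,x_0)\bigr)\leq 0,
\]
which is precisely the viscosity subsolution inequality for \eqref{eq:hjb_exit} at $(t_0,x_0)$.

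There is no real obstacle here; the argument is essentially bookkeeping, and the only points that require a moment's thought are the two facts that make the envelopes disappear (continuity of $V^+$ and upper semi-continuity preservation by $\cM$) and the strict positivity of $V^+-\cM[V^+]$ on $\cont$. If one wishes to work with only a local maximum of $V^+-\varphi$ on $\cS$ rather than a global one on $\S_T$, a standard cut-off/modification of $\varphi$ outside a small neighborhood of $(t_0,x_0)$ turns the local maximum into a strict global maximum without changing the value of $\varphi$ and its derivatives at $(t_0,x_0)$, which suffices for the argument above.
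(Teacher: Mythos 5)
Your proposal is correct and follows essentially the same approach as the paper, whose one-line proof simply notes that the claim follows from $V^+$ being a viscosity subsolution of~\eqref{eq:HJBQVI} together with $V^+>\cM[V^+]$ on $\cS$; you spell out the bookkeeping (test function, envelope removal via continuity of $V^+$ and Lemma~\ref{lem:regularity-intervention}, and strict positivity of $V^+-\cM[V^+]$ on $\cS$ forcing the PDE branch of the minimum to be nonpositive), and you correctly flag the need for a local-to-global test-function modification when passing between the open set $\cS$ and $\S_T$.
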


\begin{proof}
This is an immediate consequence of the fact that $V^+$ is a viscosity subsolution
of \eqref{eq:HJBQVI} together with $V^+ > \cM[V^+]$ on $\cS$.
\end{proof}

Next, we observe that the comparison principle for $\eqref{eq:hjb_exit}$ follows from
a straightforward adaptation of the proof the comparison principle for $\eqref{eq:HJBQVI}$.

\begin{theorem}\label{th:comparison-principle-exit-time}
Let $u\in\USCExit$ be a viscosity subsolution and $v\in\LSCExit$ be a viscosity
supersolution of \eqref{eq:hjb_exit} for which there exists a constant $K>0$ such that
\[
  0 \leq u(t,x),v(t,x) \leq K\bigl(1 + |x|^{1-\alpha}\bigr),\quad (t,x)\in\overline{\cS}.
\]
If, in addition, $u$ and $v$ satisfy the boundary condition
\[
  u \leq v\quad\text{on }\partial^*\cS,
\]
then $u\leq v$ on all of $\overline{\cS}$.
\end{theorem}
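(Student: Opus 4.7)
The plan is to run the same perturbation-plus-doubling-of-variables argument used for Theorem~\ref{th:comparison-principle}, with the intervention term simply deleted throughout. Concretely, I first note that Lemma~\ref{lem:perturbation} has an immediate HJB analogue: for $\beta\in(0,\alpha)$ and $\psi$, $\kappa$ as in Lemma~\ref{lem:strict-classical-super}, the perturbations
\[
  u^{\rho} := \frac{\rho+1}{\rho}u - \frac{1}{\rho}\psi,
  \qquad
  v^{\rho} := \frac{\rho-1}{\rho}v + \frac{1}{\rho}\psi
\]
are, respectively, viscosity sub- and supersolutions on $\cS$ of the strictly perturbed equations
\[
  F\bigl(x,u^\rho_t,\rD_x u^\rho,\rD_x^2 u^\rho\bigr) + \tfrac{1}{\rho}\kappa(t,x) = 0
  \quad\text{and}\quad
  F\bigl(x,v^\rho_t,\rD_x v^\rho,\rD_x^2 v^\rho\bigr) - \tfrac{1}{\rho}\kappa(t,x) = 0.
\]
The proof of this statement is identical to that of Lemma~\ref{lem:perturbation} but shorter, as there is no intervention-operator bookkeeping to carry through. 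The point of the perturbation is twofold: it produces a strict PDE inequality with the gap $\kappa/\rho$, and, because we pick $\beta<\alpha$, the function $\psi(t,\argdot)$ grows faster than $K(1+|x|^{1-\alpha})$, so $u^{\rho}\to -\infty$ as $|x|\to\infty$ uniformly in $t$.

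The proof then proceeds by contradiction. Assume $M := \sup_{\overline{\cS}}(u-v) > 0$. Choose $A$ in the definition of $\psi$ large enough that $\psi \geq K(1+|x|^{1-\alpha}) \geq \tfrac{1}{2}(u+v)$ on $\overline{\cS}$, which is possible thanks to the growth bounds on $u$ and $v$. Writing $u^\rho - v^\rho = (u-v) + \tfrac{1}{\rho}(u+v-2\psi)$, this choice yields $u^\rho \leq v^\rho$ on $\partial^*\cS$ for every $\rho>1$, since $u\leq v$ there by hypothesis. On the other hand, by the growth of $\psi$ and the polynomial bound on $u$, the quantity $u^\rho-v^\rho$ is upper semi-continuous and tends to $-\infty$ at infinity, so for $\rho$ large the supremum of $u^\rho-v^\rho$ over $\overline{\cS}$ is strictly positive, remains close to $M$, and is attained at some point of the relatively compact set $\cS$.

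At this point I invoke the standard doubling-of-variables technique: introduce
\[
  \Phi_{n}(t,x,s,y) := u^\rho(t,x) - v^\rho(s,y) - \tfrac{n}{2}\bigl(|x-y|^2 + (t-s)^2\bigr)
\]
on $\overline{\cS}\times\overline{\cS}$, locate a maximizer $(\hat t_n,\hat x_n,\hat s_n,\hat y_n)$, and pass to the limit $n\to\infty$ in the usual way so that the maximizers converge to an interior maximum point of $u^\rho - v^\rho$ in $\cS$ (the boundary case being excluded because $u^\rho\leq v^\rho$ on $\partial^*\cS$ and the supremum is strictly positive). Applying the Crandall--Ishii lemma and the viscosity sub/supersolution inequalities at $(\hat t_n,\hat x_n)$ and $(\hat s_n,\hat y_n)$, the continuity of the Hamiltonian $\cH$ in $(x,r,M)$ uniformly in $\pi\in\Pi$ (which is compact) allows one to absorb all error terms arising from the penalty $\tfrac{n}{2}|x-y|^2$. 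The strict inequality $\tfrac{1}{\rho}\kappa(\hat t,\hat x)>0$ then yields the contradiction. Finally, letting $\rho\to\infty$ shows $u\leq v$ on $\overline{\cS}$.

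The main obstacle, and essentially the only place where adaptation is non-trivial, is verifying the boundary ordering $u^\rho \leq v^\rho$ on the geometrically complicated parabolic boundary $\partial^*\cS$, since unlike in the full-state-space case we must now rule out the maximum of $\Phi_n$ being attained on $\partial\cont\cap\{t<T\}$, not just on the terminal slice $\{t=T\}$ and the boundary $\{w=0\}$. This is precisely where the freedom to enlarge the additive constant $A$ in $\psi$ is used, together with the fact that the polynomial growth bound in~\eqref{eq:comparison_growth} is enforced on all of $\overline{\cS}$. Beyond that, the construction of the strict supersolution (Lemma~\ref{lem:strict-classical-super}) was done in a way that the PDE part of the strict inequality is obtained without any reference to $\cM[\psi]$, so nothing in the doubling-of-variables mechanism requires modification.
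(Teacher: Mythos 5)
Your proposal is correct and follows essentially the same route as the paper, which simply refers back to the proof of Theorem~\ref{th:comparison-principle}, instructing the reader to replace $\S_T$ by $\cS$, delete the intervention-operator Step 2 entirely, and replace the exclusion of the terminal slice $\{t=T\}$ by the observation that $u^\rho \leq v^\rho$ on all of $\partial^*\cS$ (using $u\leq v$ there and $\psi\geq u,v$) rules out boundary maximizers — exactly the modification you identify. One minor slip: $\cS$ itself is not relatively compact; what is compact is the superlevel set $\{u^\rho - v^\rho \geq 0\}$ (by upper semi-continuity and the coercivity of $\psi$), which is all that is needed for the doubling argument.
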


\begin{proof}
The proof works analogously to the proof of Theorem~\ref{th:comparison-principle}.
The only differences are that any occurrence of $\S_T$ has to be replaced by $\cS$,
step 2, handling of the non-locality of \eqref{eq:HJBQVI}, can be skipped, and
instead of arguing that $(t^*,x^*)$ cannot satisfy $t^*=T$, in the present situation
we have to show that $(t^*,x^*)\not\in\partial^*\cS$. This, however, is an
immediate consequence of the boundary condition $u\leq v$ on $\partial^*\cS$.
\end{proof}

Combining the supersolution property of $V^-$ in
Theorem~\ref{th:V--is-viscosity-supersolution} with the subsolution property of $V^+$
in Corollary~\ref{cor:V+-restricted-subsolution} together with the boundary identity
$\V^- = V^+$ on $\partial^*\cS$ and the general set of inequalitites
$V^-\leq V^\cont \leq V^+$ on $\overline{\cS}$, the comparison principle in
Theorem~\ref{th:comparison-principle-exit-time} implies that these three functions
actually coincide and constitute the unique continuous viscosity solution of
\eqref{eq:hjb_exit}.

\begin{corollary}
It holds that $V^- = V^\cont = V^+$ on $\overline{\cS}$ and $V^\cont$ is the unique
continuous viscosity solution of \eqref{eq:hjb_exit} in the class of non-negative
functions satisfying the growth condition~\eqref{eq:comparison_growth} and which are
equal to $V^+$ on $\partial^*\cS$.
\end{corollary}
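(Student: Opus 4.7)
The plan is to assemble the ingredients already established in this section and invoke the comparison principle for \eqref{eq:hjb_exit}. Concretely, I would apply Theorem~\ref{th:comparison-principle-exit-time} with $u := V^+$ and $v := V^-$, both restricted to $\overline{\cS}$. The subsolution property of $u$ is furnished by Corollary~\ref{cor:V+-restricted-subsolution}, while continuity (hence membership in $\USCExit$), nonnegativity, and the required growth bound for $V^+$ all come from Corollary~\ref{th:V+-is-viscosity-solution}. The supersolution property of $v$, together with its nonnegativity, the growth estimate, and the boundary identity $V^- = V^+$ on $\partial^*\cS$, is precisely the content of Theorem~\ref{th:V--is-viscosity-supersolution}; and $V^- \in \LSCExit$ because a pointwise supremum of lower semi-continuous functions is again lower semi-continuous. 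The boundary hypothesis $u \leq v$ on $\partial^*\cS$ then holds with equality.

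Invoking Theorem~\ref{th:comparison-principle-exit-time} yields $V^+ \leq V^-$ on all of $\overline{\cS}$. Combining this with the chain $V^- \leq V^\cont \leq V^+$ already recorded in Theorem~\ref{th:V--is-viscosity-supersolution} forces equality throughout, so $V^- = V^\cont = V^+$ on $\overline{\cS}$, and in particular $V^\cont$ inherits the continuity of $V^+$.

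For the uniqueness assertion, I would proceed as follows. Suppose $\tilde V$ is another continuous viscosity solution of \eqref{eq:hjb_exit} on $\cS$ that is nonnegative, satisfies the growth condition~\eqref{eq:comparison_growth}, and agrees with $V^+$ on $\partial^*\cS$. Continuity places $\tilde V$ simultaneously in $\USCExit$ and $\LSCExit$, and the boundary condition reads $\tilde V = V^\cont$ on $\partial^*\cS$. Two applications of Theorem~\ref{th:comparison-principle-exit-time}, first with $(u,v) = (\tilde V, V^\cont)$ and then with $(u,v) = (V^\cont, \tilde V)$, yield $\tilde V \leq V^\cont$ and $V^\cont \leq \tilde V$ on $\overline{\cS}$, hence $\tilde V = V^\cont$.

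I do not foresee a serious obstacle, as the analytical heavy lifting has already been carried out in the stochastic Perron construction for $V^-$, in the transfer of the subsolution property from \eqref{eq:HJBQVI} to \eqref{eq:hjb_exit} on the continuation region, and in the comparison principle itself. The only subtle point worth flagging is the lower semi-continuity of $V^-$, which is immediate from its definition as a pointwise supremum over the class $\cV^- \subseteq \LSCExit$.
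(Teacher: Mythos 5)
Your proof is correct and follows exactly the paper's own route: combine the subsolution property of $V^+$ (Corollary~\ref{cor:V+-restricted-subsolution}), the supersolution property and boundary identity for $V^-$ (Theorem~\ref{th:V--is-viscosity-supersolution}), the chain $V^- \leq V^\cont \leq V^+$, and the comparison principle of Theorem~\ref{th:comparison-principle-exit-time}, with uniqueness following from two further applications of the same comparison principle. No gaps.
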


\subsection{On the Construction of Optimal Strategies}

We conclude this article by showing how to construct optimal strategies. The
central observation is that the connection of the full information optimal
investment problem to the exit time control problem considered in the
previous subsection allows us to separate the construction of the optimal
trading strategy as the solution of the exit time control problem from
the construction of the optimal expert opinion strategy using a variant of
the impulse control verification arguments developed in the last decade
in \cite{belak2019utility,belak2017general,belak2022optimal,christensen2014solution}.

There are, however, a few subtle technicalities to take into account in the
construction of an optimal strategy. For the impulse control verification
procedure, a central assumption is the strong Markov property of the candidate
optimal state process.
Since trading strategies are only assumed to be progressively measurable,
there is no guarantee that the state processes in our problem
are Markov processes.
In \cite{haussmann1990existence}, existence of optimal feedback controls
for a class of stochastic control problems including our exit time control
problem (up to a minor upper boundedness assumption, which is not impedient
for the validity of their results)
has been studied. However,
unfortunately for our purposes, the result is in general only valid for
a weak formulation of the control problem, that is, on a particular
probability space with a particular filtration and Brownian motion.
Since we are using the strong formulation here, we cannot apply their
results directly.

There are several ways to work around these issues. One possibility is to
formulated the original problem in a weak formulation to begin with,
another is to allow for additional independent randomness to lift the weak
optimal control to a strong one. Both of these techniques are well-understood,
but come with a technical burden which in our opinion would distract from the main
contributions of this article. We therefore \emph{assume} below
that the optimal feedback control constructed in \cite{haussmann1990existence}
is sufficiently regular so that the stochastic differential equation
for the optimally controlled state process admits a strong solution.

Towards a precise formulation of our assumption, let us begin by
recalling one of the central results in \cite{haussmann1990existence},
translated to the notation of our article. According to \cite{haussmann1990existence},
there exists a measurable function
\[
  \hat{\pi} : \cS \to \Pi,\qquad (t,x)\mapsto \hat{\pi}(t,x)
\]
such that, for any initial state $(t,x)\in\cS$, the stochastic
differential equation
\begin{equation}\label{eq:opt-state-process}
  \rd \hat{X}_s = f\bigl(\hat{X}_s,\hat{\pi}(s,\hat{X}_s)\bigr)\rd s
    + \Sigma\bigl(\hat{X}_s,\hat{\pi}(s,\hat{X}_s)\bigr)\rd B_s,
  \quad \hat{X}_t = x,\quad s\in[t,T],
\end{equation}
admits a \emph{weak solution} such that, on the probability space on
which this solution exists, the control $\pi := \hat{\pi}(\argdot,\hat{X})$
is optimal, $\hat{X}$ is the associated optimal state process, and
$\hat{X}$ is a strong Markov process.
We can extend $\hat{\pi}$ to a measurable function defined on all of
$[0,T]\times\R\times\R^{N}$ by setting $\hat\pi(t,x) := \pi_0$ for some fixed
$\pi_0\in\Pi$. We expect that the feedback
function $\hat{\pi}$ also gives rise to an optimal trading strategy for the
full information problem. However, since we do not have any regularity of
$\hat{\pi}$ beyond measurability, the existence of a (strong) solution on our
fixed probability space with the innovations Brownian motion as driving noise
is, in general, not clear.

\begin{assumption}
The stochastic differential equation~\eqref{eq:opt-state-process} admits a
pathwise unique strong solution.
\end{assumption}

In particular, we can solve~\eqref{eq:opt-state-process} on our fixed probability
space for any choice of driving Brownian motion. Given a $[0,T]$-valued $\F$-stopping time
$\tau$, an $\R\times\R^N$-valued $\cF_\tau$-measurable random variable
$\xi$, and a Brownian motion $B$, we denote by
$\hat{X}^{\tau,\xi,B} = (\hat{X}^{\tau,\xi,B}_s)_{s\in[\tau,T]}$ the strong
solution started in $(\tau,\xi)$ with driving Brownian motion $B$.

Next, let us turn to the candidate for the optimal expert opinion strategy.
For this, recall the sets
\[
  \cont := \bigl\{(t,x)\in\overline{\S}_T : V^+(t,x) > \cM[V^+](t,x)\bigr\},
  \quad
  \interv := \bigl\{(t,x)\in\overline{\S}_T : V^+(t,x) = \cM[V^+](t,x)\bigr\}.
\]
The intuition of these sets is that in $\cont$, purchases of expert opinions are
strictly suboptimal whereas purchases of expert opinions are optimal in $\interv$
provided that the quality level is chosen as a maximizer of $\cM[V^+]$.
Since $V^+$ is continuous and $\cD(t,x)$ is compact, it follows from a standard
measurable selection argument as for example \cite[Theorem 5.3.1]{srivastava2008course}
that there exists a measurable
function
\[
  \hat{q}: \overline{\S}_T\setminus\overline{\S}_T^\emptyset \to [0,1],
  \quad (t,x) \mapsto \hat{q}(t,x)
\]
such that, for all $(t,x)\in\overline{\S}_T\setminus\overline{\S}_T^\emptyset$,
\[
  \hat{q}(t,x)\in\cD(t,x)
  \;\text{ and }\;
  \cM[V^+](t,x) = \sum_{n=1}^N p^n\E\Bigl[
    V^+\Bigl(t,\Gamma\bigl(\hat{q}(t,x)\mu^n + (1-\hat{q}(t,x))\cN,t,x,\hat{q}(t,x)\bigr)\Bigr)\Bigr].
\]
With the feedback maps $\hat{\pi}$ and $\hat{q}$ at hand, we can now
introduce the candidate optimal strategy $u^*=(\pi^*,\nu^*)$ with
$\nu^*=(\tau^*,q^*)$ as follows.
For an initial state $(t,x)\in\overline{\S}_T$, we set
$(\tau_0^*,\xi_0^*,I_0^*) := (t,x,I^\circ)$ and iteratively for all $k\in\N$
\begin{align*}
  \hat{X}^k &:= \hat{X}^{\tau_{k-1}^*,\xi_{k-1}^*,I_{k-1}^*},
  &
  \tau_k^* &:= \inf\bigl\{s\in[\tau_{k-1}^*,T] :
    \bigl(s,\hat{X}^{k}_s\bigr) \in \interv\bigr\},\\
  q_k^* &:= \hat{q}\bigl(\tau_k^*,\hat{X}^k_{\tau_k^*}\bigr)
    \ind_{\{\tau_k^*\leq T\}},
  &
  \xi_k^* &:= \Gamma\bigl(q_k^*\mu^\intercal Y_{\tau_k^*}
    + (1-q_k^*)\cN_k,\tau_k^*,\hat{X}^k_{\tau_k^*},q_k^*\bigr)
    \ind_{\{\tau_k^*\leq T\}},\\
  \nu_k^* &:= (\tau_j^*,q_j^*)_{0\leq j\leq k},
  &
  I^*_k &:= I^{\nu_k^*},
\end{align*}
and, with this,
\[
  \pi^* := \sum_{k=1}^\infty
    \hat{\pi}(\argdot,\hat{X}^k)\ind_{[\tau_{k-1}^*,\tau_k^*)\cap[0,T]}.
\]
To understand what is going on in the construction of $u^*$, let us first
write $X^* = (W^*,p^*)$ as short-hand notation for $X^{u^*;t,x}$. As long
as $X^*\in\cont$, no expert opinions are being purchased and the agent
follows the optimal exit time control $\hat{\pi}$. In particular, we have
$X^* = \hat{X}^{k}$ on $[\tau_k,\tau_{k+1})\cap[0,T]$. Whenever $X^*$
hits $\interv$, an expert opinion of quality $\hat{q}$ is purchased. Observe
that such a purchase causes a jump in $X^*$, but it is in general not
guaranteed that this jump takes $X^*$ back to $\cont$. However, since
each purchase results in a loss of wealth of at least $K_{\min}$ and
since $\overline{\S}_T^\emptyset\subset\cont$, it follows that there
can be at most finitely many purchases at any point in time before the
state process jumps back to $\cont$.

Next, it is clear that $\pi^*$ is a $\Pi$-valued and $\F$-progressively
measurable process and that $\nu^*\in\cA_{pre}^{exp}$. In fact, writing
$\cY^* := \cY^{\nu^*}$ and $\cY^{*,k} := \cY^{\nu^*,k}$, it is clear that
$\pi^*$ is even $\cY^*$-progressive, $\tau_k^*$ is a $\cY^{*,k-1}$-stopping
time, and $q_k^*$ is $\cY^{*,k-1}_{\tau_k^*}$-measurable for all $k\in\N$.
Finally, since $X^* = \hat{X}^{k}$ on $[\tau_k,\tau_{k+1})\cap[0,T]$ and
by the properties of $\Gamma$, we see that $W^*\geq 0$. We have therefore
argued that $u^*\in\cA(w)$, that is, $\pi^*$ is an admissible trading
strategy and $\nu^*$ is an admissible expert opinion strategy.
With this, we can prove the main result of this article showing that $u^*$
is indeed optimal and $V = V^+$, hence characterizing $V$ as the unique
continuous viscosity solution of \eqref{eq:HJBQVI}.

\begin{theorem}
For any $(t,x)\in\overline{\S}_T$, the associated strategy $u^*$ is
optimal for the full information optimal investment problem and
$V(t,x) = V^+(t,x)$.
\end{theorem}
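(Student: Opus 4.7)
The plan is to combine the inequality $V \leq V^+$—which holds because $V^+ \in \cV^+$ dominates the cost functional of every admissible strategy, as shown after Definition~\ref{def:Stoch-Super}—with the reverse inequality $\cJ(u^*;t,x) \geq V^+(t,x)$ obtained by explicit computation along $u^*$. Since $u^* \in \cA(x)$, the resulting chain $V^+(t,x) \geq V(t,x) \geq \cJ(u^*;t,x) \geq V^+(t,x)$ will yield both the optimality of $u^*$ and the identity $V = V^+$.

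For the reverse inequality, I would decompose the trajectory of $X^* := X^{u^*;t,x}$ into the cycles induced by the construction. With $\tau_0^* := t$ and $\xi_0^* := x$, on each stochastic interval $[\tau_{k-1}^*, \tau_k^*) \cap [0,T]$ no expert opinions are purchased and $X^*$ coincides with the optimal state process $\hat X^k$ of the exit time problem started at $(\tau_{k-1}^*, \xi_{k-1}^*)$, with $\tau_k^*$ being its first exit from $\cS$. Invoking the martingale optimality principle for the exit time control problem together with the identity $V^+ = V^\cont$ on $\overline{\cS}$ established at the end of Section~\ref{sec:exit-time-problem} should give
\[
V^+\bigl(\tau_{k-1}^*, \xi_{k-1}^*\bigr) = \E\Bigl[V^+\bigl(\tau_k^* \wedge T,\, X^*_{(\tau_k^* \wedge T)-}\bigr) \,\Big|\, \cY^*_{\tau_{k-1}^*}\Bigr].
\]
At each intervention time $\tau_k^* \leq T$, the state $(\tau_k^*, X^*_{\tau_k^*-})$ lies in $\interv$ so $V^+ = \cM[V^+]$ there, and by the defining property of the measurable selector $\hat q$ together with the form of $\cM$ in \eqref{eq:def-impulse-operator},
\[
V^+\bigl(\tau_k^*, X^*_{\tau_k^*-}\bigr) = \cM[V^+]\bigl(\tau_k^*, X^*_{\tau_k^*-}\bigr) = \E\bigl[V^+(\tau_k^*, \xi_k^*) \,\big|\, \cY^{*,k-1}_{\tau_k^*}\bigr].
\]

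Chaining these two identities via the tower property and iterating in $k$ would yield $V^+(t,x) = \E\bigl[V^+(\tau_K^* \wedge T,\, X^*_{\tau_K^* \wedge T})\bigr]$ for every $K \in \N$. By Lemma~\ref{lem:stop-to-inf} we have $\tau_\infty^* > T$ almost surely, hence $\tau_K^* > T$ eventually, and using the terminal condition $V^+(T,\cdot) = U$ together with the polynomial growth of $V^+$ and the moment bound of Lemma~\ref{lem:quad-growth}, a uniform integrability argument lets one pass to the limit to obtain $V^+(t,x) = \E[U(W^{u^*;t,x}_T)] = \cJ(u^*;t,x)$.

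The main obstacle will be the martingale optimality principle on continuation intervals. Because the strong formulation does not enforce Markovianity of state processes under arbitrary admissible trading strategies, classical dynamic programming is not directly available; this is precisely where one leverages the standing assumption of strong solvability of \eqref{eq:opt-state-process}. The feedback-controlled process $\hat X^k$ is then a strong Markov process on our fixed probability space driven by the innovations Brownian motion, and combining the stochastic supersolution property underlying $V^+ \geq V^\cont$ with the reverse subsolution inequality should upgrade the generic supermartingale property to the required martingale identity along the optimally controlled path. A secondary bookkeeping issue is the nested filtrations $\cY^{*,k-1} \subset \cY^{*,k} \subset \cY^*$ across simultaneous purchases forming the batches $J_k$ of Section~\ref{sec:The-Model}, so that each individual Bayesian update encoded by $\Gamma$ is absorbed under the correct conditioning when applying the tower property.
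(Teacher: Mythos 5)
Your proposal mirrors the paper's proof almost exactly: the same cycle decomposition at intervention times, the same use of $V^+ = V^\cont$ on $\overline{\cS}$ to get the martingale property along $\hat X^k$ between interventions, the same use of the selector $\hat q$ and the operator $\cM$ at intervention times, the same iteration and passage to the limit via the growth bound and the fact that $\tau^*_\infty>T$. Two minor remarks. First, in the closing limit argument the paper also observes explicitly that $\{T\}\times\R_+\times\R^N\subset\cont$ (because $V^+(T,w,p)=U(w)$ and $U>\cM[U]$), so that $\hat X^k_T = X^*_T$ on the event $\{\tau^*_k\geq T\}$; your sketch implicitly needs the same identification to replace $U(\hat X^k_T)$ by $U(W^*_T)$. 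Second, your suggested mechanism for the martingale step --- upgrading the supermartingale property of $V^+$ by ``combining the stochastic supersolution property $V^+\geq V^\cont$ with the reverse subsolution inequality'' --- is not what the paper does and is somewhat misleading, since the subsolution property does not straightforwardly yield a submartingale. The paper instead proceeds directly from the representation $V^\cont(\rho,\hat X^k_\rho)=\E\bigl[V^+(\tau^{\hat\pi}_\cS,\hat X_{\tau^{\hat\pi}_\cS})\bigr]$ along the optimal feedback, combined with pathwise uniqueness of the strong solution and optional sampling, to show $\E[V^\cont(\rho,\hat X^k_\rho)]$ is constant in $\rho$; that is the precise form of the ``martingale optimality principle'' you invoke by name. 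With that mechanism substituted, your argument is essentially the paper's.
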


\begin{proof}
Step 1. We show that $V^+(\argdot,\hat{X}^k)$ is a martingale on
$[\tau^*_{k-1},\tau^*_{k}]\cap[t,T]$ for each $k\in\N$. For this, we first
observe that
\[
  V^+(\argdot,\hat{X}^k) = V^\cont(\argdot,\hat{X}^k)
  \quad\text{on }[\tau^*_{k-1},\tau^*_{k}]\cap[t,T]
\]
since $\hat{X}^k$ is $\overline{\cS}$-valued on that time interval.
Moreover, $\hat{X}^k$ is an optimally controlled state process
for the exit time control problem with initial state
$(\tau_{k-1}^*,\hat{X}^k_{\tau_{k-1}^*})$. The martingale property
of $V^*$ therefore follows from the martingale optimality principle.
Indeed, let $\rho$ be an arbitrary stopping time taking values
in $[\tau^*_{k-1},\tau^*_{k}]\cap[t,T]$ whenever this interval
is non-empty and let $\rho$ be equal to $\tau^*_{k-1}$ otherwise.
By optimality of $\hat{\pi}$ and pathwise uniqueness of $\hat{\pi}$, we see that
\begin{align*}
  \E\bigl[V^\cont(\rho,\hat{X}^k_\rho)\bigr]
  &= \E\Bigl[\E\bigl[V^+\bigl(\tau^{\hat{\pi}}_{\cS},
    \hat{X}^{t,x,I^\circ}_{\tau^{\hat{\pi}}_{\cS}}\bigr)\bigr]_{(t,x)=(\rho,\hat{X}^k_\rho)}\Bigr]\\
  &= \E\Bigl[\E\bigl[V^+\bigl(\tau^{\hat{\pi}}_{\cS},
    \hat{X}^{t,x,I^*_{k-1}}_{\tau^{\hat{\pi}}_{\cS}}\bigr)\bigr]_{(t,x)=(\rho,\hat{X}^k_\rho)}\Bigr]\\
  &= \E\Bigl[V^+\bigl(\tau^{\hat{\pi}}_{\cS},
    \hat{X}^{\rho,\hat{X}^k_\rho,I^*_{k-1}}_{\tau^{\hat{\pi}}_{\cS}}\bigr)\Bigr]\\
  &= \E\Bigl[V^+\bigl(\tau^{\hat{\pi}}_{\cS},
    \hat{X}^{\tau_{k-1}^*,\hat{X}^k_{\tau^*_{k-1}},I^*_{k-1}}_{\tau^{\hat{\pi}}_{\cS}}\bigr)\Bigr]
  = \E\Bigl[V^+\bigl(\tau^{\hat{\pi}}_{\cS},
    \hat{X}^{k}_{\tau^{\hat{\pi}}_{\cS}}\bigr)\Bigr].
\end{align*}
In particular, the same identity is true for $\rho:=\tau_{k-1}^*$, hence
\[
  \E\bigl[V^\cont(\rho,\hat{X}^k_\rho)\bigr]
  = \E\bigl[V^\cont(\tau_{k-1}^*,\hat{X}^k_{\tau_{k-1}^*})\bigr],
\]
and we conclude by optional sampling.

Step 2. For each $k\in\N$, we show that
\[
  \E\bigl[V^+(\tau_k^*,\hat{X}^k_{\tau_k^*})\ind_{\{\tau_k^*<T\}}\bigr]
  = \E\bigl[V^+(\tau_k^*,\hat{X}^{k+1}_{\tau_k^*})\ind_{\{\tau_k^*<T\}}\bigr].
\]
For this, we first observe that $\hat{X}^k_{\tau_k^*}\in\interv$ on
$\{\tau_k^*<T\}$. Thus, by choice of $q_k^*$,
\begin{align*}
  \E\bigl[V^+(\tau_k^*,\hat{X}^k_{\tau_k^*})\ind_{\{\tau_k^*<T\}}\bigr]
  &= \E\Bigl[\cM[V^+](\tau_k^*,\hat{X}^k_{\tau_k^*})\ind_{\{\tau_k^*<T\}}\Bigr]\\
  &= \E\Bigl[V^+\Bigl(\tau_k^*,\Gamma\bigl(Z^{\nu^*}_k,\tau_k^*,\hat{X}^k_{\tau_k^*},q_k^*\bigr)\Bigr)
    \ind_{\{\tau_k^*<T\}}\Bigr]\\
  &= \E\bigl[V^+(\tau_k^*,\xi_k^*)\ind_{\{\tau_k^*<T\}}\bigr]
  = \E\bigl[V^+(\tau_k^*,\hat{X}^{k+1}_{\tau_k^*})\ind_{\{\tau_k^*<T\}}\bigr].
\end{align*}

Step 3. Conclusion. Using the previous two steps, we see that
\begin{align*}
  V^+(t,x)
  &= \E\Bigl[V^+(\tau_1^*,\hat{X}^1_{\tau_1^*})\ind_{\{\tau^*_1 < T\}}
    + V^+(T,\hat{X}^1_T)\ind_{\{\tau^*_1 \geq T\}}\Bigr]\\
  &= \E\Bigl[V^+(\tau_1^*,\hat{X}^{2}_{\tau_1^*})\ind_{\{\tau^*_1 < T\}}
    + V^+(T,\hat{X}^1_T)\ind_{\{\tau^*_1 \geq T\}}\Bigr]\\
  &= \E\Bigl[V^+(\tau_2^*,\hat{X}^{2}_{\tau_2^*})\ind_{\{\tau^*_1 < T\}}\ind_{\{\tau^*_2<T\}}
    + V^+(T,\hat{X}^{2}_{T})\ind_{\{\tau^*_1 < T\}}\ind_{\{\tau_2^*\geq T\}}
    + V^+(T,\hat{X}^1_T)\ind_{\{\tau^*_1 \geq T\}}\Bigr]\\
  &= \E\Bigl[V^+(\tau_2^*,\hat{X}^{2}_{\tau_2^*})\ind_{\{\tau^*_2<T\}}
    + V^+(T,\hat{X}^{2}_{T})\ind_{\{\tau^*_1 < T\}}\ind_{\{\tau_2^*\geq T\}}
    + V^+(T,\hat{X}^1_T)\ind_{\{\tau^*_1 \geq T\}}\Bigr].
\end{align*}
Now since $V^+(T,w,p) = U(w)$ and $U > \cM[U]$, we conclude that
$\{T\}\times\R_+\times\R^N\subset\cont$ and hence $\hat{X}^1_T = X^*_T$ on
$\{\tau^*_1 \geq T\}$ and $\hat{X}^{2}_{T} = X^*_T$ on $\{\tau^*_1 < T\}\cap\{\tau_2^*\geq T\}$.
Thus, in particular,
\begin{align*}
  V^+(t,x)
  &= \E\Bigl[V^+(\tau_2^*,\hat{X}^{2}_{\tau_2^*})\ind_{\{\tau^*_2<T\}}
    + U(W^*_T)\ind_{\{\tau^*_1 < T\}}\ind_{\{\tau_2^*\geq T\}}
    + U(W^*_T)\ind_{\{\tau^*_1 \geq T\}}\Bigr]\\
  &= \E\Bigl[V^+(\tau_2^*,\hat{X}^{2}_{\tau_2^*})\ind_{\{\tau^*_2<T\}}
    + U(W^*_T)\ind_{\{\tau_2^*\geq T\}}\Bigr].
\end{align*}
Iterating this argument yields
\[
  V^+(t,x) = \E\Bigl[V^+(\tau_k^*,\hat{X}^{k}_{\tau_k^*})\ind_{\{\tau^*_k<T\}}
    + U(W^*_T)\ind_{\{\tau_k^*\geq T\}}\Bigr],
  \quad k\in\N.
\]
Sending $k\to\infty$, using dominated convergence and the fact that $\lim_{k\to\infty}\tau_k^* > T$,
it follows that
\begin{multline*}
  V^+(t,x)
  = \lim_{k\to\infty} \E\Bigl[V^+(\tau_k^*,\hat{X}^{k}_{\tau_k^*})\ind_{\{\tau^*_k<T\}}
    + U(W^*_T)\ind_{\{\tau_k^*\geq T\}}\Bigr]\\
  = \E\bigl[U(W^*_T)\bigr] = \cJ(u^*;t,x) \leq V(t,x) \leq V^+(t,x),
\end{multline*}
which concludes the proof.
\end{proof}

\appendix

\section{Proofs related to the Stochastic Perron's Method}
\label{app:perron}

\paragraph{Proof of Lemma~\ref{lem:regularity-intervention}.}

Step 1. Suppose that $v\in\USC$. Let $(t,x)\in\overline{\S}_T$ and
$\{(t_{k},x_{k})\}_{k\in\N}\subset\overline{\S}_T$ with $(t_{k},x_{k})\to(t,x)$.
As by definition $\cM[v]=-1$ is constant on $\overline{\S}_T^\emptyset$ which is
open relative to $\overline{\S}_T$, we may assume that
$(t,x)\in\overline{\S}_T\setminus\overline{\S}_T^\emptyset$.
After dropping to a subsequence if necessary, we can distinguish between the two
cases
\[
  \{(t_{k},x_{k})\}_{k\in\N}\subset\overline{\S}_T^\emptyset
  \quad\text{or}\quad
  \{(t_{k},x_{k})\}_{k\in\N}\subset\overline{\S}_T\setminus\overline{\S}_T^\emptyset.
\]
In the former case we make use of the fact that $\cD(t_k,x_k) = \emptyset$ for
all $k\in\N$ but $\cD(t,x)\neq\emptyset$ and hence
\[
  \limsup_{k\to\infty} \cM[v](t_k,x_k) = -1 < 0 \leq \cM[v](t,x)
\]
since $v\geq 0$. In the other case we have $\cD(t_{k},x_{k})\neq\emptyset$ for all
$k\in\N$ and since $v$ is upper semi-continuous and $\cD(t_{k},x_{k})$ compact,
there exists $q_k\in\cD(t_{k},x_{k})$ such that
\[
  \cM[v](t_{k},x_{k}) = \sum_{n=1}^N p^n_k
    \E\Bigl[v\Bigl(t_k,\Gamma\bigl(q_k\mu^n+(1-q_k)\cN,t_k,x_k,q_k\bigr)\Bigr)\Bigr].
\]
Since $\cD(t_{k},x_{k}) = [0,\chi(t_k,x_k)]$ and $\chi$ is continuous, it follows
that, upon dropping to a subsequence if necessary, $q_k\to q$ for some
$q\in[0,\chi(t,x)]$.
Next, we observe that with $\hat{w}:=\sup_{k\in\N} w_k$ we have
\[
  \Gamma\bigl(q_k\mu^n+(1-q_k)\cN,t_k,x_k,q_k\bigr)
  \in [0,\hat{w}]\times\Delta^N,
  \quad k\in\N,
\]
showing that this sequence of random variables is bounded.
But then upper semi-continuity of $v$, continuity of $\Gamma$, and Fatou's lemma imply
\begin{align*}
  \limsup_{k\to\infty} \cM[v](t_k,x_k)
  &= \limsup_{k\to\infty} \sum_{n=1}^N p^n_k
    \E\Bigl[v\Bigl(t_k,\Gamma\bigl(q_k\mu^n+(1-q_k)\cN,t_k,x_k,q_k\bigr)\Bigr)\Bigr]\\
  &\leq \sum_{n=1}^N p^n
    \E\Bigl[v\Bigl(t,\Gamma\bigl(q\mu^n+(1-q)\cN,t,x,q\bigr)\Bigr)\Bigr]
  \leq \cM[v](t,x),
\end{align*}
which concludes the case of $v\in\USC$.

Step 2. Suppose that $v\in\LSC$.
Since $\cM[v]=-1$ is constant on $\overline{\S}_T^\emptyset$,
we only have to consider the case in which $\cM[v]$ is restricted to
$\overline{\S}_T\setminus\overline{\S}_T^\emptyset$.
Let us therefore fix
$(t,x)\in\overline{\S}_T\setminus\overline{\S}_T^\emptyset$ as well as a sequence
$\{(t_k,x_k)\}_{k\in\N}\subset\overline{\S}_T\setminus\overline{\S}_T^\emptyset$
such that $(t_k,x_k)\to(t,x)$ as $k\to\infty$.
For any $q\in\cD(t,x)=[0,\chi(t,x)]$,
continuity of $\chi$ implies that there exists for each $k\in\N$ some
$q_k\in\cD(t_k,x_k)=[0,\chi(t_k,x_k)]$ such that $q_k\to q$.
But then lower semi-continuity of $v$, continuity of
$\Gamma$, and Fatou's lemma show that
\begin{align*}
  \liminf_{k\to\infty} \cM[v](t_k,x_k)
  &\geq \liminf_{k\to\infty} \sum_{n=1}^N p^n_k
  \E\Bigl[v\Bigl(t_k,\Gamma\bigl(q_k\mu^n+(1-q_k)\cN,t_k,x_k,q_k\bigr)\Bigr)\Bigr]\\
  &\geq \sum_{n=1}^N p^n
  \E\Bigl[v\Bigl(t,\Gamma\bigl(q\mu^n+(1-q)\cN,t,x,q\bigr)\Bigr)\Bigr].
\end{align*}
As this holds for any $q\in\cD(t,x)$, we conclude that $\cM[v]$ restricted to
$\overline{\S}_T\setminus\overline{\S}_T^\emptyset$ is lower semi-continuous
as claimed.\fakeQED

\paragraph{Proof of Lemma~\ref{lem:strict-classical-super}.}

As $1-\beta \geq 1-\alpha$, we immediately see that $\psi(T,w,p) \geq U(w)$ for all
$(w,p)\in\S$ provided that $A>0$ is sufficiently large.
For $(t,x)=(t,w,p)\in\S_T$, the partial derivatives of $\psi$ are given by
\[
	\psi_{t}(t,x) = -C\Bigl(A + \frac{1}{1-\beta}w^{1-\beta}\Bigr)e^{C(T-t)},
	\quad
	\psi_{w}(t,x) = w^{-\beta}e^{C(T-t)},
	\quad
	\psi_{ww}(t,x) = -\beta w^{-\beta-1}e^{C(T-t)},
\]
from which we compute, for any $\pi\in\Pi$,
\begin{align*}
  -\psi_t(t,x) - \cH\bigl(x,\pi,\rD_x\psi(t,x),\rD^2_x\psi(t,x)\bigr)
  &= -\psi_t(t,x) - \pi w \mu^\intercal p \psi_w(t,x)
    - \frac{1}{2}\pi^2w^2\sigma^2\psi_{ww}(t,x)\\
  &= \Bigl[AC + \Bigl(\frac{C}{1-\beta}
    - \pi \mu^\intercal p
	+ \frac{1}{2}\pi^2\sigma^2\beta\Bigr)w^{1-\beta}\Bigr]e^{C(T-t)}\\
  &\geq \Bigl[AC + \Bigl(\frac{C}{1-\beta} - \pi_{\max}\mu_{\max}\Bigr)
    w^{1-\beta}\Bigr]e^{C(T-t)},
\end{align*}
where $\pi_{\max} := \max\{|\underline{\pi}|,|\overline{\pi}|\}$ and
$\mu_{\max} := \max_{n=1,\dots,N}|\mu^n|$. Now if $C > (1-\beta)\pi_{\max}\mu_{\max}$,
we conclude that
\[
  F\bigl(x,\psi_t(t,x),\rD_x\psi(t,x),\rD^2_x\psi(t,x)\bigr)
  = -\psi_t(t,x) - \max_{\pi\in\Pi}\cH\bigl(x,\pi,\rD_x\psi(t,x),\rD^2_x\psi(t,x)\bigr)
  \geq AC > 0.
\]
Regarding the non-local part of the HJBQVI, we first observe that
\[
  \psi(t,x) - \cM[\psi](t,x) \geq 1>0,\quad (t,x)\in\overline{\S}_T^\emptyset.
\]
On the other hand, if $(t,x)=(t,w,p)\in\overline{\S}_T$ with $\cD(t,x)\neq\emptyset$,
i.e.\ such that $w - K(t,0)\geq 0$, then
\begin{align*}
  \psi(t,x) - \cM[\psi](t,x)
  &= \frac{1}{1-\beta}\Bigl(w^{1-\beta}
    - \sup_{q\in\cD(t,x)}\bigl(w-K(t,q)\bigr)^{1-\beta}\Bigr)e^{C(T-t)}\\
  &\geq w^{1-\beta} - \bigl(w-K_{\min}\bigr)^{1-\beta}
  > 0.
\end{align*}
But this implies the existence of a function $\kappa$ as claimed.\fakeQED

\paragraph{Proof of Lemma~\ref{lem:perturbation}.}

We only establish the claim for $u^\rho$, the case of $v^\rho$ follows in the same
way with some minor adjustments. Note that $\cM$ preserves upper semi-continuity
by Lemma~\ref{lem:regularity-intervention}, so we can drop the semi-continuous
envelope around $\cM[u^\rho]$.
Fix $(t,x)\in\S_T$ and let $\varphi^{\rho}\in\cC^{1,2}(\S_T)$ be a test function
for the subsolution property of $u^{\rho}$ at $(t,x)$. This means that
$u^{\rho}\leq\varphi^{\rho}$ and $u^{\rho}(t,x)=\varphi^{\rho}(t,x)$.
Since $u^{\rho} = \frac{\rho+1}{\rho}u - \frac{1}{\rho}\psi$, we see that
\[
  u \leq \varphi := \frac{\rho}{\rho+1}\varphi^{\rho} + \frac{1}{\rho+1}\psi
  \quad\text{and}\quad
  u(t,x) = \varphi(t,x),
\]
from which we obtain that $\varphi$ is a test function for the subsolution property
of $u$ and hence
\begin{equation}\label{eq:perturbation_proof_1}
  \min\Bigl\{F\bigl(x,\varphi_t(t,x),\rD_x\varphi(t,x),\rD_x^2\varphi(t,x)\bigr),
    u(t,x) - \cM[u](t,x)\Bigr\}
  \leq 0.
\end{equation}
Using that $\varphi^{\rho} = \frac{\rho+1}{\rho}\varphi - \frac{1}{\rho}\psi$,
it follows that
\begin{align}
  &\mathrel{\phantom{=}}F\bigl(x,\varphi^\rho_t(t,x),\rD_x\varphi^\rho(t,x),\rD_x^2\varphi^\rho(t,x)\bigr)\notag\\
  &\hspace{2cm}\leq \frac{\rho+1}{\rho}F\bigl(x,\varphi_t(t,x),\rD_x\varphi(t,x),\rD_x^2\varphi(t,x)\bigr)
    - \frac{1}{\rho}F\bigl(x,\psi_t(t,x),\rD_x\psi(t,x),\rD_x^2\psi(t,x)\bigr)\notag\\
  &\hspace{2cm}\leq \frac{\rho+1}{\rho}F\bigl(x,\varphi_t(t,x),\rD_x\varphi(t,x),\rD_x^2\varphi(t,x)\bigr)
    - \frac{1}{\rho}\kappa(t,x).\label{eq:perturbation_proof_2}
\end{align}
On the other hand, we have
\begin{align}
  u^\rho(t,x) - \cM[u^\rho](t,x)
  &= \frac{\rho+1}{\rho}u(t,x) - \frac{1}{\rho}\psi(t,x)
    - \cM\Bigl[\frac{\rho+1}{\rho}u - \frac{1}{\rho}\psi\Bigr](t,x)\notag\\
  &\leq \frac{\rho+1}{\rho}\Bigl(u(t,x) - \cM[u](t,x)\Bigr)
    - \frac{1}{\rho}\Bigl(\psi(t,x) - \cM[\psi](t,x)\Bigr)\notag\\
  &\leq \frac{\rho+1}{\rho}\Bigl(u(t,x) - \cM[u](t,x)\Bigr)
    - \frac{1}{\rho}\kappa(t,x).\label{eq:perturbation_proof_3}
\end{align}
Combining \eqref{eq:perturbation_proof_2} and \eqref{eq:perturbation_proof_3}
with \eqref{eq:perturbation_proof_1} hence yields
\begin{multline*}
  \min\Bigl\{F\bigl(x,\varphi^\rho_t(t,x),\rD_x\varphi^\rho(t,x),\rD_x^2\varphi^\rho(t,x)\bigr),
    u^\rho(t,x) - \cM[u^\rho](t,x)\Bigr\} + \frac{1}{\rho}\kappa(t,x)\\
  \leq \frac{\rho+1}{\rho}\min\Bigl\{F\bigl(x,\varphi_t(t,x),\rD_x\varphi(t,x),\rD_x^2\varphi(t,x)\bigr),
  u(t,x) - \cM[u](t,x)\Bigr\}
\leq 0,
\end{multline*}
thus proving the claim.\fakeQED

\paragraph{Proof of Theorem~\ref{th:comparison-principle}.}

Step 1. Doubling of variables and Ishii's lemma.
Let $\psi$ and $\kappa$ be as in Lemma~\ref{lem:strict-classical-super}
and, for $\rho>1$, let $u^\rho$ and $v^\rho$ be the perturbations as in
Lemma~\ref{lem:perturbation}. Note that since $\beta < \alpha$ in the definition
of the strict supersolution, $\psi$ grows faster than $u$ and $v$ as $w\to\infty$.
In fact, by choosing the constant $A$ in the definition of $\psi$ large enough,
it follows from \eqref{eq:comparison_growth} that we may assume $u,v\leq\psi$.
We proceed to show that $u^\rho\leq v^\rho$ and hence the result follows as $\rho\to\infty$.
Towards a contradiction, let us assume that there exists $(t^*,x^*)\in\overline{\S}_T$
such that
\begin{align}\label{eq:proof-CP-contr-asspt}
  u^{\rho}(t^*,x^*) - v^{\rho}(t^*,x^*) > 0.
\end{align}
Next, for each $k\in\N_0$, we define $\varphi_k:\overline{\S}_T\times\overline{\S}_T\to\R$ as
\[
  \varphi_k(t,x,\hat{t},\hat{x})
  := u^{\rho}(t,x) - v^{\rho}(\hat{t},\hat{x})
    - \frac{k}{2}\bigl(|t-\hat{t}|^{2}+|x-\hat{x}|^{2}\bigr)
\]
and constants
\[
  \Theta_k := \sup_{(t,x),(\hat{t},\hat{x})\in\overline{\S}_T}
    \varphi_{k}(t,x,\hat{t},\hat{x})
  \quad\text{and}\quad
  \Theta := \sup_{(t,x)\in\overline{\S}_T} \varphi_{0}(t,x,t,x).
\]
We observe that
\begin{equation}\label{eq:comparison_proof_1}
  0 < u^{\rho}(t^{*},x^{*}) - v^{\rho}(t^{*},x^{*})
  \leq \Theta \leq \Theta_{k+1} \leq \Theta_{k}\leq \Theta_{0},
  \quad k\in\N,
\end{equation}
and, since $v^\rho\geq 0$ and by~\eqref{eq:comparison_growth},
\begin{align}
  \Theta_{0}
  &\leq \sup_{(t,x)\in\overline{\S}_T}\Bigl\{\frac{\rho+1}{\rho}u(t,x)
    - \frac{1}{\rho}\psi(t,x)\Bigr\}\notag\\
  &\leq \sup_{x=(w,p)\in\S}\Bigl\{\frac{(\rho+1)K}{\rho}\bigl(1+|x|^{1-\alpha}\bigr)
    - \frac{1}{\rho(1-\beta)}|w|^{1-\beta}\Bigr\}
  <\infty,\label{eq:comparison_proof_2}
\end{align}
where the finiteness is a consequence of $1-\beta > 1-\alpha$. Next, since $\Theta_k>0$
and by definition of $\varphi_k$, it follows that any maximizer for $\Theta_k$ is
contained in the set
\[
  A := \bigl\{(t,x,\hat{t},\hat{x})\in\overline{\S}_T\times\overline{\S}_T
    : u^{\rho}(t,x)-v^{\rho}(\hat{t},\hat{x})\geq 0\bigr\}.
\]
Since $u^{\rho}-v^{\rho}$ is upper semi-continuous and by \eqref{eq:comparison_proof_1}
and~\eqref{eq:comparison_proof_2}, it follows that $A$ is compact. In particular,
for each $k\in\N$ there exists a maximizer $(t_k,x_k,\hat{t}_k,\hat{x}_k)$ for
$\Theta_k$ and, upon dropping to a subsequence if necessary, we may assume that
this sequence converges. Next, by maximality of $(t_k,x_k,\hat{t}_k,\hat{x}_k)$,
definition of $\varphi_k$ and since $\Theta_k\geq 0$, we see that
\begin{align*}
  0 \leq \frac{k}{2}\bigl(|t_{k}-\hat{t}_{k}|^{2} + |x_{k}-\hat{x}_{k}|^{2}\bigr)
  &= u^{\rho}(t_{k},x_{k}) - v^{\rho}(\hat{t}_{k},\hat{x}_{k}) - \Theta_{k}\\
  &\leq \sup_{(t,x,\hat{t},\hat{x})\in A} \bigl\{u^{\rho}(t,x)
    - v^{\rho}(\hat{t},\hat{x})\bigr\}
  < \infty,
\end{align*}
and we conclude that we must in fact have
\[
  (\bar{t},\bar{x})
  := \lim_{k\to\infty}(t_{k},x_{k})
  = \lim_{k\to\infty}(\hat{t}_{k},\hat{x}_{k}).
\]
Moreover, combining $\Theta_k\geq \Theta$ and upper semi-continuity of $u^\rho$
and $-v^\rho$, it follows that
\begin{align*}
  0 &\leq \limsup_{k\to\infty}\frac{k}{2}\bigl(|t_{k}-\hat{t}_{k}|^{2}
    + |x_{k}-\hat{x}_{k}|^{2}\bigr)\\
  &= \limsup_{k\to\infty}\bigl\{u^{\rho}(t_{k},x_{k})
    - v^{\rho}(\hat{t}_{k},\hat{x}_{k}) - \Theta_{k}\bigr\}
  \leq u^{\rho}(\bar{t},\bar{x}) - v^{\rho}(\bar{t},\bar{x}) - \Theta
  \leq 0,
\end{align*}
where the last inequality is due to the definition of $\Theta$. But then, again
using upper semi-continuity of $u^\rho$ and $-v^\rho$, we conclude that
\begin{equation}\label{eq:proof-CP-convergence}
\begin{aligned}
  &\lim_{k\to\infty} u^{\rho}(t_{k},x_{k}) = u^{\rho}(\bar{t},\bar{x}),&
  &\lim_{k\to\infty} v^{\rho}(\hat{t}_{k},\hat{x}_{k}) = v^{\rho}(\bar{t},\bar{x}),\\
  &\lim_{k\to\infty} \Theta_{k} = \Theta = u^{\rho}(\bar{t},\bar{x}) - v^{\rho}(\bar{t},\bar{x}),&
  &\lim_{k\to\infty} \frac{k}{2}\bigl(|t_{k}-\hat{t}_{k}|^{2}
    + |x_{k}-\hat{x}_{k}|^{2}\bigr) = 0.
\end{aligned}
\end{equation}
From this, we furthermore conclude that $\bar{t}<T$ since otherwise
\[
  0 < \Theta = u^\rho(\bar{t},\bar{x}) - v^\rho(\bar{t},\bar{x})
  = u(T,\bar{x}) - v(T,\bar{x})
    + \frac{1}{\rho}\Bigl(u(T,\bar{x}) + v(T,\bar{x}) - 2\psi(T,\bar{x})\Bigr)
  \leq 0,
\]
where we have used that $u(T,\argdot)\leq v(T,\argdot)$ by assumption, $u,v \leq \psi$
by construction, and $v\geq 0$. From this, it follows that we may subsequently assume
that $t_k,\hat{t}_k < T$ for all $k\in\N$ by dropping to another subsequence
if necessary. In particular, we can apply Ishii's lemma \cite[Theorem 3.2]{crandall1992user},
yielding the existence of matrices $M_k,\hat{M}_k\in\cS^{1+N}$ satisfying
\begin{equation}\label{eq:Ishii-Matrix}
  \begin{pmatrix}
    M_{k} & 0\\
	0 & -\hat{M}_{k}
  \end{pmatrix}
  \leq \begin{pmatrix}
    {\mathrm I} & -{\mathrm I}\\
	-{\mathrm I} & {\mathrm I} 
  \end{pmatrix}
\end{equation}
where ${\mathrm I}\in\cS^{1+N}$ is the identity matrix and such that
\begin{align*}
  \bigl(k(t_{k}-\hat{t}_{k}),k(x_{k}-\hat{x}_{k}),M_{k}\bigr)
    &\in\overline{J}^{2,+} u^{\rho}(t_{k},x_{k}),\\
  \bigl(k(t_{k}-\hat{t}_{k}),k(x_{k}-\hat{x}_{k}),\hat{M}_{k}\bigr)
    &\in\overline{J}^{2,-}v^{\rho}(\hat{t}_{k},\hat{x}_{k}),
\end{align*}
where $\overline{J}^{2,+} u^{\rho}(t_{k},x_{k})$ and $\overline{J}^{2,-}v^{\rho}(\hat{t}_{k},\hat{x}_{k})$
denote, as usual, the closures of the second-order super- and subjets, respectively.
By Lemma~\ref{lem:perturbation}, we conclude that
\begin{align}
  - \frac{\bar{\kappa}}{\rho}\label{eq:proof-CP-Step2-contr.-1}
  &\geq \min\Bigl\{F\bigl(x_{k},k(t_{k}-\hat{t}_{k}),k(x_{k}-\hat{x}_{k}),M_{k}\bigr),
    u^{\rho}(t_{k},x_{k}) - \cM[u^{\rho}](t_{k},x_{k})\Bigr\},\\
	\frac{\bar{\kappa}}{\rho}\label{eq:proof-CP-Step2-contr.-2}
  &\leq \min\Bigl\{F\bigl(\hat{x}_{k},k(t_{k}-\hat{t}_{k}),k(x_{k}-\hat{x}_{k}),\hat{M}_{k}\bigr),
    v^{\rho}(\hat{t}_{k},\hat{x}_{k}) - \cM[v^{\rho}]_*(\hat{t}_{k},\hat{x}_{k})\Bigr\},
\end{align}
where $\bar{\kappa} := \inf_{(t,x,\hat{t},\hat{x})\in A} \kappa(t,x) > 0$.

Step 2. Handling of the non-locality. We now show that we can strengthen
the inequality in \eqref{eq:proof-CP-Step2-contr.-1} in the sense that, after
dropping to a subsequence, we in fact have
\begin{equation}\label{eq:proof-CP-step3-wlog}
  - \frac{\bar{\kappa}}{\rho}
  \geq F\bigl(x_{k},k(t_{k}-\hat{t}_{k}),k(x_{k}-\hat{x}_{k}),M_{k}\bigr),
  \quad k\in\N.
\end{equation}
Assume by contradiction that this is not the case, i.e.\ \eqref{eq:proof-CP-step3-wlog}
holds for at most finitely many $k\in\N$. By \eqref{eq:proof-CP-Step2-contr.-1}, this is only
possible if there exists $K_0\in\N$ such that
\[
  u^{\rho}(t_{k},x_{k}) \leq \cM[u^{\rho}](t_{k},x_{k}) - \frac{\bar{\kappa}}{\rho},
  \quad k\geq K_0.
\]
Since $(t_{k},x_{k},\hat{t}_k,\hat{x_k})\in A$ eventually and $v^\rho\geq 0$, we
conclude that $u^\rho(t_k,x_k)\geq 0$. With this and since $\cM[u^{\rho}] = -1$
on $\overline{\S}_T^\emptyset$, it follows that
$(t_k,x_k)\in\overline{\S}_T\setminus\overline{\S}_T^\emptyset$ for all $k\geq K_0$
if $K_0$ is chosen large enough. But then we must also have
$(\bar{t},\bar{x})\in\overline{\S}_T\setminus\overline{\S}_T^\emptyset$
since $\overline{\S}_T^\emptyset$ is open relative to $\overline{\S}_T$.
We proceed to show that
\[
  (\bar{t},\bar{x})\not\in\partial\overline{\S}_T^\emptyset
  := \bigl\{ (t,x)=(t,w,p)\in\overline{\S}_T : w = K(t,0) \bigr\},
\]
i.e.\ $(\bar{t},\bar{x})$ is an interior point of $\overline{\S}_T\setminus\overline{\S}_T^\emptyset$.
For this, let us first use that by~\eqref{eq:proof-CP-convergence} we can make
$K_0$ larger to guarantee that
\[
  \Theta = u^{\rho}(\bar{t},\bar{x}) - v^{\rho}(\bar{t},\bar{x})
  \leq u^{\rho}(t_{k},x_{k}) - v^{\rho}(\hat{t}_{k},\hat{x}_{k}) + \frac{\bar\kappa}{4\rho},
  \quad k\geq K_0.
\]
Similarly, with $K_0$ possibly even larger, upper semi-continuity of $\cM[u^\rho]$
implies that we have
\[
  \cM[u^\rho](t_k,x_k) \leq \cM[u^\rho](\bar t,\bar x) + \frac{\bar\kappa}{4\rho},
  \quad k\geq K_0.
\]
Since $\cD(\bar{t},\bar{x})$ is non-empty and compact, upper semi-continuity of
$u^\rho$ and continuity of $\Gamma$ implies the existence of $q\in\cD(\bar{t},\bar{x})$
such that
\[
  \cM[u^{\rho}](\bar{t},\bar{x}) = \sum_{n=1}^N \bar{p}^n
    \E\Bigl[u^\rho\Bigl(t,\Gamma\bigl(q\mu^n + (1-q)\cN,\bar{t},\bar{x},q\bigr)\Bigr)\Bigr].
\]
Combining the previous estimates, we conclude that
\begin{equation}\label{eq:proof-CP-latter-three}
  \Theta \leq \sum_{n=1}^N \bar{p}^n
    \E\Bigl[u^\rho\Bigl(t,\Gamma\bigl(q\mu^n + (1-q)\cN,\bar{t},\bar{x},q\bigr)\Bigr)\Bigr]
  - v^{\rho}(\hat{t}_{k},\hat{x}_{k}) - \frac{\bar{\kappa}}{2\rho}.
\end{equation}
Now if $(\bar{t},\bar{x})\in\partial\overline{\S}_T^\emptyset$, it follows that
$\cD(\bar{t},\bar{x}) = \{0\}$ and $\bar{w} = K(t,0)$, so that
\[
  \Gamma\bigl(q\mu^n + (1-q)\cN,\bar{t},\bar{x},q\bigr)
  = \Gamma\bigl(\cN,\bar{t},\bar{x},0\bigr) = (0,\bar{p}),
\]
from which we conclude that
\[
  u^\rho\Bigl(t,\Gamma\bigl(q\mu^n + (1-q)\cN,\bar{t},\bar{x},q\bigr)\Bigr)
  = \frac{\rho+1}{\rho}u(\bar{t},0,\bar{p}) - \frac{1}{\rho}\psi(\bar{t},0,\bar{p})
  = - \frac{1}{\rho}\psi(\bar{t},0,\bar{p})
\]
where we have used the boundary condition~\eqref{eq:boundary}.
But then we arrive at the contradiction
\[
  0 \leq \Theta
  \leq -\frac{1}{\rho}\psi(\bar{t},0,\bar{p}) - \frac{\bar{\kappa}}{2\rho} < 0,
\]
and we conclude that $(\bar{t},\bar{x})\not\in\partial\overline{\S}_T^\emptyset$.
In particular, it follows that we can assume that
$(\hat{t}_k,\hat{x}_k)\in\overline{\S}_T\setminus\overline{\S}_T^\emptyset$
for all $k\geq K_0$. First using \eqref{eq:proof-CP-Step2-contr.-2} together with
lower semi-continuity of $\cM[v^\rho]$ on $\overline{\S}_T\setminus\overline{\S}_T^\emptyset$
by Lemma~\ref{lem:regularity-intervention}, we can make $K_0$ larger if necessary
so that
\begin{align*}
  v^\rho(\hat{t}_k,\hat{x}_k)
  \geq \cM[v^\rho(\hat{t}_k,\hat{x}_k)]_* + \frac{\bar\kappa}{\rho}
  &= \cM[v^\rho(\hat{t}_k,\hat{x}_k)] + \frac{\bar\kappa}{\rho}\\
  &\geq \cM[v^\rho(\bar{t},\bar{x})] + \frac{\bar\kappa}{2\rho}\\
  &\geq \sum_{n=1}^N \bar{p}^n
    \E\Bigl[v^\rho\Bigl(t,\Gamma\bigl(q\mu^n + (1-q)\cN,\bar{t},\bar{x},q\bigr)\Bigr)\Bigr]
    + \frac{\bar\kappa}{2\rho},
  & k&\geq K_0,
\end{align*}
where we have used that $q\in\cD(\bar{t},\bar{x})$ to obtain the last
inequality. Combining this with~\eqref{eq:proof-CP-latter-three} and using the definition
of $\Theta$ yields
\[
  \Theta \leq \sum_{n=1}^N \bar{p}^n\E\Bigl[
	u^\rho\Bigl(t,\Gamma\bigl(q\mu^n + (1-q)\cN,\bar{t},\bar{x},q\bigr)\Bigr)
    - v^\rho\Bigl(t,\Gamma\bigl(q\mu^n + (1-q)\cN,\bar{t},\bar{x},q\bigr)\Bigr)
  \Bigr] - \frac{\bar\kappa}{\rho}
  \leq \Theta - \frac{\bar\kappa}{\rho} < \Theta,
\]
which is the desired contradiction. We may therefore subsequently assume that
\eqref{eq:proof-CP-step3-wlog} holds.

Step 3. Conclusion. Combining \eqref{eq:proof-CP-Step2-contr.-2} and
\eqref{eq:proof-CP-step3-wlog} reveals
\begin{align*}
  2\frac{\bar\kappa}{\rho}
  &\leq F\bigl(\hat{x}_{k},k(t_{k}-\hat{t}_{k}),k(x_{k}-\hat{x}_{k}),\hat{M}_{k}\bigr)
    - F\bigl(x_{k},k(t_{k}-\hat{t}_{k}),k(x_{k}-\hat{x}_{k}),M_{k}\bigr)\\
  &= - \max_{\pi\in\Pi} \cH\bigl(\hat{x}_{k},\pi,k(x_{k}-\hat{x}_{k}),\hat{M}_{k}\bigr)
    + \max_{\pi\in\Pi} \cH\bigl(x_{k},\pi,k(x_{k}-\hat{x}_{k}),M_{k}\bigr)\\
  &\leq \max_{\pi\in\Pi}\Bigl[\cH\bigl(x_{k},\pi,k(x_{k}-\hat{x}_{k}),M_{k}\bigr)
    - \cH\bigl(\hat{x}_{k},\pi,k(x_{k}-\hat{x}_{k}),\hat{M}_{k}\bigr)\Bigr].
\end{align*}
By boundedness of $\Pi$ and $\Delta^N$, it follows that the state coefficient
functions $x\mapsto f(x,\pi)$ and $x\mapsto \Sigma(x,\pi)$ are Lipschitz continuous
in $x$, uniformly in $\pi$. Combining this with~\eqref{eq:Ishii-Matrix}, it follows
that there exists a constant $C>0$ such that
\[
  0 < 2\frac{\bar\kappa}{\rho}
  \leq \max_{\pi\in\Pi}\Bigl[\cH\bigl(x_{k},\pi,k(x_{k}-\hat{x}_{k}),M_{k}\bigr)
    - \cH\bigl(\hat{x}_{k},\pi,k(x_{k}-\hat{x}_{k}),\hat{M}_{k}\bigr)\Bigr]
  \leq C\frac{k}{2}|x_k-\hat{x}_k|^2,\quad k\in\N,
\]
and we arrive at the final contradiction as the right-hand side tends to zero
for $k\to\infty $ by~\eqref{eq:proof-CP-convergence}.\fakeQED

\paragraph{Proof of Theorem~\ref{th:V+-is-viscosity-subsolution}.}

In light of the discussion preceding Theorem~\ref{th:V+-is-viscosity-subsolution}
we only have to verify the viscosity subsolution property. We begin by observing
that $V^+$ is upper semi-continuous as the pointwise minimum of upper semi-continuous
functions. Hence $(V^+)^* = V^+$ and $\cM[V^+]^* = \cM[V^+]$ by
Lemma~\ref{lem:regularity-intervention}. Towards the subsolution property, we
argue by contradiction and assume that there exists a test function $\varphi\in\cC^{1,2}(\S_T)$
and $(\bar t,\bar x)\in\S_T$ such that $V^+-\varphi$ attains a strict global maximum
equal to zero at $(\bar t,\bar x)$ and
\[
  \min\Bigl\{
    F\bigl(\bar x, \varphi_t(\bar t,\bar x),\rD_x\varphi(\bar t,\bar x),\rD_x^2\varphi(\bar t,\bar x)\bigr),
	V^+(\bar t,\bar x) - \cM[V^+](\bar t,\bar x)
  \Bigr\} = 2\kappa > 0
\]
for some $\kappa>0$.
For $\delta>0$, we define
\[
  \cB_{\delta}(\bar{t},\bar{x})
  := \bigl\{(t,x)\in\overline{\S}_T : |(t,x)-(\bar{t},\bar{x})|<\delta\bigr\}
\]
and denote by $\overline{\cB}_{\delta}(\bar{t},\bar{x})$ its closure in
$\overline{\S}_T$. By continuity of $F$ (which follows from compactness of $\Pi$),
$V^+=\varphi$ in $(\bar t,\bar x)$, and lower semi-continuity of $\varphi - \cM[V^+]$,
there exists $\delta>0$ small enough such that $t+\delta < T$ and
\begin{equation}\label{eq:proof-visco-sub-ineq-on-ball}
  \min\Bigl\{
    F\bigl(x, \varphi_t(t,x),\rD_x\varphi(t,x),\rD_x^2\varphi(t,x)\bigr),
    \varphi(t,x) - \cM[V^+](t,x)
  \Bigr\} \geq \kappa,
  \quad (t,x)\in\overline{\cB}_{\delta}(\bar t,\bar x).
\end{equation}
The compactness of $\overline{\cB}_{\delta}(\bar{t},\bar{x})\setminus\cB_{\delta/2}(\bar{t},\bar{x})$
and the upper semi-continuity of $V^{+}-\varphi$ together with the strictness of
the maximum at $(\bar{t},\bar{x})$ imply the existence of $\eta_0\in(0,\kappa)$ satisfying
\begin{equation}\label{eq:h_is_v_outside_half_ball}
  V^{+}(t,x) + \eta_0 \leq \varphi(t,x),
  \quad (t,x)\in\overline{\cB}_{\delta}(\bar{t},\bar{x})
	\setminus\cB_{\delta/2}(\bar{t},\bar{x}).
\end{equation}
With this, we define for $\eta\in(0,\eta_0)$ the function $\varphi^{\eta}
:= \varphi-\eta$ and
\[
  h^{\eta} := \begin{cases}
    V^{+} \wedge \varphi^{\eta} &\text{on }
	  \overline{\cB}_{\delta}(\bar{t},\bar{x}),\\
    V^{+} &\text{else}.
  \end{cases}
\]
Since
\[
  \varphi^{\eta}(\bar{t},\bar{x})
  = \varphi(\bar{t},\bar{x}) - \eta
  = V^{+}(\bar{t},\bar{x}) - \eta
  < V^{+}(\bar{t},\bar{x}),
\]
it follows that
\[
  h^\eta(\bar t,\bar x) = \varphi^{\eta}(\bar{t},\bar{x}) < V^{+}(\bar{t},\bar{x}),
\]
so that we obtain a contradiction to the minimality of $V^+$ if we can show that
$h^\eta\in\cV^+$. Since both $V^+\in\USC$ and $V^+\wedge\varphi^\eta\in\USC$
and $h^\eta = V^\eta$ outside of $\cB_{\delta/2}(\bar t,\bar x)$
by~\eqref{eq:h_is_v_outside_half_ball} and since $\eta<\eta_0$, we conclude that
$h^\eta\in\USC$, i.e.\ $h^\eta$ satisfies $(\cV^+_1)$. Next, since $V^+$ and $h^\eta$
differ at most on a compact set, it follows that $h^\eta$ satisfies the lower
boundedness and growth condition $(\cV^+_2)$. Finally, since $\bar t+\delta < T$,
we conclude that $h^\eta(T,\argdot) = V^+(T,\argdot)$, so that $h^\eta$ is seen
to satisfy the terminal inequality $(\cV^+_3)$ and we are left with verifying
$(\cV^+_4)$ and $(\cV^+_5)$.

Let us start by establishing $(\cV^+_5)$ for those
$(t,x)\in\overline{S}_T$ for which $h^\eta(t,x) = \varphi^\eta(t,x)$. By
\eqref{eq:h_is_v_outside_half_ball}, this is only possible if
$(t,x)\in\cB_{\delta/2}(\bar t,\bar x)$. Hence, using $h^{\eta}(t,x)
= \varphi(t,x) - \eta$ and $h^{\eta}\leq V^{+}$ for the first inequality followed
by \eqref{eq:proof-visco-sub-ineq-on-ball} for the second inequality yields
\[
  h^{\eta}(t,x) - \cM[h^{\eta}](t,x)
  \geq \varphi(t,x) - \eta + \cM[V^{+}](t,x)
  \geq \kappa-\eta > 0
\]
as claimed. On the other hand, if $(t,x)$ is such that $h^\eta(t,x) = V^+(t,x)$,
we can use that $h^\eta\leq V^+$ implies $\cM[h^\eta]\leq \cM[V^+]$ and that $V^+$
satisfies $(\cV^+_5)$ to conclude that
\[
  h^{\eta}(t,x) - \cM[h^\eta](t,x)
  \geq V^+(t,x) - \cM[V^+](t,x) \geq 0.
\]
Thus $h^\eta$ satisfies $(\cV^+_5)$.

Let us now turn to (V4) and fix $(t,x)\in\overline{\S}_T$, $u=(\pi,\nu)\in\cA(x)$,
$k\in\N$, a pair of $\cY^u$-stopping times $\theta\leq\rho$ taking values in
$[\tilde\tau_k,\tilde\tau_{k+1}]\cap[t,T]$, and a $\cY^u_\theta$-measurable and
$\S$-valued random variable $\xi$ with $\E[|\xi|^2]<\infty$. We introduce the
$\cY^u_\theta$-measurable event
\[
  A:=\bigl\{(\theta,\xi)\in\cB_{\delta/2}(\bar{t},\bar{x})\text{ and }
    \varphi^{\eta}(\theta,\xi)<V^{+}(\theta,\xi)\bigr\}
\]
and the $\cY^u$-stopping time
\[
  \vartheta := \inf\bigl\{s\in[\theta,T] : (s,X^{u;\theta,\xi}_s)
    \not\in\cB_{\delta/2}(\bar t,\bar x)\bigr\}.
\]
Note that $\vartheta<T$ since $\{\varphi^\eta < V^+\}
\subseteq\cB_{\delta/2}(\bar t,\bar x)$ and $\vartheta = \theta$ on the complement
of $A$. On the other hand, on $A$, we may apply It\^{o}'s formula to obtain
\begin{align*}
  \ind_{A}h^\eta(\theta,\xi)
  &= \ind_{A}\varphi^\eta(\theta,\xi)\\
  &= \ind_{A}\varphi^\eta\bigl(\vartheta\wedge\rho,
    X^{u;\theta,\xi}_{(\vartheta\wedge\rho)-}\bigr)\\
    &\quad - \ind_{A}\int_\theta^{\vartheta\wedge\rho} \varphi^\eta_t(s,X^{u;\theta,\xi}_s)
	  + \cH\bigl(X^{u;\theta,\xi}_s,\pi_s,\rD_x\varphi^\eta_t(s,X^{u;\theta,\xi}_s),
	    \rD_x^2\varphi^\eta_t(s,X^{u;\theta,\xi}_s)\bigr)\rd s\\
	&\hspace{6.5cm} - \ind_{A}\int_\theta^{\vartheta\wedge\rho} \Sigma(X^{u;\theta,\xi}_s,\pi_s)^\intercal
      \rD_x\varphi^\eta_t(s,X^{u;\theta,\xi}_s) \rd I^\nu_s.
\end{align*}
The integral with respect to $I^\nu$ is a martingale since the integrand is bounded
on $A$ by definition of stopping time $\vartheta$. Moreover, the partial derivatives
of $\varphi$ and $\varphi^\eta$ coincide, so that \eqref{eq:proof-visco-sub-ineq-on-ball}
implies, on $A$,
\begin{multline*}
  - \varphi^\eta_t(s,X^{u;\theta,\xi}_s)
  - \cH\bigl(X^{u;\theta,\xi}_s,\pi_s,\rD_x\varphi^\eta_t(s,X^{u;\theta,\xi}_s)
    \rD_x^2\varphi^\eta_t(s,X^{u;\theta,\xi}_s)\bigr)\\
  \geq F\bigl(X^{u;\theta,\xi}_s,\varphi^\eta_t(s,X^{u;\theta,\xi}_s),
    \rD_x\varphi^\eta_t(s,X^{u;\theta,\xi}_s), \rD_x^2\varphi^\eta_t(s,X^{u;\theta,\xi}_s)\bigr)
  > 0,\qquad s\in[\theta,\vartheta\wedge\rho).
\end{multline*}
Finally, we have $\varphi^\eta\geq h^\eta$ on $\cB_{\delta/2}(\bar t,\bar x)$ and
hence, upon combining these arguments, we conclude that
\begin{equation}\label{eq:h_satisfies_v4_1}
  \ind_{A}h^\eta(\theta,\xi)
  \geq \E\Bigl[\ind_{A}\varphi^\eta\bigl(\vartheta\wedge\rho,
    X^{u;\theta,\xi}_{(\vartheta\wedge\rho)-}\bigr)\Big|\cY^{\nu}_\theta\Bigr]
  \geq \E\Bigl[\ind_{A}h^\eta\bigl(\vartheta\wedge\rho,
  X^{u;\theta,\xi}_{(\vartheta\wedge\rho)-}\bigr)\Big|\cY^{\nu}_\theta\Bigr].
\end{equation}
Next, let $B:=\{\vartheta<\rho\}$. Since $\rho\leq\tilde\tau_{k+1}$, it follows
that
\[
  X^{u;\theta,\xi}_{(\vartheta\wedge\rho)-} = X^{u;\theta,\xi}_{\vartheta}
  \in\partial\cB_{\delta/2}(\bar t,\bar x)
  := \bigl\{(t,x)\in\overline{\S}_T : |(t,x)-(\bar t,\bar x)| = \delta/2\bigr\}
\]
since $X$ is continuous on $[\tilde\tau_{k},\tilde\tau_{k+1})\cap[0,T]$. But this
implies
\begin{align}
  \ind_{A}\ind_{B} h^\eta\bigl(s,X^{u;\theta,\xi}_{(\vartheta\wedge\rho)-}\bigr)
  = \ind_{A}\ind_{B} V^+\bigl(s,X^{u;\theta,\xi}_{\vartheta}\bigr)
  &\geq \ind_{A}\ind_{B} \E\Bigl[V^+\bigl(s,X^{u;\theta,\xi}_{\rho-}\bigr)\Big|\cY^\nu_\vartheta\Bigr]\notag\\
  &\geq \ind_{A}\ind_{B} \E\Bigl[h^\eta\bigl(s,X^{u;\theta,\xi}_{\rho-}\bigr)\Big|\cY^\nu_\vartheta\Bigr]
  \label{eq:h_satisfies_v4_2}
\end{align}
by property $(\cV^+_4)$ of $V^+$, pathwise uniqueness of $X$, and the general inequality
$h^\eta\leq V^+$. Finally, on the complement $A^c$ of $A$, we have
$h^\eta(\theta,\xi)=V^+(\theta,\xi)$, and hence the same argument yields
\begin{equation}\label{eq:h_satisfies_v4_3}
  \ind_{A^c}h^\eta(\theta,\xi)
  \geq \ind_{A^c}\E\Bigl[h^\eta\bigl(s,X^{u;\theta,\xi}_{\rho-}\bigr)\Big|\cY^\nu_\theta\Bigr].
\end{equation}
Combining \eqref{eq:h_satisfies_v4_1}, \eqref{eq:h_satisfies_v4_2}, and
\eqref{eq:h_satisfies_v4_3} therefore shows that
\[
  h^\eta(\theta,\xi)
  \geq \E\Bigl[h^\eta\bigl(s,X^{u;\theta,\xi}_{\rho-}\bigr)\Big|\cY^\nu_\theta\Bigr],
\]
that is, $h^\eta$ satisfies $(\cV^+_4)$ and hence the proof is complete.\fakeQED

\paragraph{Proof of Proposition~\ref{th:V+-is-visco-super}.}

The growth condition on $(V^+)_*$ is obvious since it is already satisfied for $V^+$.
Regarding the terminal condition, note that $V^+(T,x) = U(w)$ implies that
$V^+(T,x)_* \leq U(w)$, so we only have to prove the reverse inequality. For this
let $\{(t_k,x_k)\}_{k\in\N}\subset\overline{S}_T$ be a sequence converging to $(T,x)$
with $x\in\S$ such that $V^+(t_k,x_k)\to V^+(T,x)_*$. Consider the trading strategy
$\pi = 0\in\Pi$ and an expert opinion strategy $\nu=\circ$ which does not purchase any
expert opinions at all. With this, it follows that $W^{0,\circ}$ is constant.
Using the property $(\cV^+_4)$ of $V^+$ and $V^+(T,x) = U(w)$, it follows that
\[
  (V^+)_*(T,x)
  = \lim_{k\to\infty} V^+(t_k,x_k)
  \geq \limsup_{k\to\infty} \E\Bigl[V^+(T,X^{\pi,\circ;t_k,x_k})\Big|\cY^\circ_{t_k}\Bigr]
  = \limsup_{k\to\infty} U(w_k) = U(w),
\]
where we use the notation $x_k=(w_k,p_k)$ for $k\in\N$. In total, we have therefore
argued that $V^+(T,x)_* = U(w)$ as claimed and we can move on to the viscosity
supersolution property.

For this, let $h$ be a measurable function satisfying $(\cV^+_2)$ to $(\cV^+_5)$. We
first observe that
\[
  h \geq \cM[h] \geq \cM[h_*] \geq \cM[h_*]_*
\]
since $h$ satisfies $(\cV^+_5)$. Now $\cM[h_*]_*$ is a lower semi-continuous function,
hence $h_*\geq \cM[h_*]_*$ since $h_*$ is by definition the largest lower semi-continuous
function dominated by $h$. Now fix $(\bar t,\bar x)\in\S_T$ and let
$\varphi\in\cC^{1,2}(\S_T)$ be a test function for the viscosity supersolution property
of $h$ at $(\bar t,\bar x)$, which is to say that $h_*(\bar t,\bar x) = \varphi(\bar t,\bar x)$
and $\varphi - h_*$ attains a global maximum at $(\bar t,\bar x)$. To conclude,
it suffices to show that
\begin{equation}\label{eq:supersol_to_show}
  F\bigl(\bar x,\varphi_t(\bar t,\bar x),\rD_x\varphi(\bar t,\bar x),
  \rD^2_x\varphi(\bar t,\bar x)\bigr) \geq 0.
\end{equation}
For this, let us fix a sequence $\{(t_k,x_k)\}_{k\in\N}\subset\S_T$ converging to
$(\bar t,\bar x)$ such that $h(t_k,x_k)\to h_*(\bar t,\bar x)$. By continuity of
$\varphi$, it follows that
\[
  0 \leq \gamma_k := h(t_k,x_k) - \varphi(t_k,x_k) \to 0\quad\text{as }k\to\infty.
\]
Now choose a sequence $\{\delta_k\}_{k\in\N}$ of strictly positive real numbers
such that
\[
  \lim_{k\to\infty} \delta_k = 0 = \lim_{k\to\infty} \frac{\gamma_k}{\delta_k}.
\]
Let moreover $\eta>0$, consider a constant trading strategy $\pi\in\Pi$, and
denote again by $\circ$ the expert opinion strategy which does not purchase any
expert opinions at all. Writing $X^k := X^{\pi,\circ;t_k,x_k}$, we
introduce the stopping times
\[
  \rho_k := \inf\bigl\{s\in[t_k,T] : |X^k_s - x_k| > \eta\bigr\}
    \wedge (t_k + \delta_k) \wedge T,
  \quad k\in\N.
\]
Using property $(\cV^+_4)$, $h\geq \varphi$, and finally It\^{o}'s formula, it
follows that
\begin{align*}
  h(t_k,x_k)
  &\geq \E\bigl[h(\rho_k,X^k_{\rho_k})\bigr]\\
  &\geq \E\bigl[\varphi(\rho_k,X^k_{\rho_k})\bigr]\\
  &= \varphi(t_k,x_k) + \E\Bigl[\int_{t_k}^{\rho_k}
    \varphi_t(s,X^k_s) + \cH\bigl(X^k_s,\pi,
	  \rD_x\varphi(s,X^k_s), \rD^2_x\varphi(s,X^k_s)\bigr)
  \rd s\Bigr],
\end{align*}
where the stochastic integral vanishes since its integrand is bounded on $[t_k,\rho_k]$.
Rearranging terms and dividing by $\delta_k$ shows that
\[
  \frac{\gamma_k}{\delta_k} - \E\Bigl[\frac{1}{\delta_k}\int_{t_k}^{\rho_k}
    \varphi_t(s,X^k_s) + \cH\bigl(X^k_s,\pi,
	  \rD_x\varphi(s,X^k_s), \rD^2_x\varphi(s,X^k_s)\bigr)
  \rd s \Bigr] \geq 0.
\]
Now $\rho_k(\omega) = t_k + \delta_k$ for eventually all $k\in\N$ and $\P$-almost
every $\omega\in\Omega$. Hence, as $k\to\infty$, the mean-value theorem implies
\[
  - \varphi_t(\bar t,\bar x) - \cH\bigl(\bar x,\pi,
	\rD_x\varphi(\bar t,\bar x), \rD^2_x\varphi(\bar t,\bar x)\bigr) \geq 0.
\]
Since this holds for any $\pi\in\Pi$, we conclude that \eqref{eq:supersol_to_show}
holds and the proof is complete.\fakeQED

\paragraph{Proof of Theorem~\ref{th:V--is-viscosity-supersolution}.}

We begin with the growth condition on $V^-$. Since we already know that
$V^-\leq V^\cont$ and that $V^+$ satisfies the desired growth condition,
we only have to show that $V^\cont\leq V^+$. For this, we fix
$(t,x)=(t,w,p)\in\overline{\cS}\subset\overline{\S}_T$ and consider an
arbitrary strategy $\pi\in\cA^\circ$. Recall that, in particular,
$\nu:=(\pi,\circ)\in\cA(w)$. Hence, we may apply the supermartingale
property $(\cV^+_4)$ of $V^+$ with $(\theta,\rho,\xi)
:=(t,\tau^\pi_\cS,x)$ to obtain
\[
  V^+(t,x) \geq \E\bigl[V^+\bigl(X^{\pi;t,x}_{\tau_\cS^\pi}\bigr)\bigr].
\]
As this holds for any $\pi\in\cA^\circ$, we conclude that
\[
  V^+(t,x)
  \geq \sup_{\pi\in\cA^\circ}\E\bigl[V^+\bigl(X^{\pi;t,x}_{\tau_\cS^\pi}\bigr)\bigr]
  = V^\cont(t,x)
\]
as claimed. The remainder of the proof is very similar in spirit to that of
Theorem~\ref{th:V+-is-viscosity-subsolution}. One of the main differences is that,
in addition to the viscosity supersolution property on $\S_T$, we also have to
establish the boundary inequality $V^-(t,x) \geq V^+(t,x)$ for all
$(t,x)\in\overline{\cS}$. We therefore proceed in two steps.

Step 1. The supersolution property on $\cS$. We first take note
that $V^-$ is lower semi-continuous as the supremum of lower semi-continuous
functions and proceed to argue towards a contradiction by assuming the existence
of $(\bar{t},\bar{x})\in\cS$, a test function $\varphi\in\cC^{1,2}(\cS)$ such that
$V^--\varphi$ has a strict global minimum equal to zero at $(\bar{t},\bar{x})$,
and such that the supersolution property fails, that is
\[
  F\bigl(\bar{x},\varphi_t(\bar{t},\bar{x}),
    \rD_x\varphi(\bar{t},\bar{x}),\rD_x^2\varphi(\bar{t},\bar{x})\bigr)
    = -3\kappa < 0
\]
for some $\kappa>0$. In particular, there exist $\bar{\pi}\in\Pi$ and such that
\[
  -\varphi(\bar{t},\bar{x}) - \cH\bigl(\bar{x},\bar{\pi},
    \rD_x\varphi(\bar{t},\bar{x}),\rD^2_x\varphi(\bar{t},\bar{x})\bigr)
  \leq -2\kappa.
\]
Using the continuity of
$\cH$, it follows that there exists $\delta>0$ sufficiently small to guarantee
that $\overline{\cB}_\delta(\bar{t},\bar{x})\subset\cS$ and such that
\[
  -\varphi_t(t,x) - \cH\bigl(x,\bar{\pi},\rD_x\varphi(t,x),
    \rD^2_x\varphi(t,x)\bigr) \leq -\kappa,
  \quad (t,x)\in\overline{\cB}_\delta(\bar{t},\bar{x}).
\]
Next, since the minimum of $V^--\varphi$ at $(\bar{t},\bar{x})$ is strict and
$\overline{\cB}_\delta(\bar{t},\bar{x})\setminus\cB_{\delta/2}(\bar{t},\bar{x})$
is compact, there exists $\eta_0\in(0,\kappa)$ such that
\begin{equation}\label{eq:proof_supersol_strict_inequality}
  V^-(t,x) + \eta_0 \geq \varphi(t,x),
  \quad (t,x)\in\overline{\cB}_\delta(\bar{t},\bar{x})
    \setminus\cB_{\delta/2}(\bar{t},\bar{x}).
\end{equation}
For $\eta\in(0,\eta_0)$, we finally define $\varphi^\eta = \varphi + \eta$ and
\[
  h^\eta := \begin{cases}
    V^-\vee \varphi^\eta &\text{on }\overline{\cB}_\delta(\bar{t},\bar{x}),\\
    V^- &\text{else.}
  \end{cases}
\]
Observing that
\[
  \varphi^\eta(\bar{t},\bar{x})
  = \varphi(\bar{t},\bar{x}) + \eta
  = V^-(\bar{t},\bar{x}) + \eta
  > V^-(\bar{t},\bar{x}),
\]
we conclude that
\[
  h^\eta(\bar{t},\bar{x})
  = \varphi^\eta(\bar{t},\bar{x})
  > V^-(\bar{t},\bar{x}),
\]
so that we arrive at a contradiction to the maximality property of $V^-$ if
we can show that $h^\eta\in\cV^-$. Since $h^\eta = V^-$ outside of
$\cB_{\delta/2}(\bar{t},\bar{x})$ by \eqref{eq:proof_supersol_strict_inequality}
and since both $V^-$ and $V^-\vee\varphi^\eta$ are lower semi-continuous, it
follows that $h^\eta$ is lower semi-continuous, i.e.\ it satisfies $(\cV^-_1)$.
Lower boundedness and the growth condition in $(\cV^-_2)$ for $h^\eta$ are also
immediate since $0\leq V^-\leq V^+$, $V^+$ satisfies the growth condition,
and $V^-$ and $h^\eta$ differ at most on a compact set. The boundary inequality
$(\cV^-_3)$ holds as $\overline{\cB}_\delta(\bar{t},\bar{x})\subset\cS$,
hence $h^\eta = V^-$ on $\partial^*\cS$, and $V^-$ satisfies the boundary
inequality since $V^-\in\cV^-$.

It remains to show that $h^\eta$ satisfies $(\cV^-_4)$. For this, a stopping
time $\theta$ with values in $[0,T]$ and a square-integrable,
$\cY_\theta$-measurable random variable $\xi$. We denote by
$\pi^\theta\in\cA^\circ$ the strategy obtained from applying $(\cV^-_4)$ to
$V^-$ with initial datum $(\theta,\xi)$. Moreover, we introduce the event
\[
  A := \bigl\{(\theta,\xi)\in\cB_{\delta/2}(\bar{t},\bar{x})
    \text{ and }V^-(\theta,\xi) < \varphi^\eta(\theta,\xi)\bigr\}
\]
and, with this, define
\[
  \hat{\pi} := \bar{\pi}\ind_{A} + \pi^\vartheta\ind_{A^c}
  \quad\text{and}\quad
  \hat{\xi} := X^{\hat{\pi};\theta,\xi}_{\hat{\theta}},
  \quad\text{where}\quad
  \hat{\theta} := \bigl\{s\in[\theta,T] :
    \bigl(s,X_s^{\hat{\pi};\theta,\xi}\bigr)
    \in\partial\cB_{\delta/2}(\bar{t},\bar{x})\bigr\}.
\]
Finally, we let $\pi^{\hat{\theta}}\in\cA^\circ$ denote the strategy obtained
from applying $(\cV^-_4)$ to $V^-$ with initial datum $(\hat{\theta},\hat{\xi})$
and define yet another strategy
\[
  \pi := \hat{\pi}\ind_{[0,\hat{\theta}]}
    + \pi^{\hat{\theta}}\ind_{(\hat{\theta},T]}.
\]
Clearly, we have $\pi\in\cA^\circ$. At this point, we can argue as in
the proof of Theorem~\ref{th:V+-is-viscosity-subsolution} to show that
for any stopping time $\rho$ with values in $[\theta,\tau^\pi_\cS]$
it holds that
\[
  h^\eta(\theta,\xi)
    \leq \E\bigl[h\bigl(\rho,X^{\pi;\theta,\xi}_\rho\bigr)\big|\cY_\theta\bigr],
\]
hence resulting in the desired contradiction. This establishes the viscosity
property on $\cS$.

Step 2. The boundary condition. Since $V^-\in\cV^-$, we already know that
$V^-\leq V^+$ on $\partial^*\cS$ by property $(\cV^-_3)$. Hence it only
remains to show the reverse inequality. Our argument follows the ideas of
\cite[Proposition 5.5]{belak2019utility}.  By contradiction, suppose that
we can find $(\bar{t},\bar{x})\in\partial^*\cS$ such that
\[
  V^-(\bar{t},\bar{x}) - V^+(\bar{t},\bar{x}) = -\kappa < 0
\]
for some $\kappa>0$. Given $\delta,\epsilon>0$, we subsequently write
\begin{align*}
  \cB_{\delta,\epsilon} &:= \bigl\{(t,x)\in\overline{\cS}
    : |\bar{t}-t|<\delta, |\bar{x}-x|<\epsilon\bigr\},\\
  \overline{\cB}_{\delta,\epsilon} &:= \bigl\{(t,x)\in\overline{\cS}
    : |\bar{t}-t|\leq \delta, |\bar{x}-x|\leq\epsilon\bigr\},\\
  \cD_{\delta,\epsilon} &:= \overline{\cB}_{\delta,\epsilon}
    \setminus \cB_{\delta/2,\epsilon/2}.
\end{align*}
Using continuity of $V^+$, there exists $0<\epsilon<\kappa$ such that
\[
  V^-(\bar{t},\bar{x}) - V^+(t,x) \leq -\epsilon < 0,
  \quad (t,x)\in\overline{\cB}_{\epsilon,\epsilon}.
\]
Moreover, by lower semi-continuity, $V^-$ is lower bounded on
$\overline{\cB}_{\epsilon,\epsilon}$ and hence there exists $\beta>0$ sufficiently
small such that, for all $0<\delta\leq\epsilon$ simultaneously,
\begin{equation}\label{eq:exit_viscosity_proof}
  V^-(\bar{t},\bar{x}) < \frac{\epsilon^2}{4\beta} - \epsilon
    + \inf_{(t,x)\in\overline{\cB}_{\delta,\epsilon}} V^-(t,x).
\end{equation}
Now for a constant $K>\delta/(4\beta)$, observe that the function
\[
  \psi(t,x) := V^-(\bar{t},\bar{x}) - \frac{1}{\beta}|x-\bar{x}|^2 - K(\bar{t}-t)
\]
satisfies $\psi_t(t,x) = K$ and $\cH(x,\pi,\rD_x\psi(t,x),\rD^2_x\psi(t,x))$ is
bounded on $\Pi\times\overline{\cB}_{\epsilon,\epsilon}$. Hence, if $K$ is
sufficiently large, it holds that
\[
  F\bigl(x,\psi_t(t,x),\rD_x\psi(t,x),\rD^2_x\psi(t,x)\bigr) < 0,
  \quad (t,x)\in\overline{\cB}_{\epsilon,\epsilon}.
\]
Now with $K$ being fixed, we choose $\delta\leq\min\{\epsilon/(2K),\epsilon\}$.
Since $|\bar{x}-x|\geq \epsilon/2$ and $\bar{t}-t \geq -\delta$ on $\cD_{\delta,\epsilon}$
and by using \eqref{eq:exit_viscosity_proof}, we conclude that
\[
  \psi(t,x)
  = V^-(\bar{t},\bar{x}) - \frac{1}{\beta}|x-\bar{x}|^2 - K(\bar{t}-t)\\
  < V^-(t,x) - \epsilon + K\delta \leq V^-(t,x) - \frac{\epsilon}{2},
  \quad (t,x)\in\cD_{\delta,\epsilon},
\]
where the last inequality follows from $\delta\leq \epsilon/(2K)$. Similarly,
using again that $\bar{t}-t\geq - \delta \geq -\epsilon/(2K)$ on
$\overline{\cB}_{\delta,\epsilon}$, we have
\[
  \psi(t,x)
  \leq V^-(\bar{t},\bar{x}) - K(\bar{t}-t)
  \leq V^-(\bar{t},\bar{x}) + \frac{\epsilon}{2}
  \leq V^+(t,x) - \frac{\epsilon}{2},
  \quad (t,x)\in\overline{\cB}_{\delta,\epsilon}.
\]
Now for $0<\eta<\epsilon/2$, we first define $\psi^\eta := \psi + \eta$ and,
with this,
\[
  h^\eta := \begin{cases}
    V^-\vee \psi^\eta &\text{on }\overline{\cB}_{\delta,\epsilon},\\
    V^- &\text{else}.
  \end{cases}
\]
An argument as in the proof of Theorem~\ref{th:V+-is-viscosity-subsolution}
shows that $h^\eta\in\cV^-$. On the other hand, by definition of $\psi^\eta$
and $\psi$, it holds that
\[
  \psi^\eta(\bar{t},\bar{x})
  = \psi(\bar{t},\bar{x}) + \eta
  > V^-(\bar{t},\bar{x}),
\]
contradicting the maximality of $V^-$ in $\cV^-$, hence concluding the
proof.\fakeQED

\bibliographystyle{plain}
\bibliography{preprint}

\begin{thebibliography}{10}

\bibitem{bayraktar2012stochastic}
E.~Bayraktar and M.~Sirbu.
\newblock Stochastic {P}erron’s method and verification without smoothness using viscosity comparison: the linear case.
\newblock {\em Proceedings of the American Mathematical Society}, 140(10):3645--3654, 2012.

\bibitem{bayraktar2013stochastic}
E.~Bayraktar and M.~Sirbu.
\newblock Stochastic {P}erron's method for {H}amilton--{J}acobi--{B}ellman equations.
\newblock {\em SIAM Journal on Control and Optimization}, 51(6):4274--4294, 2013.

\bibitem{belak2019utility}
C.~Belak and S.~Christensen.
\newblock Utility maximisation in a factor model with constant and proportional transaction costs.
\newblock {\em Finance and Stochastics}, 23:29--96, 2019.

\bibitem{belak2017general}
C.~Belak, S.~Christensen, and F.~T. Seifried.
\newblock A general verification result for stochastic impulse control problems.
\newblock {\em SIAM Journal on Control and Optimization}, 55(2):627--649, 2017.

\bibitem{belak2022optimal}
C.~Belak, L.~Mich, and F.~T. Seifried.
\newblock Optimal investment for retail investors.
\newblock {\em Mathematical Finance}, 32(2):555--594, 2022.

\bibitem{brendle2006portfolio}
S.~Brendle.
\newblock Portfolio selection under incomplete information.
\newblock {\em Stochastic Processes and their Applications}, 116(5):701--723, 2006.

\bibitem{christensen2014solution}
S.~Christensen.
\newblock On the solution of general impulse control problems using superharmonic functions.
\newblock {\em Stochastic Processes and their Applications}, 124(1):709--729, 2014.

\bibitem{claisse2016pseudo}
J.~Claisse, D.~Talay, and X.~Tan.
\newblock A pseudo-{M}arkov property for controlled diffusion processes.
\newblock {\em SIAM Journal on Control and Optimization}, 54(2):1017--1029, 2016.

\bibitem{cohen2015stochastic}
S.~N. Cohen and R.~J. Elliott.
\newblock {\em Stochastic Calculus and Applications}, volume~2.
\newblock Springer, 2015.

\bibitem{crandall1992user}
M.~G. Crandall, H.~Ishii, and P.-L. Lions.
\newblock User’s guide to viscosity solutions of second order partial differential equations.
\newblock {\em Bulletin of the American Mathematical Society}, 27(1):1--67, 1992.

\bibitem{davis2013black}
M.~H.~A. Davis and S.~Lleo.
\newblock Black--{L}itterman in continuous time: the case for filtering.
\newblock {\em Quantitative Finance Letters}, 1(1):30--35, 2013.

\bibitem{davis2016simple}
M.~H.~A. Davis and S.~Lleo.
\newblock A simple procedure for combining expert opinion with statistical estimates to achieve superior portfolio performance.
\newblock {\em Journal of Portfolio Management}, 42(4):49, 2016.

\bibitem{davis2021risk}
M.~H.~A. Davis and S.~Lleo.
\newblock Risk-sensitive benchmarked asset management with expert forecasts.
\newblock {\em Mathematical Finance}, 31(4):1162--1189, 2021.

\bibitem{delbaen2006mathematics}
F.~Delbaen and W.~Schachermayer.
\newblock {\em The Mathematics of Arbitrage}.
\newblock Springer Science \& Business Media, 2006.

\bibitem{djehiche2010stochastic}
B.~Djehiche, S.~Hamadene, and I.~Hdhiri.
\newblock Stochastic impulse control of non-{M}arkovian processes.
\newblock {\em Applied Mathematics and Optimization}, 61:1--26, 2010.

\bibitem{elliott2008hidden}
R.~J. Elliott, L.~Aggoun, and J.~B. Moore.
\newblock {\em Hidden Markov Models: Estimation and Control}, volume~29.
\newblock Springer Science \& Business Media, 2008.

\bibitem{frey2012portfolio}
R.~Frey, A.~Gabih, and R.~Wunderlich.
\newblock Portfolio optimization under partial information with expert opinions.
\newblock {\em International Journal of Theoretical and Applied Finance}, 15(01):1250009, 2012.

\bibitem{frey2014portfolio}
R.~Frey, A.~Gabih, and R.~Wunderlich.
\newblock Portfolio optimization under partial information with expert opinions: a dynamic programming approach.
\newblock {\em Communications on Stochastic Analysis}, 8(1):49--79, 2014.

\bibitem{gabih2014expert}
A.~Gabih, H.~Kondakji, J.~Sass, and R.~Wunderlich.
\newblock Expert opinions and logarithmic utility maximization in a market with {G}aussian drift.
\newblock {\em Communications on Stochastic Analysis}, 8(1):27--47, 2014.

\bibitem{haussmann1990existence}
U.~G. Haussmann and J.~P. Lepeltier.
\newblock On the existence of optimal controls.
\newblock {\em SIAM Journal on Control and Optimization}, 28(4):851--902, 1990.

\bibitem{helmes2024modeling}
K.~L. Helmes, R.~H. Stockbridge, and C.~Zhu.
\newblock On the modeling of impulse control with random effects for continuous {M}arkov processes.
\newblock {\em SIAM Journal on Control and Optimization}, 62(1):699--723, 2024.

\bibitem{honda2003optimal}
T.~Honda.
\newblock Optimal portfolio choice for unobservable and regime-switching mean returns.
\newblock {\em Journal of Economic Dynamics and Control}, 28(1):45--78, 2003.

\bibitem{ishii1993viscosity}
K.~Ishii.
\newblock Viscosity solutions of nonlinear second order elliptic {PDEs} associated with impulse control problems.
\newblock {\em Funkcialaj Ekvacioj}, 36(1):123--141, 1993.

\bibitem{jonsson2023finite}
J.~J{\"o}nsson and M.~Perninge.
\newblock Finite horizon impulse control of stochastic functional differential equations.
\newblock {\em SIAM Journal on Control and Optimization}, 61(2):924--948, 2023.

\bibitem{karatzas2001bayesian}
I.~Karatzas and X.~Zhao.
\newblock Bayesian adaptive portfolio optimization.
\newblock {\em Option Pricing, Interest Rates and Risk Management}, pages 632--669, 2001.

\bibitem{korn1997optimal}
R.~Korn.
\newblock Optimal impulse control when control actions have random consequences.
\newblock {\em Mathematics of Operations Research}, 22(3):639--667, 1997.

\bibitem{krylov2008controlled}
N.~V. Krylov.
\newblock {\em Controlled Diffusion Processes}, volume~14.
\newblock Springer Science \& Business Media, 2008.

\bibitem{lakner1995utility}
P.~Lakner.
\newblock Utility maximization with partial information.
\newblock {\em Stochastic Processes and their Applications}, 56(2):247--273, 1995.

\bibitem{lakner1998optimal}
P.~Lakner.
\newblock Optimal trading strategy for an investor: the case of partial information.
\newblock {\em Stochastic Processes and their Applications}, 76(1):77--97, 1998.

\bibitem{liptser2013statisticsI}
R.~S. Liptser and A.~N. Shiryaev.
\newblock {\em Statistics of Random Processes: I. General Theory}, volume~5.
\newblock Springer Science \& Business Media, 2013.

\bibitem{putschogl2008optimal}
W.~Putsch{\"o}gl and J.~Sass.
\newblock Optimal consumption and investment under partial information.
\newblock {\em Decisions in Economics and Finance}, 31(2):137--170, 2008.

\bibitem{rieder2005portfolio}
U.~Rieder and N.~B{\"a}uerle.
\newblock Portfolio optimization with unobservable {M}arkov-modulated drift process.
\newblock {\em Journal of Applied Probability}, 42(2):362--378, 2005.

\bibitem{rokhlin2014verification}
D.~B. Rokhlin.
\newblock Verification by stochastic {P}erron's method in stochastic exit time control problems.
\newblock {\em Journal of Mathematical Analysis and Applications}, 419(1):433--446, 2014.

\bibitem{sass2004optimizing}
J.~Sass and U.~G. Haussmann.
\newblock Optimizing the terminal wealth under partial information: the drift process as a continuous time {M}arkov chain.
\newblock {\em Finance and Stochastics}, 8:553--577, 2004.

\bibitem{sass2017expert}
J.~Sass, D.~Westphal, and R.~Wunderlich.
\newblock Expert opinions and logarithmic utility maximization for multivariate stock returns with {G}aussian drift.
\newblock {\em International Journal of Theoretical and Applied Finance}, 20:1750022, 2017.

\bibitem{sass2021diffusion}
J.~Sass, D.~Westphal, and R.~Wunderlich.
\newblock Diffusion approximations for randomly arriving expert opinions in a financial market with {G}aussian drift.
\newblock {\em Journal of Applied Probability}, 58(1):197--216, 2021.

\bibitem{sass2023diffusion}
J.~Sass, D.~Westphal, and R.~Wunderlich.
\newblock Diffusion approximations for periodically arriving expert opinions in a financial market with {G}aussian drift.
\newblock {\em Stochastic Models}, 39(2):323--362, 2023.

\bibitem{srivastava2008course}
S.~M. Srivastava.
\newblock {\em A Course on Borel Sets}, volume 180.
\newblock Springer Science \& Business Media, 2008.

\end{thebibliography}

\end{document}